\newcites{app}{Additional references in the appendix}
\theoremstyle{remark}
\newtheorem{clm}{Claim}
\newtheorem{obs}{Observation}
\newcommand{\graph}{G=(V,E)}
\newcommand{\mcst}{MCST}
\newcommand{\spmsat}{SPM $3$-Bounded $3$-SAT}
\newcommand{\recrep}{\mathcal{G}(\varphi)}
\newcommand{\domnum}[1]{\gamma{(#1)}}
\newcommand{\vcnum}[1]{\tau_{ST}{(#1)}}
\newcommand{\pow}{\mathcal{P}}
\newcommand{\decom}{\mathcal{T}}
\newcommand{\distg}{\text{dist}_{G}}
\newcommand{\cw}{\mathsf{cw}}
\newcommand{\Input}[1]{\State \textbf{Input:} #1}
\newcommand{\Output}[1]{\State \textbf{Output:} #1}
\newcommand{\hojoG}{D}
\newcommand{\nanka}{P}
\newcommand{\adj}[2]{\mathsf{adj}(#1,#2)}
\newcommand{\Vint}[2]{V_{[#1,#2]}}
\begin{document}
\title{Spanning Trees with a Small Vertex Cover: \\the Complexity on Specific Graph Classes\thanks{This work was partially supported by JSPS KAKENHI Grant Numbers {JP25K14980} and JP25K21148.}}
\titlerunning{Spanning Trees with a Small Vertex Cover}
%
\author{
	Toranosuke Kokai\inst{1} \and
	Akira Suzuki\inst{2}\orcidID{0000-0002-5212-0202} \and
	Takahiro Suzuki\inst{1}\orcidID{0009-0005-8433-3789} \and
	Yuma Tamura\inst{1}\orcidID{0009-0001-5479-7006} \and
	Xiao Zhou\inst{1}
}
\authorrunning{T. Kokai et al.}
%
\institute{Graduate School of Information Sciences, Tohoku University, Sendai, Japan \\ \email{\{toranosuke.kokai.s6, takahiro.suzuki.q4\}@dc.tohoku.ac.jp, \\\{tamura, zhou\}@tohoku.ac.jp}\and
Center for Data-driven Science and Artificial Intelligence, Tohoku University, Sendai, Japan \\
\email{akira@tohoku.ac.jp}
}
\maketitle              
\begin{abstract}
In the context of algorithm theory, various studies have been conducted on spanning trees with desirable properties. In this paper, we consider the \textsc{Minimum Cover Spanning Tree} problem (MCST for short). Given a graph $G$ and a positive integer $k$, the problem determines whether $G$ has a spanning tree with a vertex cover of size at most $k$. We reveal the equivalence between \mcst\ and the \textsc{Dominating Set} problem when $G$ is of diameter at most~$2$ or $P_5$-free. This provides the intractability for these graphs and the tractability for several subclasses of $P_5$-free graphs. We also show that \mcst\ is NP-complete for bipartite planar graphs of maximum degree~$4$ and unit disk graphs.
These hardness results resolve open questions posed in prior research.
Finally, we present an FPT algorithm for {\mcst} parameterized by clique-width and a linear-time algorithm for interval graphs.

\keywords{Spanning tree  \and Graph algorithms \and NP-completeness \and Fixed-parameter tractability}
\end{abstract}
\section{Introduction}
A \emph{spanning tree} of a graph is a fundamental concept in graph theory.
For a connected graph $G$, a spanning tree $T$ is an acyclic connected subgraph of $G$ that contains all vertices in $G$.
It is well known that a spanning tree of $G$ can be found in polynomial time.
Spanning trees play a crucial role in both theoretical science and practical applications.
Theoretically, spanning trees serve as a powerful tool for developing algorithms and are deeply connected to more advanced mathematical theories such as matroids.
In terms of applications, spanning trees are essential for achieving efficient communication in networks.

Spanning trees with specific restrictions have also attracted attention, such as those with few leaves~\cite{SalamonW08}, many leaves~\cite{Rosamond16}, bounded diameter~\cite{HoLCW91}, or small poise (the maximum degree plus the diameter)~\cite{Ravi94}. 
In particular, a spanning tree with two leaves is also known as a Hamiltonian path, a topic of long-standing research in graph theory.
The maximum number of leaves among spanning trees of $G$ is called the \emph{max-leaf number}.
The max-leaf number is closely related to the connected domination number (for example, see~\cite{CaroWY00}).
Moreover, the max-leaf number is also known as a structural parameter of graphs, and efficient FPT algorithms have been developed for various problems~\cite{FellowsLMMRS09}.

In this paper, we deal with spanning trees with a small vertex cover.
A vertex subset $S$ of a graph $G$ is a \emph{vertex cover} if at least one endpoint of every edge of $G$ belongs to $S$.
Given a positive integer $k$, the \textsc{Minimum Cover Spanning Tree} problem (MCST for short) asks for a spanning tree $T$ with a vertex cover of size at most $k$.
Intuitively, spanning trees with a small vertex cover number tend to have a star-like structure and a small diameter. 
Spanning trees of such type are considered easy to handle in practical applications.
For example, a complete graph $G$ with $n$ vertices admits a spanning star $T$ with a vertex cover of size~$1$, whereas it also admits a spanning path $T'$ with a vertex cover of size~$\lfloor n/2 \rfloor$ (see \cref{fig:complete}).
If the vertices and edges of $G$ represent the communication nodes and links of a network, then communication along $T$ is expected to have lower latency and to be easier to monitor.
\begin{figure}[t]
\centering
\includegraphics[width=0.7\linewidth]{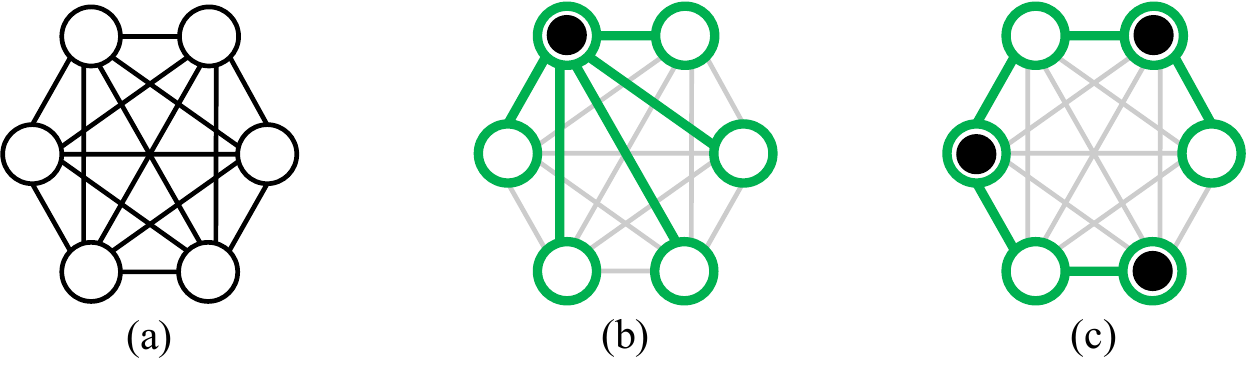}
\caption{(a) A Complete graph $G$, (b) a spanning tree of $G$ with a vertex cover of size~$1$, and (c) a spanning tree of $G$ with a vertex cover of size~$3$.
The spanning trees are shown in bold green lines, and the corresponding vertex covers are indicated by black circles.}
\label{fig:complete}
\end{figure}

\subsection{Known results}\label{subsec:knownresults}
Angel et al.\ showed that any graph $G$ satisfies $\domnum{G} \leq \vcnum{G} \leq 2\domnum{G} -1$~\cite{AngelBCK15}, where $\vcnum{G}$ denotes the minimum size of vertex covers among spanning trees of $G$ and $\domnum{G}$ denotes the minimum size of dominating sets in $G$.
From the (in)approximability of the \textsc{Dominating Set} problem,
it follows that a polynomial-time $O(\log n)$-approximation algorithm exists for \mcst, whereas there exists a constant $c$ such that \mcst\ is NP-hard to approximate within a factor of $c\log n$.

Fukunaga and Maehara gave an alternative proof of the above inapproximability of \mcst~\cite{FukunagaM19}.
They also provided a polynomial-time constant-factor approximation algorithm for \mcst\ on unit disk graphs and graphs excluding a fixed minor.

Kaur and Misra investigated \mcst\ from the perspective of parameterized complexity~\cite{KaurM20}.
\mcst\ was shown to be W[2]-hard when parameterized by the vertex cover size $k$ in spanning trees, even for bipartite graphs.
On the other hand, an FPT algorithm parameterized by treewidth was provided.
They also studied a kernelization and showed that the problem parameterized by $k$ does not admit a polynomial kernel unless $\mathrm{coNP} \subseteq \mathrm{NP/ poly}$.

\subsection{Our contribution}
\begin{figure}[t]
    \centering
    \includegraphics[width=0.9\linewidth]{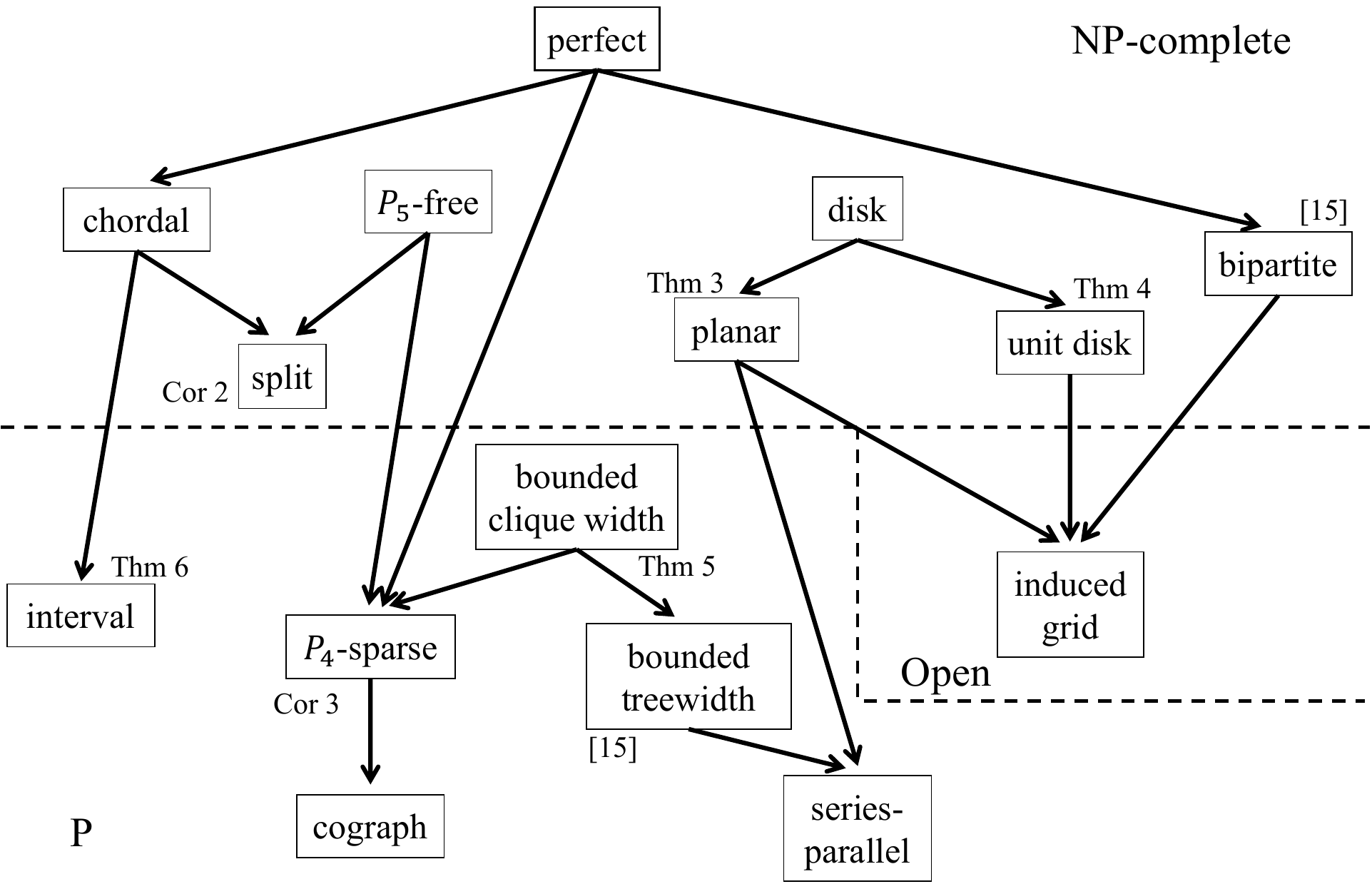}
    \caption{Known and our results with respect to graph classes. Each arrow $A \rightarrow B$ represents that the graph class $B$ is a subclass of the graph class $A$.}
    \label{fig:graphclass}
\end{figure}
This paper provides a detailed study of \mcst\ from the viewpoint of graph classes. (See~\cref{fig:graphclass}.)
Denote $P_n$ by the path with $n$ vertices.
We first investigate the relationship between $\vcnum{G}$ and $\domnum{G}$.
We reveal that $\vcnum{G} = \domnum{G}$ holds when a graph $G$ has diameter at most~$2$ or is $P_5$-free.
These findings have various algorithmic consequences.
We show that \mcst\ on split graphs with $n$ vertices cannot be approximated to within a factor of $(1-\varepsilon)\log n$ in polynomial time for any constant $0 < \varepsilon < 1$ unless $\mathrm{NP} \subseteq \mathrm{DTIME}(n^{O(\log \log n)})$.
Moreover, it can be shown that \mcst\ remains W[2]-hard even for split graphs of diameter~$2$ when parameterized by $\vcnum{G}$.
On the other hand, if \textsc{Dominating Set} is solvable in polynomial time for a subclass of 
$P_5$-free graphs, then \mcst\ can also be solved in polynomial time for the same graph class.
Such graph classes include cographs, $P_4$-sparse graphs, chain graphs, co-bipartite graphs, and co-interval graphs. 
We emphasize that our result is tight in the sense that there is a $P_6$-free graph $G$ of diameter~$3$ such that $\vcnum{G} \neq \domnum{G}$.

The above result motivates us to analyze the computational complexity of \mcst\ on graph classes that admit long paths: planar graphs, bipartite graphs, unit disk graphs, bounded clique-width graphs, and interval graphs.
We show that \mcst\ is NP-complete for planar bipartite graphs with maximum degree at most~$4$ and unit disk graphs.
These results resolve the open question posed by Fukunaga and Maehara~\cite{FukunagaM19}.
On the other hand, we reveal that \mcst\ is fixed-parameter tractable when parameterized by clique-width, which extends the known result that \mcst\ is fixed-parameter tractable for treewidth~\cite{KaurM20}.
This can be proved by defining a problem equivalent to \mcst\ that is expressible in monadic second-order logic (MSO$_1$).
A result of Courcelle et al.\ implies that the equivalent problem is fixed-parameter tractable for clique-width~\cite{Courcelletheorem}. 
We also design an FPT algorithm parameterized by clique-width to improve the running time.
Finally, we present a linear-time algorithm for interval graphs.

The proofs marked $(\ast)$ are provided in Appendix.
\paragraph{Future work.}
One interesting open question is whether {\mcst} is solvable in polynomial time for induced grid graphs (i.e.\ induced subgraphs of the Cartesian product of two paths).
We note that \textsc{Dominating Set} is NP-complete for induced grid graphs, whereas \textsc{Vertex cover} can be solved in polynomial time for bipartite graphs.

\section{Preliminaries}
For a positive integer $n$, we write $[n]=\{1,\dots,n\}$.
Let {$\graph$} be a graph: we also denote by $V(G)$ and $E(G)$ the vertex set and the edge set of $G$, respectively. All the graphs considered in this paper are finite, simple, and undirected. 
For two vertices $u,v$ in a graph $G$, denote by $\distg(u,v)$ the length of a shortest path between $u$ and $v$ in $G$.
The \emph{diameter} of $G$ is defined as $\max_{u,v\in V(G)}\distg(u,v)$.
A graph is said to be \emph{connected} if there exists a path between every pair of vertices $u,v \in V(G)$.
A \emph{component} of $G$ is a maximal connected subgraph of $G$.
For a vertex $v$ of $G$, we denote by $N_G(v)$ and $N_G[v]$ the open and closed neighborhood of $v$ in $G$, respectively, that is, $N_G(v)=\{u \in V \mid uv \in E\}$ and $N_G[v]=N_G(v) \cup \{v\}$. 
We sometimes drop the subscript $G$ if it is clear from the context.
The \emph{degree} $\text{deg}_G(v)$ of $v$ is the size of $N_G(v)$, that is, $\text{deg}_G(v)=|N_G(v)|$.
The \emph{maximum degree} of $G$ is defined as $\max_{v\in V} \text{deg}_G(v)$.
A vertex subset $I\subseteq V$ is called an \emph{independent set} if $uv \notin E$ for any two vertices $u,v\in I$.
A \emph{dominating set} of $G$ is a subset $D \subseteq V$ such that every $v \in V\setminus D$ is adjacent to some vertex in $D$.
A graph is \emph{bipartite} if there exists a partition $(I_1, I_2)$ of $V$ such that both $I_1$ and $I_2$ are independent sets.
A graph is \emph{planar} if it can be embedded in the plane without crossing edges.
The graph $G$ is \emph{acyclic}, or equivalently a \emph{forest}, if $G$ contains no cycles.
A \emph{spanning forest} $F$ of $G$ is an acyclic subgraph with $V(F)=V(G)$.
A \emph{spanning tree} $T$ is a connected spanning forest of $G$.
A vertex subset $C\subseteq V(G)$ is called a \emph{vertex cover} if every edge $uv \in E(G)$ satisfies $u\in C$ or $v \in C$.
We also say that the graph $G$ is \emph{covered by} $C$. 

Given a connected graph $G=(V,E)$ and a positive integer $k$, the \textsc{Minimum Cover Spanning Tree (MCST)} problem determines whether $G$ admits a spanning tree $T$ with a vertex cover of size at most $k$.
Here, we introduce a useful notion that will be used throughout the paper.
For a graph $G = (V, E)$ and a subset $S \subseteq V$, the \emph{boundary subgraph} $H=(V,E_H)$
for $S$ is defined as a subgraph of $G$ such that is the set of edges incident to some vertex in $S$.
Note that $S$ is a vertex cover of every subgraph of $H$.
\begin{obs} \label{obs:boundary}
    Let $S$ be a vertex subset of a graph $G$. 
    There exists a spanning tree of $G$ covered by $S$ if and only if the boundary subgraph for $S$ is connected.
\end{obs}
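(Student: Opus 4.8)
The plan is to prove both directions of the equivalence by reasoning about the structure of the boundary subgraph $H=(V,E_H)$ for $S$, where $E_H$ is the set of edges of $G$ incident to at least one vertex of $S$.

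\textbf{($\Leftarrow$)} Suppose the boundary subgraph $H$ for $S$ is connected. Since $V(H)=V(G)=V$ and $H$ is connected, $H$ itself is a connected spanning subgraph of $G$, so it contains a spanning tree $T$ (take any spanning tree of $H$; it is also a subgraph of $G$ because $E_H\subseteq E(G)$). By definition of the boundary subgraph, every edge in $E_H$ is incident to some vertex of $S$, hence $S$ is a vertex cover of $H$ and therefore of its subgraph $T$. Thus $T$ is a spanning tree of $G$ covered by $S$.

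\textbf{($\Rightarrow$)} Conversely, suppose $G$ has a spanning tree $T$ covered by $S$. Then every edge of $T$ has an endpoint in $S$, which means $E(T)\subseteq E_H$. Since $T$ is connected and spanning, and $E(T)\subseteq E_H$ with $V(H)=V$, the subgraph $H$ contains the connected spanning subgraph $T$, so $H$ is connected. I would phrase this last step as: any two vertices $u,v\in V$ are joined by a path in $T$, and this path uses only edges of $E_H$, so $u$ and $v$ are connected in $H$ as well.

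There is essentially no obstacle here; the only point requiring a little care is the first direction, where one must observe that a spanning tree of $H$ is automatically a subgraph of $G$ (not merely a minor or topological minor) because $H$ is literally a subgraph of $G$ on the same vertex set, and that the vertex-cover property is inherited from $H$ down to any subgraph — this is exactly the remark already made in the text that $S$ is a vertex cover of every subgraph of $H$. I would keep the proof to a few lines, citing that remark directly.
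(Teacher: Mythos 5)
Your proof is correct and is exactly the argument the paper intends: the observation is stated without proof precisely because it follows from the two facts you use, namely that any spanning tree of the connected boundary subgraph $H$ is a spanning tree of $G$ covered by $S$ (since $S$ covers every subgraph of $H$), and conversely that a spanning tree covered by $S$ has all its edges in $E_H$, forcing $H$ to be connected. Nothing is missing; your note about $H$ being a genuine subgraph of $G$ on vertex set $V$ is the only point worth making explicit, and you made it.
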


\section{Relationship to the domination number}\label{sec:dom}
The \emph{domination number} of $G$, denoted by $\domnum{G}$, is the size of a minimum dominating set in $G$.
Denote by $\vcnum{G}$ the minimum size of a vertex cover among spanning trees in $G$.
In this section, we show the following theorems.
\begin{theorem}\label{the:bound_dia2}
For any graph $G$ with diameter at most~2, $\domnum{G} = \vcnum{G}$.
\end{theorem}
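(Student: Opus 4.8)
The plan is to prove both inequalities $\domnum{G} \leq \vcnum{G}$ and $\vcnum{G} \leq \domnum{G}$; the first holds for every graph (it is the easy direction of the Angel et al.\ bound and follows because the internal vertices of a spanning tree covered by a set $C$ dominate $G$, and one can always shrink $C$ to the set of internal vertices), so the real content is $\vcnum{G} \leq \domnum{G}$ under the diameter-$2$ hypothesis. For that direction I would start from a minimum dominating set $D$ with $|D| = \domnum{G}$ and try to exhibit a spanning tree of $G$ whose vertex cover is exactly $D$. By \cref{obs:boundary}, it suffices to show that the boundary subgraph $H$ for $D$ is connected; equivalently, that any two vertices of $G$ are joined by a path all of whose edges are incident to $D$.

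The key step is to use the diameter-$2$ condition to route such paths. Take any two vertices $u, v \in V(G)$. If either lies in $D$, or if $u$ and $v$ are adjacent with at least one endpoint dominated from inside $D$ appropriately, the path is immediate; the interesting case is $u, v \notin D$. Since $D$ is dominating, pick $d_u \in D \cap N(u)$ and $d_v \in D \cap N(v)$. If $d_u = d_v$ or $d_u d_v \in E(G)$, then $u$--$d_u$--$d_v$--$v$ is a walk in $H$ and we are done. Otherwise $d_u \neq d_v$ and $d_u d_v \notin E(G)$, so by diameter $2$ there is a common neighbor $x$ of $d_u$ and $d_v$; then $u$--$d_u$--$x$--$d_v$--$v$ is a walk in $H$, since every edge of it is incident to $D$. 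Hence $H$ is connected, so $G$ has a spanning tree covered by $D$, giving $\vcnum{G} \leq |D| = \domnum{G}$.

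Combining the two inequalities yields $\domnum{G} = \vcnum{G}$. I expect the main obstacle to be purely bookkeeping: making sure the boundary subgraph $H$ (which is connected) is also spanning in the right sense and that isolated-looking vertices cause no trouble — but since $G$ itself is connected and $D$ dominates $G$, every vertex $u \notin D$ has the edge $u d_u$ in $H$, so $H$ has no isolated vertices and by the walk argument above it is connected; extracting any spanning tree of $H$ then finishes the proof via \cref{obs:boundary}. A secondary point worth stating carefully is the easy direction: given a spanning tree $T$ covered by a set $C$ of size $\vcnum{G}$, one checks that $C$ can be assumed to consist only of non-leaf vertices of $T$, and such a set dominates every leaf (hence all of $V(G)$), so $\domnum{G} \leq \vcnum{G}$.
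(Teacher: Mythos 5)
Your proposal is correct and takes essentially the same approach as the paper: both reduce the claim via \cref{obs:boundary} to showing that the boundary subgraph of a minimum dominating set is connected, and both use the diameter-$2$ hypothesis to link two dominating vertices by either a direct edge or a common neighbor (the paper phrases this as a contradiction on two components, you phrase it as a direct walk construction). The only other difference is that you sketch the easy direction $\domnum{G}\le\vcnum{G}$ explicitly, whereas the paper simply cites Angel et al.
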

\begin{proof}
Since $\domnum{G} \le \vcnum{G}$ holds for general graphs~\cite{AngelBCK15}, the task here is to show that $\vcnum{G} \le \domnum{G}$.
Let $S$ be a minimum dominating set of $G = (V, E)$, where $|S| = \domnum{G}$.
Let $H$ be the boundary subgraph for $S$.
From observation~\ref{obs:boundary}, it suffices to prove that $H$ is connected, which leads to $\vcnum{G} \le |S| = \domnum{G}$.

For the sake of contradiction, assume that $H$ has two connected components $H_1$ and $H_2$.
Let $S_1 = S \cap V(H_1)$ and $S_2 = S \cap V(H_2)$.
Note that $S_1 \neq \emptyset$ and $S_2 \neq \emptyset$ from the construction of $H$.
Consider a vertex $v_1 \in S_1$ and a vertex $v_2 \in S_2$.
Since $G$ has diameter at most $2$, the vertices $v_1$ and $v_2$ are adjacent or there exists a vertex $u$ in $G$ such that $\langle v_1, u, v_2 \rangle$ forms a path of $G$.
In both cases, from the definition of $H$, the vertices $v_1$ and $v_2$ are contained in the same connected component in $H$, a contradiction.
\end{proof}

For a graph $H$, a graph $G$ is said to be \emph{$H$-free} if $G$ contains no induced subgraph isomorphic to $H$.
\begin{theorem} \label{the:bound_P5free}
For any connected $P_5$-free graph $G$, $\domnum{G} = \vcnum{G}$.
\end{theorem}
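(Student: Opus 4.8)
The plan is to mimic the diameter-$2$ argument: take a minimum dominating set $S$ of the connected $P_5$-free graph $G$, form the boundary subgraph $H$ for $S$, and show $H$ is connected; by \cref{obs:boundary} this yields $\vcnum{G}\le|S|=\domnum{G}$, and combined with $\domnum{G}\le\vcnum{G}$ from~\cite{AngelBCK15} we get equality. The difficulty compared to the diameter-$2$ case is that two vertices $v_1,v_2$ of $S$ lying in different components of $H$ may now be at distance $3$ or $4$, so a single common neighbor need not exist; the whole content of the proof is ruling this out using $P_5$-freeness.

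First I would set up the contradiction exactly as before: suppose $H$ has (at least) two components, collect them into $V(H_1)$ and the rest $V(H_2)$, put $S_i=S\cap V(H_i)$, note both are nonempty, and choose $v_1\in S_1$, $v_2\in S_2$. Since $G$ is connected, fix a shortest $v_1$–$v_2$ path $P$ in $G$; because $v_1,v_2$ dominate no common vertex that would merge the components, $P$ has length at least $3$. If $\mathrm{dist}_G(v_1,v_2)\ge 4$, then a subpath of $P$ on five vertices is an induced $P_5$ (shortest paths are induced), contradicting $P_5$-freeness; so the length is exactly $3$, say $P=\langle v_1,a,b,v_2\rangle$, which is an induced $P_4$. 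Now $a$ is dominated by some $s\in S$: $s\in N[a]$. The key case analysis is on where $s$ sits — and this is the step I expect to be the main obstacle, since one must show that wherever $s$ is, it forces an edge across the cut of $H$ (i.e.\ $s$ is adjacent to a vertex of $V(H_1)$ and to a vertex of $V(H_2)$, or the path collapses), contradicting that $H_1,H_2$ are distinct components.

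Concretely: since $s$ dominates $a$, the vertex $s$ is incident in $H$ to the edge $sa$ (or $s=a$, but $a\notin S$ as $v_1a\in E$ would then put $a$ into $v_1$'s component consistently — more precisely $a$ need not be in $S$, so assume $s\neq a$), hence $a\in V(H')$ for the component $H'$ containing $s$. If $s\in N[v_1]\cup\{v_1\}$ or $s$ is adjacent to $v_1$, then $s,v_1,a$ all lie in $H_1$, so $a\in V(H_1)$. Symmetrically, applying the same reasoning to $b$ with a dominator $s'\in N[b]$, one wants $b\in V(H_2)$ when $s'$ is adjacent to $v_2$. The crux is handling the "middle" possibilities — e.g.\ $s$ adjacent to $b$ or $v_2$, or $s'$ adjacent to $a$ or $v_1$ — where one uses $\langle v_1,a,b,v_2\rangle$ being an induced $P_4$ together with the new vertex $s$ (resp.\ $s'$) to either (i) produce an induced $P_5$ (e.g.\ if $s\in N(v_1)$ but $s$ is nonadjacent to $a,b,v_2$, then $\langle s,v_1,a,b,v_2\rangle$ is an induced $P_5$), or (ii) produce an $H$-edge that connects $a$ or $b$ simultaneously to $S_1$-side and $S_2$-side vertices, forcing $v_1$ and $v_2$ into one component of $H$. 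Pushing every branch to one of these two outcomes closes the contradiction. I would organize this as: (1) reduce to $\mathrm{dist}_G(v_1,v_2)=3$ via the induced-shortest-path/$P_5$ argument; (2) let $s$ dominate $a$; by $P_5$-freeness of $\langle s,v_1,a,b,v_2\rangle$, $s$ must be adjacent to one of $a,b,v_2$ in addition to (or instead of) $v_1$; (3) in each sub-case show $sa$ (or another $H$-edge through $s$) links $v_1$'s side to $v_2$'s side in $H$, the desired contradiction. The anticipated obstacle is precisely bookkeeping step (3) so that no sub-case escapes — ensuring that "$s$ adjacent to $v_2$" or "$s$ adjacent to $b$" genuinely merges the two components of $H$ rather than leaving a gap, which may require one more dominator vertex for $b$ and a short joint argument on the two dominators.
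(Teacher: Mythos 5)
There is a genuine gap, and it lies in the very first step of your plan rather than in the case bookkeeping you anticipated. You propose to fix an arbitrary minimum dominating set $S$, form its boundary subgraph $H$, and derive a contradiction from $H$ being disconnected. But that statement is simply false for $P_5$-free graphs: in the path $P_4 = \langle a,b,c,d\rangle$ (which is $P_5$-free and connected), the set $\{a,d\}$ is a minimum dominating set, yet its boundary subgraph consists of the two edges $ab$ and $cd$ and is disconnected. In this example your setup produces exactly the induced $P_4$ $\langle v_1,a,b,v_2\rangle = \langle a,b,c,d\rangle$ with the dominator of the first internal vertex being $v_1$ itself and the dominator of the second being $v_2$ itself, and no sub-case of your step (3) yields a cross edge — because there is none to find. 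So the contradiction you are aiming for cannot be reached; no amount of case analysis on the dominators $s,s'$ will close the argument, since the hypothesis "$H$ is disconnected for this minimum dominating set" is satisfiable.

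What the proof actually needs is an exchange argument on the dominating set itself, not a proof that every minimum dominating set works. The paper's route: if $H$ is disconnected, pick two components $H_1,H_2$ of $H$ joined by an edge $r_1r_2$ of $G$ with $r_1,r_2 \notin S$, extend it to an induced path $\langle s_1,r_1,r_2,s_2\rangle$ with $s_1 \in S\cap V(H_1)$, $s_2 \in S\cap V(H_2)$, and replace $S$ by $S' = S \cup \{r_1,r_2\}\setminus\{s_1,s_2\}$. $P_5$-freeness is then used twice: once to show $S'$ is still dominating (an undominated $w$ adjacent to $s_1$ would give the induced $P_5$ $\langle w,s_1,r_1,r_2,s_2\rangle$), and once to show the boundary subgraph of $S'$ has strictly fewer components (a vertex $p_\ell$ adjacent to $s_1$ but to neither $r_1$ nor $r_2$ would again give an induced $P_5$). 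Iterating yields a minimum dominating set whose boundary subgraph is connected, and \cref{obs:boundary} finishes. In your own terms: you must either adopt this swap, or fix the flaw at the start by choosing among minimum dominating sets one minimizing the number of components of its boundary subgraph and contradicting that minimality via the swap; the edge $r_1r_2$ between non-$S$ vertices (rather than your two $S$-vertices $v_1,v_2$ at distance $3$) is the object the exchange is built around.
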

\begin{proof}
Let $H$ be the boundary subgraph for a minimum dominating set $S$ of $G$, where $|S| = \domnum{G}$.
Suppose that $H$ has $c \ge 2$ connected components.
Let $H_1, \dots, H_c$ be connected components of $H$ and denote $S_i = S \cap V(H_i)$ and $R_i = V(H_i) \setminus S$ for $i \in [c]$.
If $S_i = \emptyset$, then the vertices in $R_i$ are not dominated by the vertices in $S$, contradicting that $S$ is a dominating set of $G$.
If $R_i = \emptyset$, then it contradicts the minimality of $S$.
Thus, we have $S_i \neq \emptyset$ and $R_i \neq \emptyset$.

Consider $H_1$.
Then, there exists another connected component, say $H_2$, such that there exist two vertices $r_1 \in R_1$ and $r_2 \in R_2$ with $r_1r_2 \in E(H)$; otherwise, the graph $G$ would not be connected.
Moreover, from the construction of $H$, there exist two vertices $s_1 \in S_1$ and $s_2 \in S_2$ such that $s_1r_1 \in E(H)$ and $r_2s_2 \in E(H)$.
In other words, $P = \langle s_1, r_1, r_2, s_2 \rangle$ forms a path of $G$.
Observe that $s_1r_2, s_1s_2, r_1s_2 \notin E(G)$, since $H_1$ and $H_2$ are distinct connected components.
Consequently, $P$ is in fact an induced path of $G$.

Let $S' = S \cup \{r_1, r_2\} \setminus \{ s_1, s_2\}$, and let $H'$ be the boundary subgraph for $S'$. 
We show the following two claims: (1) $S'$ is a minimum dominating set of $G$; and (2) $H'$ has fewer connected components than the graph $H$.
Iterating the above arguments, eventually the boundary subgraph $H^\ast$ for a minimum dominating set $S^\ast$ of $G$ can be obtained.
It follows from Observation~\ref{obs:boundary} that $G$ has a spanning tree $T$ covered by $S^\ast$.
In conclusion, we have $\vcnum{G} \le |S^\ast| = \domnum{G}$, and \Cref{the:bound_P5free} follows from the discussion in \Cref{subsec:knownresults}.

We give the proof of claim (1).
For the sake of contradiction, assume that there exists a vertex $w \notin S'$ that is not adjacent to any vertex in $S'$.
Since $S$ is a dominating set of $G$, the vertex $w$ is adjacent to $s_1$ or $s_2$.
Without loss of generality, suppose that $w$ is adjacent to $s_1$.
From the assumption of $S'$, $w$ is adjacent to neither $r_1$ nor $r_2$.
Furthermore, $w$ is not adjacent to $s_2$; otherwise, $s_1$ and $s_2$ belong to the same connected component of $H$, a contradiction.
Therefore, $\langle w, s_1, r_1, r_2, s_2 \rangle$ forms an induced path of $G$. 
However, this contradicts that $G$ is $P_5$-free.
This completes the proof of claim (1).

We now prove the claim (2).
First, we show that any distinct vertices in $V(H_1)\cup V(H_2)$ are in the same connected component of $H'$. 
To this end, it suffices to claim that for each vertex $v\in V(H_1)$ (resp.\ $v \in V(H_2)$), there is a $(v,r_1)$-path (resp.\ a $(v,r_2)$-path) in $H'$. 
This is because $r_1r_2\in E(H')$ and the existence of a path is transitive; that is, if there is an $(x,y)$-path and a $(y,z)$-path, then an $(x,z)$-path exists.

Suppose the case $v \in V(H_1)$.
If $v$ is adjacent to $r_1$ in $H$, the claim is trivial because $v$ is also adjacent to $r_1$ in $H'$.
Suppose otherwise.
Since $H_1$ is connected, there is a shortest $(v, s_1)$-path $P'=\langle v, p_1,\ldots, p_\ell, s_1\rangle$ of $H_1$, where $\ell$ is a positive integer and $p_i \in V(H_1)$ for $i \in [\ell]$.
Note that $p_i \neq s_1$ for $i \in [\ell]$.
It follows from $(S\setminus S')\cap V(H_1) = \{s_1\}$ that we have $vp_1 \in E(H')$ and $p_{i}p_{i+1} \in E(H')$ for $i\in [\ell-1]$.
However, it is possible that $p_\ell s_1 \notin E(H')$.
To obtain a $(v,r_1)$-path in $H'$, let us turn our attention to the fact that $p_\ell s_2\notin E(G)$ from the definition of $H_1$ and $H_2$.
This implies that the vertex $p_\ell$ is adjacent to $r_1$ or $r_2$ in $G$ (and more specifically in $H'$ due to $S'$); otherwise, $\langle p_\ell ,s_1,r_1,r_2,s_2\rangle$ forms an induced $P_5$ of $G$, since $P = \langle s_1, r_1, r_2, s_2 \rangle$ is an induced path of $G$.
Thus, $\langle v, p_1, \ldots, p_\ell, r_1\rangle$ or $\langle v, p_1, \ldots, p_\ell, r_2, r_1\rangle$ forms a $(v, r_1)$-path of $H'$.
The same can be applied for $V(H_2)$ and $r_2$, and we can see that there is a $(v,r_2)$-path for every $v\in V(H_2)$.

Furthermore, it should be noted that every vertex $v \in V(H_i) \cap S$ also belongs to $V(H_i) \cap S'$ for $i \in [c]\setminus \{1,2\}$.
This implies that $H_i$ is also a subgraph of $H'$.
Thus, any two vertices in $V(H_i)$ for $i \in [c]\setminus \{1,2\}$ are contained in the same connected component of $H'$.
(It is possible that every vertex in $V(H_i)$ is in the same connected components as the vertices $V(H_1) \cup V(H_2)$.)
Therefore, $H'$ has at most $c-1$ connected components, as claimed.
\end{proof}
\begin{remark}
    Denote by $C_6$ the cycle of length~$6$.
    Observe that $\domnum{C_6} = 2$ and $\vcnum{C_6} = 3$, and hence $\domnum{C_6} \neq \vcnum{C_6}$.
    Furthermore, $C_6$ has diameter~$3$ and is $P_6$-free. 
    In that sense, \Cref{the:bound_dia2,the:bound_P5free} are tight.    
\end{remark}

\subsubsection{Algorithmic consequences.}
The \textsc{Dominating Set} problem asks for a minimum dominating set $ D $ of a graph $ G $.
It is known that \textsc{Dominating Set} on split graphs with $n$ vertices cannot be approximated to within a factor of $(1-\varepsilon)\log n$ in polynomial time for any constant $0 <\varepsilon < 1$ unless $\mathrm{NP} \subseteq \mathrm{DTIME}(n^{O(\log \log n)})$~\cite{ChlebikC08}.
Moreover, when parameterized by $\vcnum{G}$, \textsc{Dominating Set} remains W[2]-hard even for split graphs of diameter~$2$~\cite{LokshtanovMPRS13}, which implies that FPT algorithms are unlikely to exist for these graphs.
Note that split graphs are $P_5$-free.
Consequently, we obtain the following two corollaries from \Cref{the:bound_dia2,the:bound_P5free}.

\begin{corollary}
    \mcst\ on split graphs with $n$ vertices cannot be approximated to within a factor of $(1-\varepsilon)\log n$ in polynomial time for any constant $0<\varepsilon < 1 $ unless $\mathrm{NP} \subseteq \mathrm{DTIME}(n^{O(\log \log n)})$.
\end{corollary}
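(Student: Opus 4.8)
The plan is to obtain this as an immediate consequence of \Cref{the:bound_P5free} and the known inapproximability of \textsc{Dominating Set} on split graphs~\cite{ChlebikC08}, using an approximation-preserving reduction that is essentially the identity on the input graph. Concretely, I would argue that a polynomial-time $(1-\varepsilon)\log n$-approximation for \mcst\ on split graphs can be turned into one for \textsc{Dominating Set} on the same class, which is ruled out by~\cite{ChlebikC08} under the stated complexity assumption.

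First I would record that split graphs are $2K_2$-free, and since $P_5$ contains an induced $2K_2$ (on its two end-edges), split graphs are $P_5$-free; hence by \Cref{the:bound_P5free} every connected split graph $G$ satisfies $\domnum{G} = \vcnum{G}$. Next, interpreting \mcst\ in its optimization form (minimize the size of a vertex cover over all spanning trees, whose optimum is $\vcnum{G}$), suppose for contradiction that some polynomial-time algorithm $\mathcal{A}$, on input a connected split graph $G$ on $n$ vertices, returns a spanning tree $T$ of $G$ whose minimum vertex cover has size at most $(1-\varepsilon)(\log n)\cdot\vcnum{G}$. Running $\mathcal{A}$ on $G$, I would then compute a minimum vertex cover $C$ of the tree $T$ in polynomial time (trees are bipartite, so this is doable, e.g.\ via K\"onig's theorem or a leaf-removal dynamic program). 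Since $C$ covers the spanning tree $T$, every $v\notin C$ has an incident edge of $T$ whose other endpoint lies in $C$, so $C$ is a dominating set of $G$; and $|C|\le (1-\varepsilon)(\log n)\cdot\vcnum{G} = (1-\varepsilon)(\log n)\cdot\domnum{G}$. This yields a polynomial-time $(1-\varepsilon)\log n$-approximation for \textsc{Dominating Set} on split graphs with $n$ vertices, contradicting~\cite{ChlebikC08} unless $\mathrm{NP}\subseteq\mathrm{DTIME}(n^{O(\log\log n)})$.

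The argument is short, so there is no real obstacle; the two points needing care are both bookkeeping. The first is the direction of the reduction: the nontrivial half of the equivalence for \mcst\ is that an approximate \mcst\ solution (a spanning tree) can be converted into a dominating set of comparable size, which is exactly what the observation ``a vertex cover of a spanning tree is a dominating set'' together with the polynomial-time computability of minimum vertex covers of trees gives us. The second is that \Cref{the:bound_P5free} requires $G$ to be connected, so I would note that the hard \textsc{Dominating Set} instances of~\cite{ChlebikC08} may be assumed to be connected split graphs (with a nonempty, dominating clique side), and that the reduction leaves the vertex count $n$ unchanged, so the inapproximability threshold transfers verbatim. An analogous remark would justify the companion W[2]-hardness corollary from \Cref{the:bound_dia2} using~\cite{LokshtanovMPRS13}.
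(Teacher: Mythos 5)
Your proposal is correct and matches the paper's (implicit) argument: the paper derives this corollary directly from \Cref{the:bound_P5free} together with the inapproximability of \textsc{Dominating Set} on split graphs~\cite{ChlebikC08}, exactly as you do, with your write-up merely spelling out the routine details (split graphs are $P_5$-free, $\domnum{G}=\vcnum{G}$, and an approximate \mcst\ solution yields a dominating set of the same size since a vertex cover of a spanning tree dominates $G$). No further changes are needed.
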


\begin{corollary}
    \mcst\ remains W[2]-hard even for split graphs of diameter~$2$ when parameterized by $\vcnum{G}$.
\end{corollary}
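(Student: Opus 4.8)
The plan is to obtain the statement as an immediate consequence of \Cref{the:bound_P5free} together with the known result that \textsc{Dominating Set} is W[2]-hard on split graphs of diameter~$2$ when parameterized by the solution size~\cite{LokshtanovMPRS13}. The reduction will be the identity map on instances; the only work lies in checking that it is correct and parameter-preserving, the combinatorial content having already been supplied by \Cref{the:bound_P5free}.

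First I would record two easy facts: every split graph is $P_5$-free, and every graph of diameter~$2$ is connected (a disconnected graph has no finite diameter). Consequently, a split graph $G$ of diameter~$2$ is a connected $P_5$-free graph, so \Cref{the:bound_P5free} applies and gives $\domnum{G} = \vcnum{G}$. (Alternatively one could invoke \Cref{the:bound_dia2} here, since such a graph also has diameter~$2$.)

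Next I would set up the reduction. Given an instance $(G,k)$ of \textsc{Dominating Set} in which $G$ is a split graph of diameter~$2$, output the same pair $(G,k)$ as an instance of \mcst; this $G$ is again a split graph of diameter~$2$, and it is connected, so $\vcnum{G}$ is defined and the \mcst\ instance is well formed. By \Cref{obs:boundary} and \Cref{the:bound_P5free}, $G$ admits a spanning tree with a vertex cover of size at most~$k$ if and only if $\vcnum{G} \le k$, if and only if $\domnum{G} \le k$, if and only if $G$ has a dominating set of size at most~$k$. Hence the two instances are equivalent, and the parameter $k$ — which for yes-instances equals $\domnum{G} = \vcnum{G}$ — is unchanged. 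Since \textsc{Dominating Set} on split graphs of diameter~$2$ is W[2]-hard in this parameter~\cite{LokshtanovMPRS13}, the same holds for \mcst\ on this class parameterized by $\vcnum{G}$.

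I do not expect any genuine obstacle here: the only points needing attention are the bookkeeping observations that the class of split graphs of diameter~$2$ is closed under the trivial reduction and that the two notions of parameter line up, with the essential difficulty having already been resolved in \Cref{the:bound_P5free}.
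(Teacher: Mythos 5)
Your proposal is correct and follows essentially the same route as the paper: the corollary is obtained from \Cref{the:bound_P5free} (or \Cref{the:bound_dia2}) via the identity reduction from \textsc{Dominating Set} on split graphs of diameter~$2$, using $\domnum{G}=\vcnum{G}$ to see that instances and parameters coincide. Your explicit bookkeeping (connectedness, closure of the class under the trivial reduction, parameter alignment) is exactly what the paper leaves implicit.
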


On the other hand, \textsc{Dominating Set} is solvable in polynomial time for cographs, $P_4$-sparse graphs, chain graphs~\cite{Courcelletheorem},\footnote{This follows from the fact that cographs, $P_4$-sparse graphs, and chain graphs have bounded clique-width~\cite{Courcelletheorem,GolumbicR00}.} co-bipartite graphs~\cite{KratschS93} and co-interval graphs~\cite{KeilB04}.
These graph classes are subclasses of $P_5$-free graphs.
The following corollary is immediately obtained from \Cref{the:bound_P5free}.

\begin{corollary}
    \mcst\ is solvable in polynomial time for cographs, $P_4$-sparse graphs, chain graphs, co-bipartite graphs, and co-interval graphs.
\end{corollary}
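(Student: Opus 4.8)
The plan is to derive this directly from \Cref{the:bound_P5free} together with the already-cited polynomial-time algorithms for \textsc{Dominating Set} on these classes. The first step is to record that each of cographs, $P_4$-sparse graphs, chain graphs, co-bipartite graphs, and co-interval graphs is hereditary and a subclass of the $P_5$-free graphs. Cographs are $P_4$-free. The five-vertex path $P_5$ induces two distinct $P_4$'s on its vertex set, so no $P_4$-sparse graph contains an induced $P_5$. The first and last edges of $P_5$ form an induced $2K_2$, so chain graphs (bipartite and $2K_2$-free) are $P_5$-free. The complement $\overline{P_5}$ contains an induced $C_4$, hence is not chordal and in particular not an interval graph, which makes every co-interval graph $P_5$-free. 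Finally, if $G$ is co-bipartite and its vertex set is partitioned into two cliques, any induced subgraph isomorphic to $P_5$ would have each part contributing at most $\omega(P_5)=2$ vertices, contradicting $|V(P_5)|=5$; hence co-bipartite graphs are $P_5$-free.

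Given these observations, let $(G,k)$ be an instance of {\mcst}, so that $G$ is connected and lies in one of the listed classes. Then $G$ is $P_5$-free, and \Cref{the:bound_P5free} yields $\vcnum{G}=\domnum{G}$. Consequently $(G,k)$ is a yes-instance of {\mcst} if and only if $\domnum{G}\le k$, and the cited algorithms compute $\domnum{G}$ in polynomial time on each of the classes, which settles the decision problem. To also exhibit a witnessing spanning tree, I would invoke the constructive nature of the proof of \Cref{the:bound_P5free}: starting from a minimum dominating set whose boundary subgraph is disconnected, the swap $S' = S \cup \{r_1,r_2\}\setminus\{s_1,s_2\}$ is computable in polynomial time and strictly reduces the number of components, so after at most $|V(G)|$ rounds one obtains a minimum dominating set $S^\ast$ with a connected boundary subgraph, and \Cref{obs:boundary} then produces the required tree.

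I do not anticipate a genuine obstacle here: the content is essentially bookkeeping. The only step needing a little care is the very first one, namely confirming $P_5$-freeness class by class, and even that amounts to recalling standard forbidden-subgraph descriptions of these classes (or of their complements).
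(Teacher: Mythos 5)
Your proposal is correct and follows exactly the paper's route: observe that each listed class is $P_5$-free, apply \Cref{the:bound_P5free} to get $\vcnum{G}=\domnum{G}$, and invoke the cited polynomial-time \textsc{Dominating Set} algorithms. The only difference is that you spell out the $P_5$-freeness verifications (all of which are correct) where the paper merely asserts them.
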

\section{NP-completeness}\label{sec:hard}
In this section, we prove that {\mcst} is NP-complete for several graph classes.
We begin by describing the source problem for our reduction.
\subsection{\textsc{Simple Planar Monotone (SPM) 3-bounded 3-SAT}}
Let $\varphi$ be a conjunctive normal form (CNF) formula over Boolean variables $\mathcal{X} = \{x_1, \dots, x_n\}$, given by a conjunction of clauses $\mathcal{C} = \{C_1, \dots, C_m\}$, that is, $\varphi = C_1 \land \cdots \land C_m$.
A CNF formula $\varphi$ is \emph{satisfiable} if there exists a truth assignment to $\mathcal{X}$ that makes $\varphi$ evaluate to true.
A 3-CNF formula is a CNF formula in which each clause $C_j \in \mathcal{C}$ is the disjunction of at most three literals.
Given a 3-CNF formula $\varphi$, the 3-SAT problem decides whether $\varphi$ is satisfiable.
A clause containing only positive literals is called a \emph{positive clause}, and one containing only negative literals is called a \emph{negative clause}.
For a CNF formula $\varphi$, the \emph{incidence graph} $\recrep = (\mathcal{V}, \mathcal{E})$ is a bipartite graph with 
$\mathcal{V}=\mathcal{X} \cup \mathcal{C}$ and $\mathcal{E} = \{(x_i,C_j) \mid x_i \in C_j\text{ or }\overline{x_i}\in C_j\}$.
A \emph{rectilinear representation} of $\recrep$ is a planar drawing in which each variable and clause is represented by a rectangle, all variable vertices lie along a horizontal line, and each edge connecting a variable to a clause is drawn as a vertical line segment with no edge crossings. (See \cref{fig:reduction}(a).)
\textsc{Simple Planar Monotone (SPM) 3-bounded 3-SAT} is a variant of 3-SAT with the following additional restrictions: 
\begin{itemize}
    \item Each clause in $\mathcal{C}$ contains at most one occurrence of each variable (simple);
    \item The incidence graph {$\recrep$} of $\varphi$ has a rectilinear representation (planar); 
    \item Each clause is either a positive or a negative clause (monotone); and
    \item Each variable occurs in at most three clauses (3-bounded).
\end{itemize}

Given a simple planar monotone 3-bounded 3-CNF formula $\varphi$, the \textsc{Simple Planar Monotone 3-bounded 3-SAT} problem determines whether $\varphi$ is satisfiable.
This problem is known to be NP-complete~\cite{doi:10.1142/S0129054118500168}.

When $\varphi$ is planar and monotone, it may be supposed that a rectilinear representation of $\recrep$ is drawn with all positive clauses placed above the variables and all negative clauses placed below them~\cite{DBLP:conf/cocoon/BergK10}.
(See \cref{fig:reduction}(a) again.)
Note that any variable occurring only in positive clauses can be assigned to True.
Similarly, any variable occurring only in negative clauses can be assigned to False.
Thus, we may assume that each variable appears in exactly one or two positive clauses and in exactly one or two negative clauses.

\begin{figure}[t]
\centering
\includegraphics[width=\textwidth]{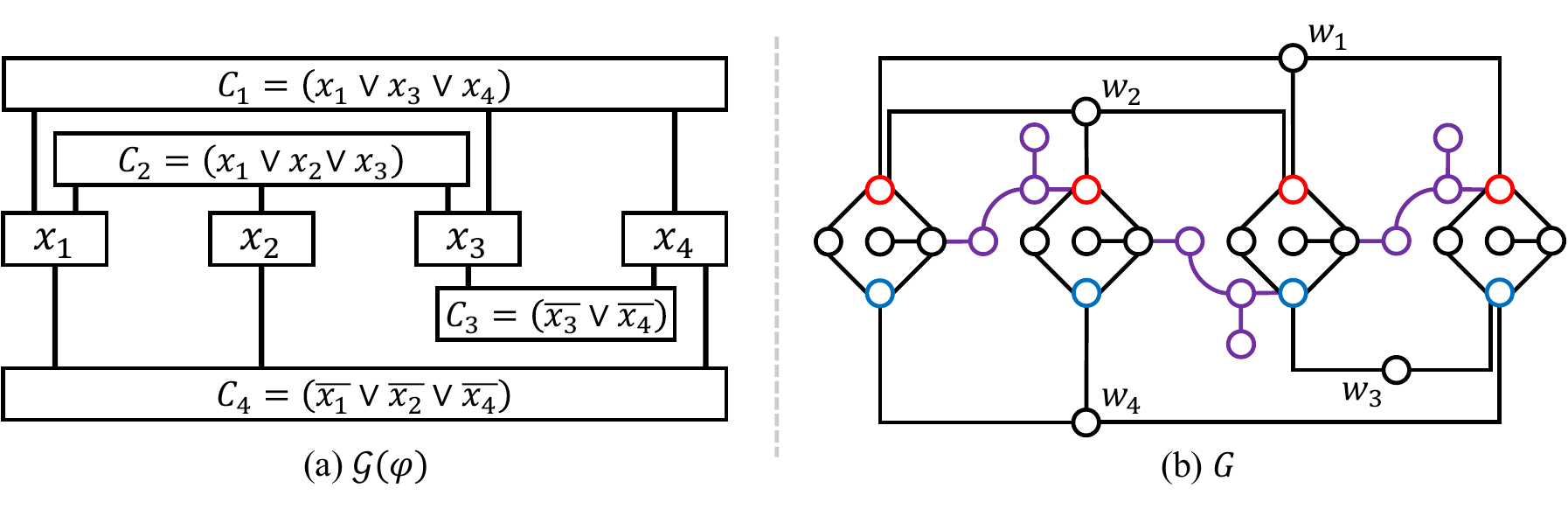}
\caption{(a) An example of illustrating a rectilinear representation of a formula $\varphi = (x_1 \lor x_3 \lor x_4) \land (x_1 \lor x_2 \lor x_3) \lor (\overline{x_3} \lor \overline{x_4}) \land (\overline{x_1} \lor \overline{x_2} \lor \overline{x_4})$ and (b) the graph $G$ constructed from $\varphi$.}
\label{fig:reduction}
\end{figure}
\subsection{Planar bipartite graphs with maximum degree $4$}
We now turn to prove the following theorem.
\begin{theorem}
    {\mcst} is NP-complete even for planar bipartite graphs with maximum degree at most~$4$.
    \label{the:planarhardness}
\end{theorem}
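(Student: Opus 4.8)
The plan is to reduce from \spmsat\ to {\mcst} on planar bipartite graphs of maximum degree at most~$4$, turning the value $k$ into a carefully chosen threshold for the vertex cover size of a spanning tree. By \cref{obs:boundary}, building a spanning tree covered by a set $S$ of size~$k$ is equivalent to choosing $S$ so that the boundary subgraph for $S$ is connected, so I would design the gadget graph $G$ so that every ``budget-optimal'' set $S$ realizing connectivity corresponds to a satisfying assignment and vice versa.

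The construction I have in mind (consistent with \cref{fig:reduction}(b)) attaches to each variable $x_i$ a small \emph{variable gadget} with two ``polarity'' vertices, one meant to be selected when $x_i$ is True and one when $x_i$ is False, wired so that a minimum boundary-connected set is forced to pick exactly one of the two; each clause $C_j$ becomes a \emph{clause vertex} (plus possibly a pendant/subdivision vertex to keep the graph bipartite and the degree at most~$4$) that is joined to the polarity vertices of its three variables in the way dictated by the rectilinear layout. To keep everything planar, I would route the positive-clause gadgets above the variable line and the negative-clause gadgets below it, exactly as in the rectilinear representation, so edges only run vertically and no crossings occur; subdividing each edge an even number of times (and replacing any forced degree-$>4$ junction by a short path of degree-$\le 4$ vertices) preserves planarity, bipartiteness, and the maximum-degree-$4$ bound simultaneously. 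The budget $k$ is then set to $n$ (one selected polarity vertex per variable) plus whatever fixed number of ``backbone'' vertices the gadgets structurally require, plus $0$ extra for clauses — the point being that a clause vertex is absorbed into the boundary subgraph ``for free'' precisely when at least one of its literals is selected, while an unsatisfied clause would be a separate component, forcing $|S|$ to exceed $k$.

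Concretely I would argue two directions. If $\varphi$ has a satisfying assignment $\alpha$, let $S$ consist of the backbone vertices together with, for each $i$, the True-vertex of the gadget for $x_i$ if $\alpha(x_i)=1$ and the False-vertex otherwise; then every clause is satisfied, so every clause vertex is adjacent to a selected vertex, the boundary subgraph for $S$ is connected (here the planar backbone linking consecutive variable gadgets is what glues the whole thing together), $|S|=k$, and \cref{obs:boundary} yields a spanning tree covered by~$S$. Conversely, given a spanning tree of $G$ with a vertex cover $S$ of size at most~$k$, I would first use the rigidity of the variable gadgets (they each force at least one vertex into $S$, and a counting argument on the budget forces ``exactly one, and it is a polarity vertex'') to read off an assignment $\alpha$, then observe that a clause vertex whose three literals are all unselected under $\alpha$ would lie in a boundary component disjoint from the backbone, contradicting connectivity of the spanning tree; hence $\alpha$ satisfies $\varphi$.

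The main obstacle is the tension among the three structural constraints: I need the variable gadget to be rigid enough that the budget argument pins down exactly one polarity vertex, yet the gadget, all three incident clause connections, and the inter-gadget backbone must all fit with degree at most~$4$ and without destroying planarity. Because \spmsat\ is $3$-bounded and monotone with positive clauses above and negative below, each polarity vertex needs to reach only one or two clauses on its side, which keeps the local degree small; the delicate bookkeeping will be choosing the subdivision parities and the exact backbone so that (i) the graph stays bipartite, (ii) no degree exceeds~$4$ at the junction points where a variable gadget meets its clauses and its two backbone neighbors, and (iii) the ``free absorption'' of a satisfied clause versus the ``+1 cost'' of an unsatisfied clause is exact, with no slack that would let an adversarial $S$ cheat. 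Verifying planarity of the final drawing and checking the budget arithmetic is routine but must be done carefully; everything else is gadget design driven by \cref{obs:boundary}.
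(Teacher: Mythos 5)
Your plan follows the same route as the paper: a reduction from {\spmsat} built on \cref{obs:boundary}, with a variable gadget offering two polarity vertices, clause vertices joined to literals according to the rectilinear layout (positive clauses above, negative below), a backbone of connector gadgets linking consecutive variable gadgets, and a budget count that forbids putting clause vertices into the cover. However, as a proof it has a genuine gap: the gadgets, the backbone, and the value of $k$ are never actually specified, and the devices you invoke to rescue bipartiteness and degree~$4$ (subdividing edges an even number of times, replacing an over-full junction by a short path) would themselves introduce vertices whose incident edges must be covered and whose components must be merged into the boundary subgraph, so they shift the budget and invalidate the ``exact, no slack'' accounting you rely on --- and you do not account for this. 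The paper needs no subdivision at all: the variable gadget is the $4$-cycle $\langle u_i,r_i,\overline{u_i},t_i\rangle$ with a pendant $s_i$ (the pendant forces $t_i$ into any feasible set), the connector is a path $\langle h_i,q_i,\ell_i\rangle$ whose pendant $\ell_i$ forces $q_i$, and the degree-$4$ tension is resolved by attaching $q_i$ to whichever of $u_{i+1},\overline{u_{i+1}}$ occurs in at most one clause (possible precisely because the formula is monotone and $3$-bounded); the budget is $k=3n-1$.

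A second concrete problem is your backward direction. You assert that the budget forces ``exactly one polarity vertex'' per variable, but with any gadget of this shape the counting only forces one vertex among $u_i$, $\overline{u_i}$, and the fourth cycle vertex $r_i$; a solution may legitimately pick $r_i$, and no degree-$4$ planar gadget of this kind rules that out within the budget. The paper's proof does not try to: it assigns $x_i$ arbitrarily when $r_i$ is chosen and argues that every clause vertex, which cannot itself be in the cover, must still be adjacent in the boundary subgraph to some selected literal vertex of another variable, so the assignment read off is satisfying. Without either this weaker case analysis or a genuinely more rigid gadget (which you would then have to exhibit and re-check against planarity, bipartiteness, and degree~$4$), the step ``counting forces a polarity vertex'' fails, and with it the extraction of a satisfying assignment.
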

Clearly, {\mcst} belongs to the class NP.
To prove \cref{the:planarhardness}, we provide a polynomial-time reduction from {\spmsat} to {\mcst}. 
Let {$\varphi$} be an instance of {\spmsat} and let $\recrep$ be its rectilinear representation.
We first construct an instance $(G,k)$ of {\mcst} from $\recrep$ with the following gadgets.
\subsubsection{Variable Gadget.}
The variable gadget {$U_i$} for a Boolean variable $x_i\in \mathcal{X}$ consists of the cycle $\langle u_i,r_i,\overline{u_i},t_i\rangle$ and the vertex $s_i$ adjacent to $t_i$, as illustrated in Fig.~\ref{fig:valgad}(a).
Assigning the value True (resp.\ False) to the variable $x_i$ corresponds to including the vertex $u_i$ (resp.\ $\overline{u_i}$) in a vertex cover of a spanning tree of $U_i$.
\begin{figure}[t]
\centering
\includegraphics[width=\linewidth]{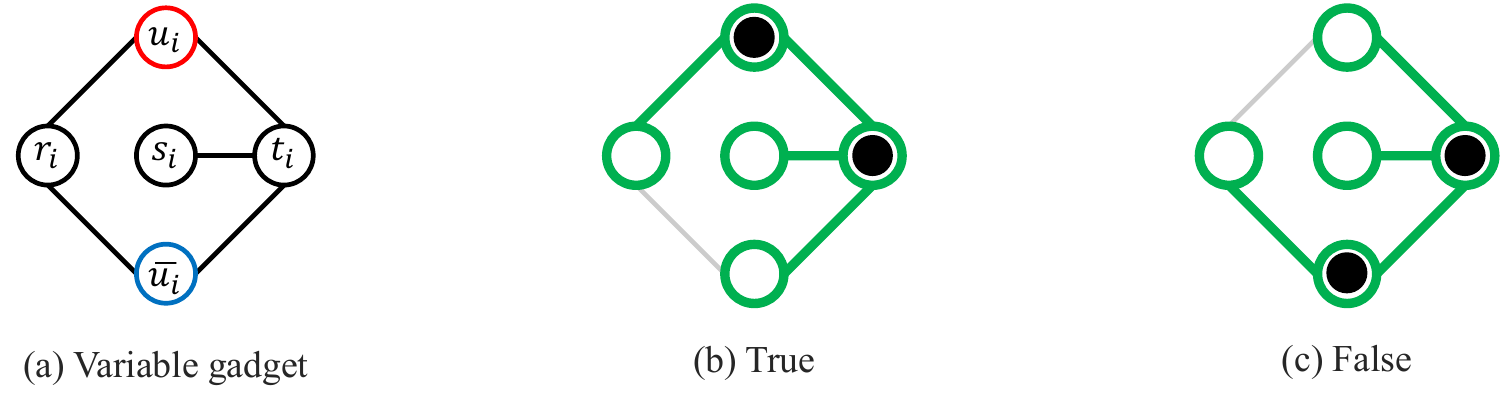}
\caption{(a) The variable gadget $U_i$. If the variable $x_i$ is assigned True (resp.\ False), then the corresponding tree is depicted in green and its vertex cover is indicated by black circles in (b) (resp.\ (c)).} \label{fig:valgad}
\end{figure}
\subsubsection{Connector Gadget.}
The connector gadget is used to connect spanning trees in $U_i$ and $U_{i+1}$.
The connector gadget $H_i$ with $i \in [n-1]$ consists of the path $\langle h_i, q_i,\ell_i\rangle$ together with the edge $h_i t_i$ and the edge either $q_i u_{i+1}$ or $q_i \overline{u_{i+1}}$, as shown in \cref{fig:congad}.
The choice of which edge to include depends on the occurrences of the variable $x_{i+1}$. 
Recall that each variable appears in exactly one or two positive clauses and in exactly one or two negative clauses.
If $x_{i+1}$ occurs exactly once in a positive clause, then we connect $q_i$ to $u_{i+1}$.
Conversely, if $x_{i+1}$ occurs twice in positive clauses, then it occurs exactly once in a negative clause by the $3$-bounded restriction.
In this case, we connect $q_i$ to $\overline{u_{i+1}}$. 

\begin{figure}[t]
\centering
\includegraphics[width=\textwidth]{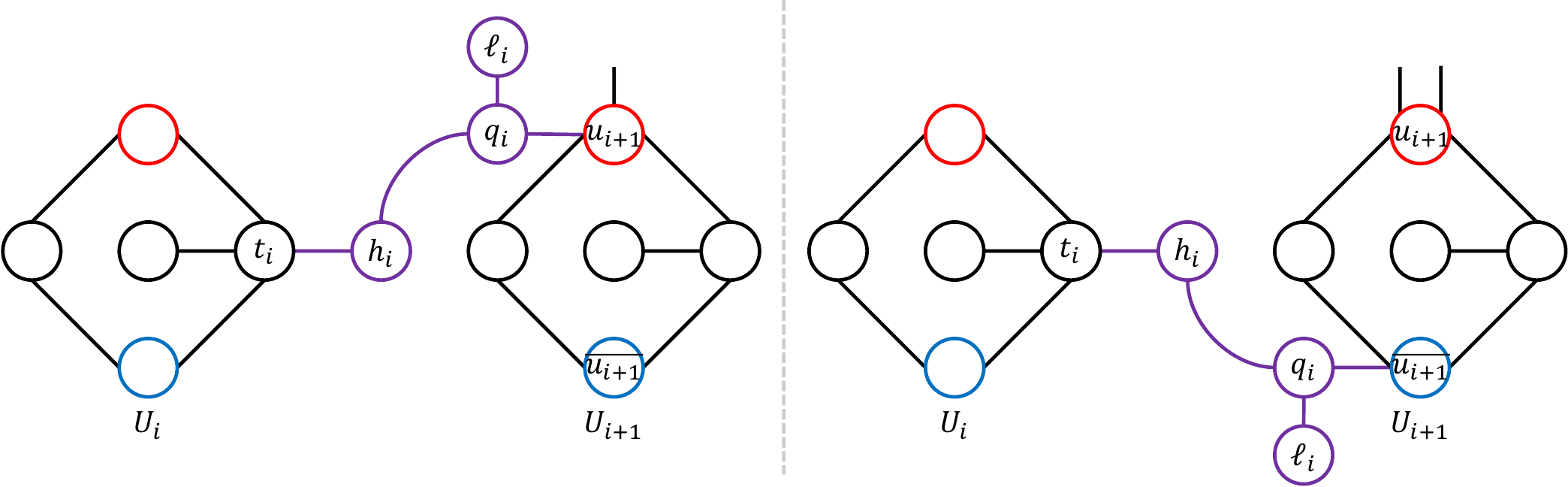}
\caption{A Connector gadget between variable gadgets. There are two ways to connect the connector gadget depending on the degree of $u_{i+1}$.} \label{fig:congad}
\end{figure}
\subsubsection{Overall Construction.}
Using the gadgets introduced above, we construct the graph $G$ for the instance of MCST from $\recrep$.
We first replace each vertex of $\recrep$ corresponding to $x_i\in\mathcal{X}$ with the variable gadget $U_i$.
Furthermore, for each $i \in [n-1]$, connect variable gadgets $U_i$ and $U_{i+1}$ using the connector gadget $H_i$.
Then, we introduce a vertex set $W = \{w_j\mid j\in[m]\}$ corresponding to clauses $\mathcal{C} = \{C_j\mid j \in[m]\}$, and add the edges $\{u_i w_j\mid x_i \in C_j\} \cup \{\overline{u_i}w_k\mid \overline{x_i}\in C_k\}$ for each $i \in [n]$.
\Cref{fig:reduction} shows the resulting graph $G$ constructed from $\recrep$ by the above procedure. 
Finally, we set $k=3n-1$, completing the construction of the instance $(G,k)$.

It is clear that the instance $(G,k)$ can be obtained in polynomial time.
We prove the correctness of our reduction. 
The following two lemmas complete the proof of Theorem \ref{the:planarhardness}.
\begin{lemma}[$\ast$]\label{lem:planarconstruction}
The constructed graph $G$ is planar, bipartite, and of maximum degree $4$.
\end{lemma}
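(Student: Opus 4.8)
The plan is to verify the three properties separately. \emph{Bipartiteness} I would establish by exhibiting the $2$-colouring directly: put into one class $A$ every literal vertex $u_i$ and $\overline{u_i}$, every pendant $s_i$, and the connector vertices $h_i$ and $\ell_i$, and into the other class $B$ the cycle vertices $r_i$ and $t_i$, every middle connector vertex $q_i$, and every clause vertex $w_j$. Running through all edge-types of $G$ --- the four edges of each $4$-cycle $\langle u_i,r_i,\overline{u_i},t_i\rangle$, each pendant edge $t_is_i$, the two edges of each path $\langle h_i,q_i,\ell_i\rangle$ together with $h_it_i$ and $q_iu_{i+1}$ or $q_i\overline{u_{i+1}}$, and the clause edges $u_iw_j$ and $\overline{u_i}w_k$ --- one sees that every edge joins $A$ to $B$, so both classes are independent and $G$ is bipartite.

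For the \emph{degree bound} I would go through the vertex types. The vertices $s_i,\ell_i$ have degree $1$, the vertices $r_i,h_i$ have degree $2$, each $q_i$ has degree $3$, and each $w_j$ has degree at most $3$ because every clause is simple and has at most three literals; moreover $t_i$ has the three neighbours $u_i,\overline{u_i},s_i$ inside $U_i$ and at most the one extra edge $h_it_i$, so $\deg(t_i)\le 4$. The only delicate vertices are $u_i$ and $\overline{u_i}$: apart from their two edges to $r_i$ and $t_i$, such a vertex gets one edge for each clause occurrence of its literal plus possibly one edge from the connector $H_{i-1}$. Here I would use the construction rule: $H_{i-1}$ attaches to $u_i$ only when $x_i$ occurs in a single positive clause, and to $\overline{u_i}$ only when $x_i$ occurs in two positive clauses, which by $3$-boundedness forces $\overline{x_i}$ into exactly one negative clause. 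In both cases the connector edge lands on the literal vertex carrying just one clause edge, for a total of $2+1+1=4$; a literal vertex with no connector edge has at most $2+2=4$ edges. Hence $\Delta(G)\le 4$ (with equality at $t_i$ whenever $i<n$).

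For \emph{planarity} the plan is to deform the rectilinear representation $\mathcal R$ of $\recrep$ --- drawn, as we may assume, with the positive clauses above and the negative clauses below the horizontal variable line --- into a plane drawing of $G$. In place of the rectangle for $x_i$ I would draw the $4$-cycle of $U_i$ as a small diamond straddling the variable line, with $u_i$ at the top, $\overline{u_i}$ at the bottom, $r_i$ on the left and $t_i$ on the right, placing $s_i$ just outside the diamond beside $t_i$. The connector gadget $H_i$ is drawn in the gap between consecutive diamonds along the variable line, with $\ell_i$ hung off $q_i$ and the final edge of $H_i$ routed up to $u_{i+1}$ (arcing over $r_{i+1}$) or down to $\overline{u_{i+1}}$ (under $r_{i+1}$), as prescribed; since that target vertex then carries at most one clause edge, the edge can be brought into it without a crossing. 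Finally, the clause vertices and their edges are placed where the clause rectangles and their vertical segments sat in $\mathcal R$ --- positive clauses above, joining the $u_i$'s, negative clauses below, joining the $\overline{u_i}$'s --- which creates no crossing, since the left-to-right order of the $u_i$'s (resp.\ $\overline{u_i}$'s) matches that of the $x_i$'s in $\mathcal R$. I expect this planarity step to be the main obstacle: one must check that the cyclic order of edges around each of $u_i$, $\overline{u_i}$, $t_i$ can simultaneously host the diamond edges, the incident connector edge, and the clause edges without a forced crossing --- this is exactly where the degree bookkeeping above is used, after which recording a consistent rotation system is routine.
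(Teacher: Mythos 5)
Your proof is correct and follows essentially the same route as the paper's: the crucial observation for the degree bound (the connector edge is attached to whichever of $u_{i+1}$, $\overline{u_{i+1}}$ carries fewer clause edges, which by $3$-boundedness caps every literal vertex at degree $4$) is exactly the paper's argument, and your explicit $2$-colouring and diamond-drawing just spell out the bipartiteness and planarity steps that the paper dispatches with a brief appeal to the rectilinear representation. No gaps.
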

\begin{lemma}[$\ast$]\label{lem:planarcorrectness}
The Boolean formula $\varphi$ is satisfiable if and only if the instance $(G,k)$ of {\mcst} has a solution.
\end{lemma}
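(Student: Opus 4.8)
The plan is to route everything through \cref{obs:boundary}: producing a spanning tree of $G$ covered by a set $S$ with $|S|\le k$ is the same as finding $S$ with $|S|\le k=3n-1$ whose boundary subgraph $H$ is connected, and since $|V(G)|>1$, connectivity of $H$ forces every vertex to lie in $S$ or to have a neighbour in $S$ (an isolated vertex of $H$ would disconnect it), i.e.\ any such $S$ is a dominating set of $G$. So the two tasks are: from a satisfying assignment build such an $S$, and from such an $S$ read off a satisfying assignment.

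For the ``only if'' direction, given a satisfying assignment $\alpha$ I would take $S=\{t_i:i\in[n]\}\cup\{q_i:i\in[n-1]\}\cup\{a_i:i\in[n]\}$, where $a_i=u_i$ if $\alpha(x_i)$ is true and $a_i=\overline{u_i}$ otherwise, so $|S|=3n-1=k$. One checks $S$ dominates $G$ ($r_i$ is covered by $a_i$; $s_i,u_i,\overline{u_i}$ by $t_i$; $h_i,\ell_i$ by $q_i$; and each clause vertex $w_j$ by the literal vertex of a literal of $C_j$ that $\alpha$ satisfies), and that $H$ is connected: inside each $U_i$ every vertex reaches $t_i$ through $t_i$ or $a_i$; the connector $H_i$ ties $t_i$ to $q_i$ via the edge $h_it_i$ and ties $q_i$ to $U_{i+1}$ via the edge $q_iu_{i+1}$ or $q_i\overline{u_{i+1}}$, so the gadgets $U_1,\dots,U_n$ are strung into one component; and each $w_j$ hangs off a literal vertex lying in $S$. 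Hence $(G,k)$ is a yes-instance.

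For the ``if'' direction I would start from a dominating set $S$ with $|S|\le 3n-1$ and exploit that the closed neighbourhoods $N_G[r_i]=\{r_i,u_i,\overline{u_i}\}$ and $N_G[s_i]=\{s_i,t_i\}$ for $i\in[n]$, together with $N_G[\ell_i]=\{\ell_i,q_i\}$ for $i\in[n-1]$, form a family of $3n-1$ pairwise disjoint sets, each of which must meet $S$. This forces $|S|=3n-1$, that $S$ meets each set in exactly one vertex, and that $S$ contains nothing else; in particular $S\cap W=\emptyset$ and $S$ contains at most one of $u_i,\overline{u_i}$ for every $i$. Define $\alpha(x_i)$ to be true if $u_i\in S$, false if $\overline{u_i}\in S$, and (say) true otherwise — well defined by the previous sentence. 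For any clause $C_j$, the vertex $w_j$ is dominated by $S$ but $w_j\notin S$, so some neighbour of $w_j$ lies in $S$; such a neighbour is necessarily a literal vertex $u_i$ with $x_i\in C_j$ (if $C_j$ is positive) or $\overline{u_i}$ with $\overline{x_i}\in C_j$ (if $C_j$ is negative), and by construction $\alpha$ satisfies that literal and hence $C_j$. Therefore $\alpha$ satisfies $\varphi$.

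The main obstacle is the ``if'' direction, and the point that keeps it clean is that one never needs to determine exactly which vertices of the variable and connector gadgets are in $S$ (no exchange argument is required): the disjoint-neighbourhood packing already pins the budget tightly enough to guarantee $S\cap W=\emptyset$ and at most one true-literal vertex per variable, after which connectivity is used only in the weak form ``every $w_j$ has a neighbour in $S$.'' What remains is the routine checking in \cref{lem:planarconstruction} that the gadget wiring leaves $G$ planar, bipartite, and of maximum degree~$4$ — the degree bound being exactly the reason the connector attaches to $\overline{u_{i+1}}$ rather than $u_{i+1}$ when $x_{i+1}$ occurs twice positively.
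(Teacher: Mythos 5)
Your proof is correct. The only-if direction coincides with the paper's: the cover $\{t_i,a_i : i\in[n]\}\cup\{q_i : i\in[n-1]\}$ of size $3n-1$ is exactly the set the paper uses, with the same connectivity check of the boundary subgraph via \cref{obs:boundary}. Where you genuinely diverge is the if direction. The paper first normalizes the solution (it assumes w.l.o.g.\ that $t_i$ and $q_i$ lie in the cover because $s_i$ and $\ell_i$ have degree~$1$), then uses connectivity of the boundary subgraph to force one of $u_i,\overline{u_i},r_i$ into the cover for each $i$, and only then counts against the budget $3n-1$ to conclude that no clause vertex is in the cover. You instead observe once and for all that a connected boundary subgraph on at least two vertices makes the cover a dominating set, and then let the packing of the $3n-1$ pairwise disjoint closed neighbourhoods $N[r_i]$, $N[s_i]$, $N[\ell_i]$ do all the work: it pins $|S|=3n-1$, forces exactly one hit per neighbourhood, and immediately gives $S\cap W=\emptyset$ and at most one of $u_i,\overline{u_i}$ in $S$, after which only ``$w_j$ is dominated'' is needed. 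Your route avoids the exchange/normalization step entirely and isolates the single structural fact (domination) that the reverse direction really uses, which makes it somewhat cleaner and more robust; the paper's route keeps the argument phrased directly in terms of connectivity of the boundary subgraph, which is the invariant it reuses elsewhere. Both are valid, and you correctly leave planarity, bipartiteness, and the degree bound to \cref{lem:planarconstruction}.
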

\subsection{Unit disk graphs}
A graph $G$ is a \emph{unit disk graph} if each vertex of $G$ is associated with a (closed) disk with diameter $1$ on the plane, and two vertices $u,v\in V(G)$ are adjacent if and only if the two disks corresponding to $u$ and $v$ have non-empty intersection. 
In other words, $uv\in E(G)$ if and only if the Euclidean distance between the centers of disks corresponding to $u$ and $v$ is at most $1$.
Such a set of unit disks on the plane is called the \emph{geometric representation} of $G$.

\begin{theorem}
{\mcst} is NP-complete for unit disk graphs.
\label{the:unithardness}
\end{theorem}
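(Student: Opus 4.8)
The plan is a gadget-based polynomial-time reduction from \spmsat\ (the same source problem used for \Cref{the:planarhardness}), this time realising every gadget geometrically so that the constructed graph is literally a unit disk graph; membership in NP is immediate, so only hardness needs work. Starting from a formula $\varphi$ together with its rectilinear representation --- positive clauses above the variable line, negative clauses below, and no edge crossings --- I would first redraw this representation on an integer grid and then scale it up by a large constant factor, so that every variable box, every clause box, and every vertical connection is surrounded by ample empty space.

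Next I would design three kinds of unit disk gadgets mirroring the planar construction. A \emph{variable gadget} is a small fixed disk arrangement realising a disk analogue of the cycle-plus-pendant gadget $U_i$, whose near-optimal covering sets come in exactly two canonical types, corresponding to assigning $x_i$ True or False and each contributing a fixed number of disks to the chosen set. A \emph{wire gadget} is a chain of disks placed at unit spacing along an edge route of the scaled rectilinear drawing, bending only at grid points; using the \emph{planar} rectilinear representation is precisely what lets wires avoid crossing, so that a sufficiently large scaling factor guarantees the intended adjacency structure, namely two disks intersect if and only if they are consecutive on one wire or belong to the same small gadget. A \emph{clause gadget} is a single disk $w_j$ together with short wires joining it to the literal disks of the (at most three) variables occurring in $C_j$. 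Throughout, the right way to think about a target solution is via \Cref{obs:boundary}: we seek a vertex set $S$ with $|S|\le k$ whose boundary subgraph is connected, and the construction should force that, on every wire, a cheapest choice of $S$ selects every other disk, leaving two possible ``phases''; the phase at a variable gadget encodes the truth value of $x_i$ and is propagated unchanged along its wires, so I would fix the parity (the number of internal disks) of each wire to make the propagated phase consistent with the intended literal occurrences.

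For correctness I would argue both directions against a carefully chosen threshold $k$. If $\varphi$ is satisfiable, a satisfying assignment picks a phase in each variable gadget; following the wires yields a set $S$ of exactly $k$ disks whose boundary subgraph is connected, because for every clause $C_j$ a true literal supplies a disk of $S$ adjacent to $w_j$, attaching $w_j$ to the rest of the boundary subgraph without spending an extra disk. Conversely, I would show that any $S$ with connected boundary subgraph must spend at least the ``half of each wire plus the fixed cost of each variable gadget'' budget, with equality only if each variable gadget is covered in one of its two canonical ways and each clause disk $w_j$ is reached through an incident wire whose phase corresponds to a true literal; reading off the chosen phases then produces a satisfying assignment. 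Since the scaled grid has polynomial size and every wire has polynomial length, the geometric representation, and hence $(G,k)$, is produced in polynomial time.

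The hard part will be the geometric analysis: proving that the chosen coordinates realise \emph{exactly} the intended adjacency graph, with no accidental intersections between disks of different wires or gadgets --- especially at wire bends and in the crowded neighbourhoods of the clause disks --- while keeping all coordinates on a polynomially bounded grid; the no-crossing property of the rectilinear representation together with a large enough uniform scaling factor is what makes this controllable. A secondary technical point is the parity and counting bookkeeping needed to pin down the exact value of $k$ and to rule out a solution ``cheating'' by covering a wire with more than half of its disks in order to flip a phase. An alternative route would be to reduce directly from the bounded-degree planar bipartite instances produced for \Cref{the:planarhardness}, replacing each edge by a unit disk wire of suitably chosen parity and showing that this subdivision changes the minimum spanning-tree vertex cover size only by a fixed additive amount; the same two obstacles --- geometric realisability and parity bookkeeping --- reappear there.
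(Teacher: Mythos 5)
Your closing ``alternative route'' is in fact the paper's actual proof: it reduces from the \textsc{MCST} instances of \Cref{the:planarhardness}, embeds the planar max-degree-$4$ graph orthogonally on an integer grid, and replaces each edge $uv$ by a wire of length proportional to the Manhattan distance $M(u,v)$, setting $k' = k + \sum_{uv} M(u,v)$. So the overall architecture of your secondary plan is right, and your primary plan (a fresh gadget reduction from \spmsat) would also be workable in principle, though it needlessly redoes the combinatorial work of \Cref{the:planarhardness} on top of the geometry.

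The genuine gap is in the wire gadget itself, and it sits exactly at the step you defer (``parity and counting bookkeeping \dots ruling out a solution cheating''). A wire realised as a simple chain of disks, i.e.\ an induced path, does \emph{not} have the clean ``cost exactly half its length, with exactly two phases'' behaviour under the MCST objective. Via \Cref{obs:boundary} the relevant condition on $S$ is that every vertex is in $S$ or has a neighbour in $S$ \emph{and} the resulting boundary subgraph is connected; on a path of $\ell$ internal degree-$2$ vertices the cheapest sets satisfying the local condition have size closer to $\ell/3$ than $\ell/2$, the exact cost of the wire depends on whether its endpoints lie in $S$, and when the two ends of a wire are already joined through the rest of the graph a solution may legitimately break the wire into two pendant pieces and realise \emph{either} phase at the clause end for the same price. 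So the two-phase/truth-propagation invariant you rely on is not automatic and can actually fail. The paper's key device, which your sketch is missing, is to build each wire as a chain of $M(u,v)$ triangles $\langle \ell_i, s_i, r_i\rangle$ in which $s_i$ is a private degree-$2$ vertex: every triangle forces at least one of its three vertices into $S$ (else $s_i$ is isolated in the boundary subgraph), and taking all the $r_i$ always suffices, so each wire costs \emph{exactly} $M(u,v)$ independently of phase, endpoint membership, and parity. That single idea collapses both directions of the correctness proof to additive accounting and eliminates the cheating analysis entirely; without it (or an equivalent rigidity device) your wire analysis, and hence the choice of $k$, does not go through as written. The geometric side you flag is handled in the paper by scaling the grid by $2$ and packing each triangle's three disks inside the unit-diagonal square spanned by consecutive grid points, which is a finite case analysis rather than a generic ``large enough scaling factor'' argument.
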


To prove \Cref{the:unithardness}, we provide a polynomial-time reduction from {\mcst} on planar graphs with maximum degree $4$, which was shown to be NP-complete in \cref{the:planarhardness}, to {\mcst} on unit disk graphs.
Let $(G,k)$ be an instance of {\mcst} on planar graphs with maximum degree $4$.
We construct an instance $(G',k')$ of {\mcst} on unit disk graphs from $(G,k)$.

First, we embed $G$ in the plane using the following lemma. 

\begin{lemma}[\cite{10.5555/1963635.1963641}]
    A planar graph $G$ with maximum degree $4$ can be embedded in the plane using $O(|V(G)|)$ area in such a way that its vertices are at integer coordinates and its edges drawn so that they are made up of line segments of the form $x = i$ or $y = j$, for integers $i$ and $j$.
    \label{lem:xy}
\end{lemma}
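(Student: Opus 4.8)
The plan is to draw $G$ as an \emph{orthogonal grid drawing}: every vertex is placed at an integer grid point and every edge is routed as a rectilinear polyline whose segments lie on grid lines $x=i$ or $y=j$ and whose bends occur at integer points. Before anything geometric, I would stress that the hypothesis $\Delta(G)\le 4$ is not a convenience but a necessity for this format: a grid point has exactly four incident directions (up, down, left, right), so an edge leaving a vertex must occupy one of four \emph{ports}, and hence such a drawing can exist only when every vertex has at most four incident edges. First I would fix a combinatorial planar embedding of $G$ (a rotation system together with a choice of outer face), which exists because $G$ is planar; if $G$ is disconnected I would draw its components separately and place them in disjoint axis-aligned boxes, so that integer coordinates and the size bound merely add up over components. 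Write $n=|V(G)|$.

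Second I would turn the embedded graph into an \emph{orthogonal representation}: a combinatorial description recording, for each vertex, which port each incident edge uses, and, for each edge, the sequence of left/right bends encountered while traversing it, all made consistent around every face. For an embedded planar graph with $\Delta\le 4$ such a representation always exists, and a bend-controlled one can be produced by Tamassia's minimum-cost flow network, whose nodes are the vertices and faces of the embedding, whose flow encodes the angles at vertices and the bends along edges, and whose cost counts the total number of bends; a feasible minimum-cost flow yields a valid orthogonal representation with only $O(n)$ bends in total. I would then \emph{compact} this representation into coordinates: split it into a horizontal and a vertical constraint graph and assign integer $x$- and $y$-coordinates to all vertices and bend points by longest-path (topological) computations in those graphs. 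By construction the resulting edges are sequences of axis-parallel segments, the vertices sit at integer points, and planarity of the embedding guarantees that no two segments cross.

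The step I expect to be the real obstacle is the size guarantee, namely that the bounding box uses only $O(n)$ grid lines in each direction. This reduces to two facts that must be argued with care: that the orthogonal representation can be chosen with $O(n)$ bends (controlled by the flow cost, with Euler's formula bounding the number of faces), and that the compaction introduces at most a constant number of grid lines per vertex and per bend, so that each dimension stays $O(n)$. A secondary subtlety is the port assignment at vertices of degree exactly four, where all four ports are used: the assignment must agree with the cyclic order given by the embedding so that the flow network remains feasible at every face. Having established an $O(n)\times O(n)$ integer grid drawing, the quantity relevant to the subsequent unit-disk construction is simply that all coordinates are integers bounded by $O(n)$, and I would record the size bound as this linear number of grid lines used in each coordinate direction. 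As an alternative to the flow machinery, I note that a visibility representation (vertices as horizontal bars, edges as vertical segments) on an $O(n)\times O(n)$ grid can be converted to the same orthogonal drawing by contracting each bar to a point and rerouting its at most four incident edges through the four ports with a bounded number of bends — again using $\Delta\le 4$ decisively.
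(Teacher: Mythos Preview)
The paper does not prove this lemma at all: it is quoted as a known result from the cited reference (Valiant's orthogonal VLSI layout theorem) and used purely as a black box in the reduction to unit disk graphs. There is therefore no in-paper proof to compare your proposal against.

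That said, your sketch is a reasonable and standard route to the result. Tamassia's flow-based orthogonal representation followed by topological compaction does produce an orthogonal grid drawing of any embedded $4$-planar graph with $O(n)$ total bends and hence an $O(n)\times O(n)$ bounding box; the visibility-representation alternative you mention is also a valid path. One point worth flagging: the lemma as stated says $O(|V(G)|)$ \emph{area}, whereas your argument (correctly) yields an $O(n)\times O(n)$ grid, i.e.\ $O(n^2)$ area, which is what Valiant's theorem actually guarantees and is tight in general. The paper's subsequent reduction only needs polynomially bounded integer coordinates, so the discrepancy is harmless there, but your write-up should not try to force an $O(n)$-area conclusion out of the compaction step --- that bound is not attainable for arbitrary $4$-planar graphs.
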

\begin{figure}[t]
    \centering
    \includegraphics[width=\linewidth]{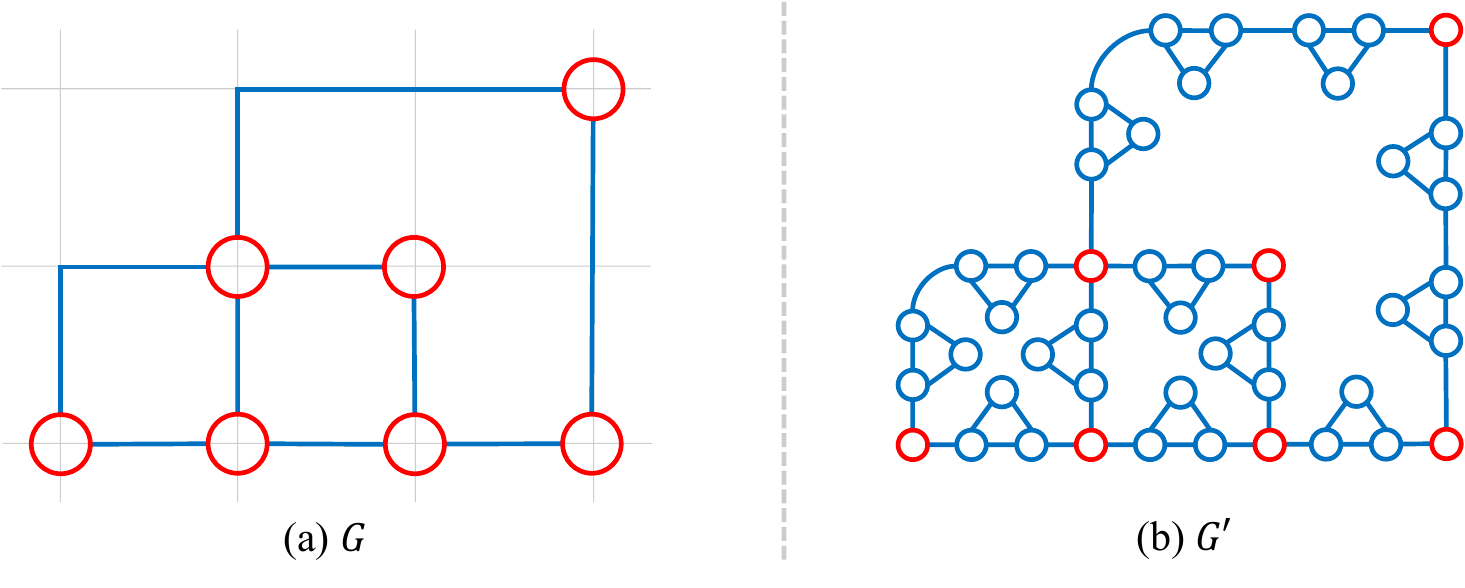}
    \caption{(a) A planar graph $G$ with maximum degree $4$ embedded according to \cref{lem:xy}, and (b) the unit disk graph $G'$ constructed from $G$.}
    \label{fig:unitdisk}
\end{figure}
\cref{fig:unitdisk}(a) illustrates an example of \Cref{lem:xy}.
This embedding allows us to represent each vertex $v_i \in V(G)$ with $i \in [n]$ by a coordinate $(x_i,y_i)$, where $x_i$ and $y_i$ are integers.
For each edge $v_iv_j\in E(G)$, we define the length function as $M(v_i,v_j)=|x_i-x_j|+|y_i -y_j|$, which corresponds to the Manhattan distance.
Using this length function, we construct the graph $G'$.
For each edge $uv$, let $\alpha = M(u,v)$, and introduce a set $\{\langle \ell_i, s_i, r_i \rangle \mid i \in [\alpha]\}$ of $\alpha$ copies of a 3-cycle.
Then, add the edges $u\ell_1$, $r_\alpha v$, and $r_i \ell_{i+1}$ for each $i \in [\alpha-1]$, which replaces the original edge $uv$ with a graph formed by connecting the 3-cycles.
\cref{fig:unitdisk}(b) depicts the resulting graph $G'$ constructed from $G$ by replacing each edge according to the above procedure.
Finally, we set $k' = k + \sum_{uv\in E(G)}M(u,v)$, completing the construction of the instance $(G',k')$.

It is clear that the instance $(G',k')$ can be obtained in polynomial time.
We prove the correctness of our reduction. 
The following two lemmas complete the proof of Theorem \ref{the:unithardness}.
\begin{lemma}[$\ast$] \label{lem:unit_correctness}
    The constructed graph $G'$ is a unit disk graph. 
\end{lemma}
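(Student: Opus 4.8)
The plan is to build an explicit unit-disk representation of $G'$ that traces a planar grid drawing of $G$. First I would take the embedding of $G$ from \cref{lem:xy}, refine the grid by a large constant factor, and route each edge $uv$ inside its own lane so that its drawn axis-parallel polyline has length a fixed constant multiple of $M(u,v)$; since distinct edges of the refined drawing lie at grid-distance bounded away from $0$, the lanes (hence the polylines) stay far apart. I then set the disk centre of every vertex $v_i\in V(G)$ to be its grid point, and for an edge $uv$ I walk along its polyline from $u$ to $v$ and place the $M(u,v)=\alpha$ triangles $\langle\ell_i,s_i,r_i\rangle$ in order: the centres of $\ell_i$ and $r_i$ are put a short way apart along the polyline (with $\ell_i$ towards $u$ and $r_i$ towards $v$), and the centre of $s_i$ a short way off to the perpendicular side, the spacings tuned so that each of the distances realising an edge of $G'$ --- the three edges of each triangle, each link $r_i\ell_{i+1}$, and the links $u\ell_1$ and $r_\alpha v$ --- is at most $1$.

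It then remains to check that no other pair of centres is within distance $1$. I would go through the cases: within a single chain, two triangle-vertices that are not consecutive are separated by polyline-distance comfortably larger than $1$, and a $90^\circ$ bend of the polyline only scales a Euclidean distance down by a factor $\tfrac{1}{\sqrt2}$, so with the chosen spacing and refinement they stay non-adjacent (this is where routing each edge somewhat longer than its Manhattan distance is used: it leaves enough room to keep $\ell_i$ far from $\ell_{i+1}$ and $s_i$ far from $s_{i+1}$ even around corners); vertices from chains of two different edges of $G$ are far apart because the corresponding polylines are; and near a vertex $v_i$ of $G$ the at most four incident edges leave $v_i$ along four distinct axis directions, so the first triangles of their chains occupy four distinct grid-neighbours of $v_i$ and are pairwise well separated, which also ensures that $v_i$'s disk meets only the disk of $\ell_1$ of each incident chain, so $v_i$ keeps its degree. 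Putting these together, the pairs of centres at distance at most $1$ are exactly the edges of $G'$, so $G'$ is a unit disk graph.

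The step I expect to be the real obstacle is choosing the constants --- the refinement factor, the length multiple of the polylines, and the three spacing parameters --- so that all of the above hold simultaneously; in particular the corner analysis, where the link $r_i\ell_{i+1}$ must survive a bend while $\ell_i\ell_{i+1}$ and $s_is_{i+1}$ must not. Everything else is a finite collection of elementary distance computations on the integer grid.
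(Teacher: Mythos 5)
There is a genuine obstruction to your plan, and it is not just a matter of tuning constants. The graph $G'$ you must realize has \emph{exactly} $\alpha = M(u,v)$ triangles on the chain replacing the edge $uv$, and in $G'$ the only path of adjacencies from $u$ to $v$ along that chain is the spine $u,\ell_1,r_1,\ell_2,\dots,r_\alpha,v$, which has $2\alpha+1$ edges. In any unit-disk representation each such adjacency forces the corresponding centres to be at Euclidean distance at most $1$, so the centres of $u$ and $v$ must be at distance at most $2\alpha+1 = 2M(u,v)+1$. If you first blow the drawing of \cref{lem:xy} up by a large constant factor $K$ (so that distinct edges get well-separated lanes) and route the chain for $uv$ along a polyline of length a large constant multiple of $M(u,v)$, then the centres of $u$ and $v$ are at distance roughly $K\cdot M(u,v) > 2M(u,v)+1$ once $K\ge 4$ (and shortcutting across bends does not help, since even the straight-line distance already exceeds the budget). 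So with the prescribed number of triangles no choice of your spacing parameters can make the chain reach from $u$ to $v$; the slack you wanted to buy with refinement and longer routing simply is not available. (Rescaling the whole picture does not help either, since unit-disk representations are scale-invariant: only the ratio of polyline length to number of triangles matters, and it is pinned between roughly $1$ and $2$ per triangle.)

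This is exactly why the paper works at the tight scale: it doubles the lattice (one triangle per two units, which leaves the abstract graph unchanged), places the three disks of each triangle at offsets like $(\sqrt2/2+\varepsilon,0)$ and $(1,-1+\sqrt2/2+\varepsilon)$ so that they are properly contained in the unit diamond of their own $2$-unit segment, and then does the explicit case analysis for segments carrying zero, one, or two original vertices, for right-angle turns, and for degree-$4$ vertices. Without the large separation you assumed, the inter-chain non-adjacency claims (different edges' chains, the four chains meeting at a degree-$4$ vertex, and the $s_i$ disks bulging sideways at corners) are precisely the delicate part and cannot be waved through; they have to be verified at the packed scale, as in the paper's proof. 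If you want to keep your ``wide lanes'' idea you would have to insert more triangles per edge (one per refined grid unit) and adjust $k'$ accordingly, but that changes the construction rather than proving the stated lemma about $G'$.
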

\begin{lemma}[$\ast$] \label{lem:unit_correctness}
    $(G,k)$ has a solution to {\mcst} if and only if $(G',k')$ has a solution to {\mcst}. 
\end{lemma}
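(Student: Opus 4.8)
The plan is to recast both instances via \cref{obs:boundary}: $(\Gamma,k)$ is a yes-instance of {\mcst} iff some $Z\subseteq V(\Gamma)$ with $|Z|\le k$ has a connected boundary subgraph in $\Gamma$. For an edge $uv\in E(G)$ write $\alpha=M(u,v)$, let $\Gamma_{uv}$ be the path of $\alpha$ triangles inserted in place of $uv$, and call $\{\ell_i,s_i,r_i\mid i\in[\alpha]\}$ its \emph{inner vertices}. Two structural facts drive everything. First, $\Gamma_{uv}$ meets the rest of $G'$ only at $u$ and $v$, so its inner vertices can be reached (in any subgraph of $G'$) only through edges of $\Gamma_{uv}$. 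Second, $\Gamma_{uv}$ behaves like a chain of triangles: consecutive triangles $\langle\ell_i,s_i,r_i\rangle$ and $\langle\ell_{i+1},s_{i+1},r_{i+1}\rangle$ are linked only by the edge $r_i\ell_{i+1}$, and the first and last triangles are linked to $u$ and $v$ only by $u\ell_1$ and $r_\alpha v$.

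For the direction $(G,k)\Rightarrow(G',k')$ I would lift a spanning tree $T$ of $G$ together with a vertex cover $C$ of $T$ of size at most $k$. For a tree edge $uv$ (w.l.o.g.\ $u\in C$, by the symmetry of $\Gamma_{uv}$), route the \emph{backbone path} $u,\ell_1,r_1,\dots,\ell_\alpha,r_\alpha,v$ through $\Gamma_{uv}$, hang each $s_i$ as a leaf on $r_i$, and put $r_1,\dots,r_\alpha$ into the cover: replacing the edge $uv$ of $T$ by this path-with-leaves keeps a spanning tree, and $\{u\}\cup\{r_1,\dots,r_\alpha\}$ covers all of its new edges. For a non-tree edge $uv\in E(G)\setminus E(T)$, glue to $u$ the subtree on the inner vertices given by the edges $u\ell_1$, the $\ell_ir_i$, the $r_i\ell_{i+1}$ and the $\ell_is_i$, and put $\ell_1,\dots,\ell_\alpha$ into the cover; this subtree hangs off the current tree by the single edge $u\ell_1$ and is covered by $\{\ell_1,\dots,\ell_\alpha\}$ alone. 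The cover grows by $\sum_{uv\in E(G)}\alpha=\sum_{uv\in E(G)}M(u,v)$, producing a spanning tree of $G'$ covered by a set of size at most $k+\sum_{uv}M(u,v)=k'$.

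For $(G',k')\Rightarrow(G,k)$, let $Z'$ with $|Z'|\le k'$ have a connected boundary subgraph $H'$ of $G'$. Each triangle $\langle\ell_i,s_i,r_i\rangle$ must meet $Z'$, for otherwise the degree-$2$ vertex $s_i$ is isolated in $H'$; hence $\Gamma_{uv}$ always carries at least $\alpha$ inner vertices of $Z'$, and writing $\mathrm{sp}(uv)\ge0$ for the surplus above $\alpha$, we get $|Z'\cap V(G)|\le k-\sum_{uv}\mathrm{sp}(uv)$. Since inner vertices are trapped in their gadgets, any simple $H'$-path between two vertices of $G$ splits into hops, each crossing one gadget $\Gamma_{uv}$ from $u$ to $v$; call such a $\Gamma_{uv}$ \emph{traversed} and let $E_{\mathrm{conn}}\subseteq E(G)$ be the set of edges whose gadget is traversed. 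Connectivity of $H'$ then makes $(V(G),E_{\mathrm{conn}})$ connected, so I fix a spanning tree $T$ of it and set $S:=(Z'\cap V(G))\cup X$, where $X$ holds one endpoint of each edge $uv\in E(T)$ having neither endpoint in $Z'$. Then $T$ is a spanning tree of $G$ covered by $S$, and the whole argument reduces to checking that $|S|\le k$.

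The step I expect to be the main obstacle is the claim that \emph{a traversed gadget $\Gamma_{uv}$ with $u,v\notin Z'$ satisfies $\mathrm{sp}(uv)\ge1$}. Granting it, each edge of $T$ that contributes to $X$ lies in its own traversed gadget of surplus at least $1$, so $|X|\le\sum_{uv}\mathrm{sp}(uv)$ and hence $|S|\le|Z'\cap V(G)|+|X|\le k$, which finishes the reduction. To prove the claim I would assume, for contradiction, that exactly one vertex $z_i\in\{\ell_i,s_i,r_i\}$ of each triangle lies in $Z'$. Since $u,v\notin Z'$ and the only gadget edges at $u$ and $v$ are $u\ell_1$ and $r_\alpha v$, connecting $u$ to $v$ inside $H'$ forces $\ell_1,r_\alpha\in Z'$, i.e.\ $z_1=\ell_1$ and $z_\alpha=r_\alpha$ (already impossible if $\alpha=1$). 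For $\alpha\ge2$, the equality $z_i=\ell_i$ gives $r_i\notin Z'$, so the linking edge $r_i\ell_{i+1}$ belongs to $H'$ only if $z_{i+1}=\ell_{i+1}$; propagating from $z_1=\ell_1$ forces $z_\alpha=\ell_\alpha$, contradicting $z_\alpha=r_\alpha$. Hence not every triangle can carry exactly one vertex of $Z'$, i.e.\ $\mathrm{sp}(uv)\ge1$, proving the claim.
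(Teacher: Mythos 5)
Your proof is correct, but it is organized quite differently from the paper's. The paper proves the lemma by induction on the edges of $G$: it replaces one edge $uv$ at a time, passing from $G_{i-1}$ to $G_i$, and in each direction exhibits an explicit cover for the modified graph (in the forward direction via a case analysis on whether $u,v$ lie in the current cover, always adding exactly the set $R$ or $L$ of $\alpha$ triangle vertices; in the backward direction by showing one can always delete at least $\alpha$ inner vertices per gadget while preserving connectivity of the boundary subgraph). You instead argue globally: in the forward direction you build the spanning tree of $G'$ explicitly (backbone path through traversed gadgets, a pendant subtree for non-tree edges), and in the backward direction you introduce the notions of \emph{traversed} gadgets and \emph{surplus} and run a charging argument, paying for each tree edge with both endpoints outside $Z'$ out of that gadget's surplus. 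The decisive structural fact is the same in both proofs --- in a gadget with exactly one cover vertex per triangle and neither endpoint covered, the condition ``$\ell_1$ covered'' propagates along the cut edges $r_i\ell_{i+1}$ to force all $\ell_i$'s, contradicting the need for $r_\alpha$ --- the paper phrases it as ``connectivity forces $R\subseteq S_i$, hence $u\ell_1\notin E(H_i)$,'' while you phrase it as ``such a gadget cannot be traversed.'' Your global charging argument buys a cleaner accounting of where the budget $k'-k=\sum_{uv}M(u,v)$ goes and avoids having to re-verify connectivity of the boundary subgraph after each local replacement; the paper's induction keeps each step elementary and works uniformly with boundary subgraphs via \cref{obs:boundary} rather than mixing spanning trees and boundary subgraphs as you do. Both are complete and correct.
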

\section{Algorithms}\label{sec:alg}
\subsection{Clique-width}\label{subsec:cliquew}
In this section, we present an FPT algorithm parameterized by clique-width.
For a positive integer $w$, a \emph{$w$-labeled graph} is a graph whose vertices are labeled with integers from $[w] = \{1, 2, \dots, w\}$. 
We consider the following four operations on $w$-labeled graphs.\\\\
\begin{tabular}{@{}r p{0.7\textwidth}@{}}
\textbf{Introduce $i(v)$ :} & Create a graph consisting of a single vertex $v$ labeled with $ i \in [w] $. \\
\textbf{Union $\oplus$ :}     & Take the disjoint union of two $w$-labeled graphs. \\
\textbf{Relabel $\rho_{i \to j}$ :} & Replace every label $i$ with $j$. \\
\textbf{Join $\eta_{i, j}$ :} & Add all possible edges between vertices labeled $i$ and $j$. \\
\end{tabular}\\\\
The \emph{clique-width} of a graph $G$ is the smallest integer $w$ such that a $w$-labeled graph isomorphic to $G$ can be obtained from a sequence of these operations.
The sequence can be represented by a rooted tree, called a \emph{$w$-expression tree} $\decom$ of $G$, whose nodes are classified according to the above operations, namely, \emph{introduce nodes}, \emph{union nodes}, \emph{relabel nodes}, and \emph{join nodes}, respectively.
It is known that, given a graph $G$ of clique-width at most $w$, there exists an algorithm that computes a $(2^{w + 1} - 1)$-expression tree for $G$ in time $O(|V (G)|^ 3)$~\cite{10.1007/978-3-540-75520-3_16,OUM2006514,10.1007/11604686_5}.

First, we define a problem equivalent to \mcst\ and prove that a solution to the problem can be expressed in monadic second-order logic, called MSO$_1$.
To show this, we demonstrate the following equivalence.
\begin{lemma}[$\ast$]\label{lem:cut}
    Let $G=(V, E)$ be a connected graph with at least two vertices, and let $S$ be a vertex subset of $G$.
    Then the following claims (a) and (b) are equivalent:
    \begin{enumerate}[(a)]
        \item there exists a spanning tree $T$ of $G$ covered by $S$; and 
        \item for any non-empty vertex subset $C\subset V$, there exist two vertices $u$ and $v$ that satisfy the following four conditions: $\mathrm{(1)}$ $uv\in E$; $\mathrm{(2)}$ $u\in C$; $\mathrm{(3)}$ $v\in V\setminus C$; and $\mathrm{(4)}$ $u\in S$ or $v\in S$
    \end{enumerate}
\end{lemma}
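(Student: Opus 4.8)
The plan is to prove the two implications separately, working with the boundary subgraph $H$ for $S$ and using Observation~\ref{obs:boundary} as the bridge between spanning trees covered by $S$ and connectivity of $H$. The key reformulation is: condition (b) is exactly the statement that there is no non-trivial ``cut'' of $V$ all of whose crossing edges avoid $S$; equivalently, every partition $(C, V\setminus C)$ into non-empty parts has a crossing edge incident to $S$. I would first record this as the natural restatement, so that both directions become statements about how $S$ interacts with edge cuts.

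For the direction (a) $\Rightarrow$ (b), suppose $T$ is a spanning tree of $G$ covered by $S$. Take any non-empty proper subset $C \subset V$. Since $T$ is connected and spanning, there is an edge $uv \in E(T)$ with $u \in C$ and $v \in V\setminus C$ (a spanning tree restricted to a non-trivial bipartition must cross it). Because $S$ is a vertex cover of $T$, one of $u,v$ lies in $S$. This $uv$ witnesses all four conditions, since $uv \in E(T) \subseteq E$. This direction is essentially immediate.

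For (b) $\Rightarrow$ (a), I would argue the contrapositive via the boundary subgraph. Suppose no spanning tree of $G$ is covered by $S$; then by Observation~\ref{obs:boundary} the boundary subgraph $H$ for $S$ is disconnected. Let $C$ be the vertex set of one connected component of $H$. Then $C$ is non-empty; it is also a proper subset, because $H$ has at least two components and every vertex of $G$ belongs to some component of $H$ (note $H$ has vertex set $V$, though possibly with isolated vertices). Now I claim no edge $uv \in E$ with $u \in C$, $v \in V\setminus C$ can have an endpoint in $S$: if it did, say $u \in S$, then $uv$ is an edge of $H$ (being incident to $S$), so $u$ and $v$ would lie in the same component of $H$, contradicting $u \in C$, $v \notin C$. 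Hence this $C$ violates condition (b). One subtlety to handle carefully: if some component of $H$ consists of a single isolated vertex $v$, then $v \notin S$ (vertices of $S$ have positive degree in $H$ as long as they are non-isolated in $G$, which holds since $G$ is connected with at least two vertices) and the argument still goes through, since we only need that crossing edges avoid $S$ entirely. I would also double-check the edge case where $S = \emptyset$: then $H$ has no edges, so for $|V| \ge 2$ it is disconnected, (a) fails, and taking $C$ to be any single vertex shows (b) fails as well, consistent with the equivalence.

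The main obstacle is not any deep combinatorics but rather making the boundary-subgraph bookkeeping airtight: precisely pinning down the vertex set of $H$ (all of $V$, with isolated vertices allowed), confirming that a chosen component gives a \emph{proper} non-empty $C$, and verifying the ``crossing edges avoid $S$'' claim in the presence of possible isolated vertices. Once that is set up, both directions are short; the connectedness hypothesis on $G$ and $|V|\ge 2$ are used only to rule out the degenerate case and to invoke Observation~\ref{obs:boundary}.
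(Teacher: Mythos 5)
Your proposal is correct, and the (a) $\Rightarrow$ (b) direction is identical to the paper's (take a tree edge crossing the cut and use that $S$ covers $T$). For (b) $\Rightarrow$ (a), however, you take a genuinely different route: the paper argues directly and constructively, growing a tree from a single vertex and, at each step, applying condition (b) to the cut $(V(T), V\setminus V(T))$ to obtain an edge $uv$ with $u\in S$ or $v\in S$ that extends $T$ while keeping it covered by $S$; you instead prove the contrapositive via \cref{obs:boundary}, observing that if no spanning tree is covered by $S$ then the boundary subgraph $H$ for $S$ is disconnected, and that the vertex set $C$ of any component of $H$ is a non-empty proper subset all of whose crossing edges avoid $S$ (since any edge incident to $S$ lies in $H$ and hence inside one component). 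Both arguments are sound; yours is shorter because it reuses the already-stated observation (and you correctly check the proper-subset and isolated-vertex/$S=\emptyset$ edge cases, noting $H$ has vertex set $V$), whereas the paper's greedy construction is self-contained and does not lean on \cref{obs:boundary} at all, which also makes the augmentation idea explicit in a way that is reused later in the interval-graph analysis. There is no circularity in your use of the observation, since it is stated independently of the lemma, so either proof is acceptable.
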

A vertex set $S$ that satisfies the property~(b) in \Cref{lem:cut} can be expressed in MSO$_1$. 
The result of Courcelle et al.\ implies that the minimization problem of $S$, as well as {\mcst}, admits an FPT algorithm parameterized by clique-width~\cite{Courcelletheorem}.
However, it is known that its running time can take the form of an exponential tower of $|\varphi|$, where $|\varphi|$ is the length of an MSO$_1$ formula $\varphi$~\cite{FrickG04}.
Therefore, we will design a faster FPT algorithm.
Our main theorem of this section is as follows.
\begin{theorem} \label{the:alg_cliquewidth}
Given a $w$-expression tree of an $n$-vertex graph, {\mcst} can be solved in time $ 2^{2^{O(w)}}n$.
\end{theorem}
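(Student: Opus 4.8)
The plan is to perform bottom-up dynamic programming over the given $w$-expression tree $\decom$, using the characterization of \Cref{lem:cut} as the correctness criterion: a vertex set $S$ admits a spanning tree covered by $S$ iff for every non-empty proper cut $C$ there is a crossing edge with at least one endpoint in $S$. Equivalently, I want to compute, for the whole graph, the minimum size of a set $S$ such that $S$ is a vertex cover of a \emph{connected} spanning subgraph — the ``boundary subgraph'' $H$ from \Cref{obs:boundary} must be connected. So the table will track, for each node $x$ of $\decom$ with associated $w$-labeled graph $G_x$, enough information about partial solutions to decide connectivity once all join operations above $x$ have been applied. The key observation that makes a finite table possible is that at any point only the label of a vertex (one of $w$ values) and whether it lies in $S$ determines how future joins affect it; and future joins can only add edges, hence can only \emph{merge} connected components of the boundary subgraph, never split them.

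Concretely, for a node $x$ I would store, for each candidate ``signature'', the minimum $|S \cap V(G_x)|$ over partial solutions realizing that signature. A partial solution at $x$ is a choice of $S_x = S\cap V(G_x)$ together with, implicitly, the current boundary subgraph $H_x$ of $G_x$ for $S_x$. The signature records: (i) for each pair $(i,b)$ with $i\in[w]$, $b\in\{0,1\}$ (label $i$, in/out of $S$), whether $G_x$ contains at least one vertex with that label/status — and more importantly, the partition of the set of ``occupied'' label-status classes induced by: two classes are in the same block iff \emph{some} vertex of one class and some vertex of the other already lie in the same component of $H_x$ — together with, for book-keeping of isolated vertices, a flag per label-status class indicating whether \emph{every} vertex of that class is currently in the same component as some $S$-vertex (so it will not be left as an isolated vertex in the final boundary subgraph unless a later join fixes it). The number of such signatures is bounded by a function of $w$ of the form $2^{2^{O(w)}}$, since a partition of a set of size $2w$ has at most $2^{O(w\log w)}$ blocks-worth of structure but the ``which label-status classes are present and how they are grouped, plus per-class flags'' count is at most $2^{2^{O(w)}}$ when one is generous; I will choose the encoding so that the stated bound holds.

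The transitions are routine once the signature is fixed. For an \emph{introduce} node $i(v)$: two entries, $S=\{v\}$ or $S=\emptyset$, each with the obvious singleton signature. For a \emph{union} node: combine a signature from each child by taking the disjoint union of the label-status classes and of the component partitions, adding the $|S|$-values; no new edges, so no merging. For a \emph{relabel} $\rho_{i\to j}$: rename label $i$ to $j$ in every class, merging the class $(i,b)$ into $(j,b)$ and unioning their blocks in the partition, updating flags by logical AND appropriately. For a \emph{join} $\eta_{i,j}$: this adds all edges between label-$i$ and label-$j$ vertices; every such edge is in the boundary subgraph iff at least one endpoint is in $S$, so if either class $(i,1)$ or $(j,1)$ is non-empty, all of $(i,0),(i,1),(j,0),(j,1)$ present among the occupied classes get merged into one block, and their ``covered'' flags become true; if both $(i,1)$ and $(j,1)$ are empty, the new edges are not in $H$ and nothing changes. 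At the root, a signature is \emph{accepting} iff all occupied label-status classes lie in a single block of the partition and all their ``covered'' flags are true (equivalently $H$ is connected and has no isolated vertex), and the answer to {\mcst} is the minimum stored $|S|$ over accepting root signatures, compared against $k$. Correctness follows from \Cref{obs:boundary} together with an induction showing that the stored minima are exactly the achievable ones; the monotonicity point — later joins only merge blocks and only turn flags from false to true — is what guarantees that pruning to the signature loses no information.

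The main obstacle I expect is not the algorithmic idea but pinning down the signature so that (a) it genuinely is preserved under all four operations with a clean transition rule, and (b) its count is honestly $2^{2^{O(w)}}$ rather than something larger or requiring a subtler argument; in particular one must be careful that the ``isolated vertex'' / ``fully covered class'' bookkeeping interacts correctly with union and relabel (a class can be a mix of covered and not-yet-covered vertices, which forces the per-class information to be slightly finer than a single bit, and I must verify the resulting finer partition still has bounded size). A secondary, purely technical point is bounding the running time: there are $O(n)$ nodes in the expression tree, the dominant cost is at union nodes where we combine pairs of child signatures, giving $\bigl(2^{2^{O(w)}}\bigr)^2 = 2^{2^{O(w)}}$ work per node, hence $2^{2^{O(w)}}n$ overall, matching the claimed bound.
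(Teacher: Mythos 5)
Your overall strategy is the same as the paper's: bottom-up dynamic programming over the $w$-expression tree, using \Cref{obs:boundary} (connectivity of the boundary subgraph) as the target condition and compressing partial solutions into a bounded-size signature per node. The transitions you sketch for introduce, union, relabel, and join are in the right spirit. However, the signature you propose is too coarse, and this is a genuine gap rather than a presentational issue.

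The problem is that a ``partition of occupied label-status classes, where two classes are in the same block iff some vertex of one and some vertex of the other share a component'' does not determine the component structure of the boundary subgraph: after taking the transitive closure needed to make this a partition, it conflates ``one component containing labels $\{1,2,3\}$'' with ``two components with label sets $\{1,2\}$ and $\{2,3\}$'', and more damagingly it cannot distinguish one component from two (or more) components having identical label profiles. This breaks the acceptance test at the root. Concretely, take $G=C_6$ with vertices $v_1,\dots,v_6$ and $S=\{v_1,v_4\}$: the boundary subgraph for $S$ has exactly two components, $\{v_6,v_1,v_2\}$ and $\{v_3,v_4,v_5\}$, with no isolated vertices and every non-$S$ vertex adjacent to an $S$ vertex. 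If all vertices carry the same label at the root (which a $w$-expression may well arrange), your signature consists of a single block with all covered-flags true, so you would accept $S$ and report $2$, whereas $\vcnum{C_6}=3$ (as the paper's remark after \Cref{the:bound_P5free} notes). The same coarseness also corrupts the join transition: to decide which components merge under $\eta_{i,j}$ you need to know, for each individual component, which labels it contains and which labels its $S$-vertices carry, not merely the transitive closure of ``classes that touch''.

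The repair is essentially the paper's table: index partial solutions by a function $f:\mathcal{P}\times\mathcal{P}\to\{0,1,2\}$ (with $\mathcal{P}=2^{[w]}$) recording, for each per-component profile $(C,X)$ — $C$ the set of labels occurring in the component, $X$ the set of labels of its $S$-vertices — the number of components with that profile, capped at $2$. This keeps components as first-class objects (so the root test ``exactly one component'' is sound), the cap at $2$ suffices because a join merges all components matching a given pattern simultaneously, and the count of such functions is $3^{3^w}=2^{2^{O(w)}}$, so your claimed running time survives. Your covered-flag bookkeeping then becomes unnecessary, since an isolated non-$S$ vertex is simply its own component and is rejected by the root test.
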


To show \Cref{the:alg_cliquewidth}, we present an algorithm using dynamic programming over a given $w$-expression tree $\decom$.
Let $t$ be a node in $\decom$.
We denote by $G_t$ the labeled graph constructed by operations in the subtree rooted at $t$.
We compute candidate solutions on $G_t$ for each node $t$ of $\decom$ in a bottom-up manner.
Recall that, by Observation~\ref{obs:boundary}, there exists a spanning tree of $G$ covered by a vertex set $S$ if and only if the boundary subgraph for $S$ is connected.
Thus, we will store candidate solutions as boundary subgraphs instead of spanning trees.

Consider the boundary subgraph $H^\ast$ for a vertex subset $S^\ast$ of $G$.
The restriction of $S^\ast$ to the subgraph $G_t$ yields the boundary subgraph $H$ for $S = V(G_t) \cap S^\ast$.
Hence, we guess $S$ at each node $t$.
Moreover, since $H$ is required to be connected at the root of $\decom$, we also need to record the connected components of $H$.
To obtain an FPT algorithm for clique-width, these pieces of information are managed using labels from $[w]$ instead of vertex sets.
Let $[w]$ be the set of labels, and let $\pow = 2^{[w]}$ denote its power set.
For the boundary subgraph $H$ for a vertex set $S$ of $G_t$ and a pair $(C, X) \in \pow \times \pow$, we say that a component $H'$ of $H$ \emph{matches} $(C,X)$ if 
\begin{itemize}
    \item the label set from $V(H')$ are exactly $C$; and 
    \item the label set from $V(H')\cap S$ are exactly $X$.
\end{itemize}
A function $f:\pow \times \pow \to \{0,1,2\}$ is \emph{valid} if there exists the boundary subgraph $H$ for a vertex set $S$ of $G_t$ such that every pair $(C, X) \in \pow \times \pow$ satisfies $f(C,X) = \min \{2,h'\}$, where $h'$ is the number of components $H'$ of $H$.
Otherwise, $f$ is called \emph{invalid}.
Note that we do not distinguish cases where there are at least two such components.
We also note that whenever a component $H'$ matches $(C, X)$, we have $X \subseteq C$.
Consequently, it is not necessary to consider a pair $(C, X)$ with $X \setminus C \neq \emptyset$.

For each node $t$ of $\decom$, we define $\mathrm{dp}_{t}$ as a DP table that, for each valid function $f:\pow \times \pow \to \{0,1,2\}$, stores the minimum size of vertex sets $S$ such that the number of components in the boundary subgraph for $S$ corresponds to $f$.
For an invalid function $f$, we set $\mathrm{dp}_{t}(f) = +\infty$.
The number of possible pairs $(C, X) \in \pow \times \pow$ is bounded by $4^w$.
The function $f$ independently assigns one of three values $\{0,1,2\}$ to each pair $(C, X)$, and hence the number of possible functions is $3^{4^w}$.
Therefore, the size of the DP table is $2^{2^{O(w)}}$ for each node $t$.

At the root node $r$ of $\decom$, the desired boundary subgraph $H$ must be connected.
Therefore, we look up $\mathrm{dp}_{r}(f)$ for every function $f: \pow \times \pow \to \{0,1,2\}$ such that $f(C, X) = 1$ for exactly one pair $(C,X) \in \pow \times \pow$ and $f(C', X') = 0$ for every other pair $(C',X')$.
Let $F$ be the set of functions $f$ satisfying the above conditions.
An instance $(G,k)$ of \mcst\ is a yes-instance if and only if  $\min\{\mathrm{dp}_r(f) \mid f \in F\} \le k$.
\subsection{Interval graphs} \label{sec:interval} 
This section shows the following theorem.

\begin{restatable}{theorem}{theoreminterval} 
     {\mcst} can be solved in linear time for interval graphs.
    \label{the:interval}
\end{restatable}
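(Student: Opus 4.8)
The plan is to combine \cref{obs:boundary} with a structural characterization of connected boundary subgraphs and then exploit a sorted interval representation. By \cref{obs:boundary}, an instance $(G,k)$ of \mcst\ is a yes-instance if and only if $G$ has a vertex subset $S$ with $|S|\le k$ whose boundary subgraph $H$ is connected; since $G$ is connected with at least two vertices, such an $H$ has no isolated vertex. I would first establish a ``path-compression'' lemma valid for arbitrary graphs: $H$ is connected if and only if $S$ is a dominating set of $G$ and $S$ induces a connected subgraph of the square $G^{2}$ (the graph on $V(G)$ in which two vertices are adjacent when their distance in $G$ is at most $2$). For the forward direction, every edge of $H$ meets $S$, so along any path of $H$ no two consecutive vertices lie outside $S$; hence the subsequence of $S$-vertices visited by the path is a walk in $G^{2}[S]$, and, since every vertex of $G$ is joined in $H$ to a vertex of $S$, connectivity of $H$ gives both that $S$ dominates $G$ and that $G^{2}[S]$ is connected. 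The converse is immediate because adjacent or commonly dominated pairs in $S$ are connected in $H$. Thus \mcst\ is equivalent to computing a minimum set that dominates $G$ and is connected in $G^{2}$.

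Next I would fix a linear-time-computable interval model of $G$ (equivalently its clique path $Q_{1},\dots,Q_{p}$) with all endpoints sorted, and write each vertex as an interval $[l_{v},r_{v}]$. Grouping the intervals of a candidate set $S$ into maximal \emph{runs} (maximal subfamilies whose union is a single interval) and listing them left to right as $R_{1},\dots,R_{q}$ with $R_{i}$ spanning $[\alpha_{i},\beta_{i}]$, I would prove that $S$ is connected in $G^{2}$ precisely when, for each $i<q$, some interval of $G$ contains the gap $[\beta_{i},\alpha_{i+1}]$ (such a ``bridge'' vertex is necessarily outside $S$, since otherwise $R_{i}$ and $R_{i+1}$ would merge into one run). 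Together with the standard behaviour of domination on interval graphs --- for the leftmost still-undominated vertex one should take a neighbour whose interval reaches farthest to the right --- this reduces \mcst\ to a single left-to-right sweep over $Q_{1},\dots,Q_{p}$ that carries only a constant amount of state: the leftmost position not yet guaranteed to be dominated, the right endpoint attained by the current run of $S$, whether that run has already been bridged to the previous runs, and the farthest right endpoint among intervals available to bridge the next gap. Each clique is handled by a bounded case analysis --- extend the current run, open a new run and commit to bridging it across the preceding gap, or repair domination by a farthest-reaching vertex --- always preferring the option that reaches farthest right; correctness follows from an exchange argument, and since each vertex and endpoint is touched $O(1)$ times, the whole computation (after the linear-time preprocessing that produces the model) runs in linear time.

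The step I expect to be the main obstacle is making this single sweep simultaneously respect the covering constraint (domination in $G$) and the connectivity constraint (connectivity in $G^{2}$, realized through bridge vertices that are \emph{not} themselves selected). A minimum dominating set of $G$ need not be connected in $G^{2}$, while a connected dominating set of $G$ may be strictly larger than the optimum --- both already visible for a path on six vertices --- so the greedy must balance two opposing pressures: extending a run to keep it bridgeable can waste domination potential, and picking a far-reaching vertex for domination can leave a run unbridgeable. The technical heart is the exchange argument showing that it is never harmful to push every run, and the domination frontier, as far right as possible while retaining the ability to bridge the next gap; the remaining work --- the interval/clique-path translation and the $O(1)$-state transitions --- is routine.
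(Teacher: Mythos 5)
Your opening characterization is correct and is in fact a different key lemma from the one the paper uses: you reduce connectivity of the boundary subgraph to ``$S$ dominates $G$ and $G^2[S]$ is connected,'' whereas the paper works with the cut characterization of \cref{lem:cut} (for every nonempty $C\subset V$ there is a crossing edge with an endpoint in $S$). Both are legitimate restatements of \cref{obs:boundary}, and your run/gap analysis of $G^2$-connectivity on an interval model is also sound. However, the proof stops exactly where the difficulty begins. You never pin down the greedy rule: ``always preferring the option that reaches farthest right'' is not well defined when the vertex that dominates farthest to the right and the vertex that keeps the current run bridgeable are different, and you explicitly defer the exchange argument that would justify whichever choice you make. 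As it stands, the algorithm is a description of the state to be maintained, not a procedure, and the ``technical heart'' you name is precisely the part that is absent.

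For comparison, the paper's \cref{code:greedy} resolves the tension you identify with one concrete rule: let $t_1$ be the smallest index not yet reached (the domination obligation), $t_2$ the largest index already reached (the connectivity frontier), set $t=\min\{t_1,t_2\}$, and add to $S$ the neighbor of $v_t$ with the largest index. In other words, one always services the \emph{leftmost} outstanding obligation, whichever type it is, and from that point extends as far right as possible; this single rule simultaneously guarantees domination and keeps the growing boundary subgraph connected. The optimality proof (\cref{lem:interval} together with \cref{clm:alttree}) is then an exchange argument on the first index at which the greedy set and a hypothetical smaller solution differ, supported by monotonicity lemmas showing $s^1<t^2<s^2<\cdots$ and that $\Vint{1}{s^i}\subseteq V_T^i$. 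Something of comparable precision is needed to turn your sketch into a proof; in particular you would have to show that an optimal solution can always be ``pushed right'' one selected vertex at a time without breaking either the domination of already-passed positions or the bridgeability of the gap just created, and that when this pushing process gets stuck the greedy set is already optimal. Until that rule and that exchange are written out, the argument has a genuine gap.
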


A sequence $(v_1, v_2, \ldots, v_n)$ of the vertices in an $n$-vertex graph $G$ is an \emph{interval ordering} $I$ if it satisfies the following condition: for any three integers $a,b,c\in [n]$ with $a<b<c$, if $v_av_c\in E(G)$, then $v_bv_c\in E(G)$.
A graph is an \emph{interval graph} if it has an interval ordering.
An interval ordering of a given interval graph can be computed in linear time~\cite{HsuW99}.
We give \cref{code:greedy} to solve {\mcst} for interval graphs as the proof of \Cref{the:interval}.
\begin{algorithm}[th]
	\caption{Solving \mcst\ on interval graphs}
	\label{code:greedy}
	\begin{algorithmic}[1]
    \Input An interval ordering $I = (v_1,v_2,\ldots, v_n)$ of a connected interval graph $G$
    \Output A minimum vertex cover $S$ among spanning trees of $G$
	\State $S\leftarrow\emptyset$, $V_T \leftarrow \{v_1\}$ 
		\While{$V_T\ne V(G)$} 
            \State $t_1\leftarrow\min\{i\mid v_i\notin V_T\}$
            \State $t_2\leftarrow \max\{i\mid v_i\in V_T\}$
            \State $t\leftarrow\min\{t_1,t_2\}$
			\State $s \leftarrow \max \{i \mid v_i \in N[v_t]\}$
			\State $S\leftarrow S \cup \{v_s\}$
			\State $V_T \leftarrow V_T \cup N[v_s]$
		\EndWhile 
		\State \Return $S$
	\end{algorithmic}
\end{algorithm}

Observe that \cref{code:greedy} can be implemented to run in linear time. 
The following lemma completes the proof of \cref{the:interval}.
\begin{lemma}[$\ast$]\label{lem:interval}
    \cref{code:greedy} outputs a minimum vertex cover among spanning trees of a connected interval graph.
\end{lemma}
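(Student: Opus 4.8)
The plan is to invoke Observation~\ref{obs:boundary} and reduce \cref{lem:interval} to two claims about the set $S$ returned by \cref{code:greedy}: (i) the boundary subgraph for $S$ is connected (so $S$ is a vertex cover of some spanning tree, whence $|S|\ge\vcnum{G}$), and (ii) $|S|\le\vcnum{G}$. Throughout I would use two elementary facts about an interval ordering $(v_1,\dots,v_n)$: since $v_av_c\in E$ and $a<b<c$ imply $v_bv_c\in E$, the ``left part'' $N[v_c]\cap\{v_1,\dots,v_c\}$ of each closed neighbourhood is a contiguous block $\{v_a,\dots,v_c\}$; and, as $G$ is connected, for every $i<n$ there is an edge between $\{v_1,\dots,v_i\}$ and $\{v_{i+1},\dots,v_n\}$.

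For (i) I would maintain, through the \textbf{while}-loop, the invariants $(\mathrm{I}_1)$ $V_T=\{v_1\}\cup\bigcup_{v\in S}N[v]$ (immediate from the code); $(\mathrm{I}_2)$ $\{v_1,\dots,v_\sigma\}\subseteq V_T$, where $v_\sigma$ is the last vertex placed in $S$; and $(\mathrm{I}_3)$ all of $V_T$ lies in one component of the boundary subgraph for $S$. Invariant $(\mathrm{I}_2)$ follows from left-contiguity (the selected $v_s$ has $v_t\in N[v_s]$, hence $N[v_s]\supseteq\{v_t,\dots,v_s\}$) together with a short case check on $t=\min\{t_1,t_2\}$; it forces the selected indices to strictly increase, forces $t_1$ to exceed the index of every earlier selection, and — using connectivity of $G$ — forces $V_T$ to grow every iteration, so the loop halts. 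The heart of $(\mathrm{I}_3)$ is that each newly selected $v_s$ has a neighbour already in $V_T$: if $t=t_2$ then $v_{t_2}\in V_T\cap N[v_s]$ immediately; if $t=t_1<t_2$ then by $(\mathrm{I}_1)$ some earlier selection $v_q$, with $q<t_1$ by $(\mathrm{I}_2)$, satisfies $v_{t_2}\in N[v_q]$, so the interval property along $q<t_1<t_2$ yields $v_{t_1}v_{t_2}\in E$, whence the index of $v_s$ is at least $t_2$ and left-contiguity gives $v_{t_2}\in N[v_s]\cap V_T$. In either case the star around $v_s$ merges with the component of $V_T$. At termination $V_T=V$, so by $(\mathrm{I}_3)$ the boundary subgraph for $S$ is connected.

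For (ii) I would run an exchange argument. Fix a minimum set $S^\ast$ with connected boundary subgraph, so $|S^\ast|=\vcnum{G}$. Inductively maintain, for $j=0,1,\dots$, a set $S^\ast_j$ with connected boundary subgraph, $|S^\ast_j|=|S^\ast|$, containing the first $j$ vertices selected by the algorithm, starting from $S^\ast_0=S^\ast$. If the $(j{+}1)$-st iteration selects $v_s$ and $v_s\in S^\ast_j$, take $S^\ast_{j+1}=S^\ast_j$. Otherwise $S^\ast_j$ dominates $v_t$, so it contains some $v_a\in N[v_t]$ (chosen outside the algorithm's current selections), and I set $S^\ast_{j+1}=(S^\ast_j\setminus\{v_a\})\cup\{v_s\}$. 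Domination is preserved because every vertex of index below $t$ is already dominated by the algorithm's current selections $\subseteq S^\ast_j$ (by $(\mathrm{I}_1)$ and the definition of $t_1$, together with $t\le t_1$), so deleting $v_a$ loses nothing to the left of $v_t$; and among all neighbours of $v_t$ the one of largest index, namely $v_s$, dominates everything of index at least $t$ dominated by any neighbour of $v_t$ — formally $N[v_a]\cap\{v_m:m\ge t\}\subseteq N[v_s]$, a one-line consequence of the interval property — so adding $v_s$ recovers the rest. When the algorithm halts after $q$ iterations, $S^\ast_q\supseteq S$ and $|S^\ast_q|=|S^\ast|$, hence $|S|\le\vcnum{G}$; with (i) this gives $|S|=\vcnum{G}$, so $S$ is a minimum vertex cover among spanning trees.

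The main obstacle is re-establishing that the boundary subgraph of $S^\ast_j$ stays connected after the swap $v_a\mapsto v_s$, i.e.\ that no component is cut off. I would argue that $v_a$ and $v_s$ share the neighbour $v_t$, that $v_s$ dominates at least as far to the right as $v_a$ (the inclusion above), and that whatever $v_a$ dominated to the left of $v_t$ is covered by the algorithm's partial solution $S_j\subseteq S^\ast_j\setminus\{v_a\}$, whose own boundary subgraph is connected by $(\mathrm{I}_3)$; hence any path that used $v_a$ can be rerouted through $v_s$ or through $S_j$. The most delicate configuration is the ``connector'' one — where $V_T$ is a prefix and the algorithm anchors on its last vertex rather than on the first uncovered vertex — since there one cannot simply delete a dominator of the first uncovered vertex, and both the choice of $v_a$ (in particular checking it can be taken outside the algorithm's selections) and the rerouting need extra care.
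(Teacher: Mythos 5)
Your part (i) --- feasibility of the output, proved via loop invariants on $V_T$ and monotonicity of the selected indices --- is essentially the paper's own argument, so no issue there. The gap is in part (ii), and it sits exactly where you place your caveat. Your exchange step removes ``some $v_a\in N[v_t]\cap S^\ast_j$ chosen outside the algorithm's current selections,'' but such a $v_a$ need not exist when $t=t_2$: in that iteration $V_T$ is the prefix $\{v_1,\dots,v_{t_2}\}$ and $v_t=v_{t_2}$ already lies in $V_T$, hence is dominated by $S_j$ itself, so $S^\ast_j$ may dominate $v_t$ only through vertices of $S_j$. Worse, the reason $S^\ast_j$ must ``pay'' something in this iteration is not domination of $v_t$ at all but compatibility of the cut $(V_T,V\setminus V_T)$ in the sense of \cref{lem:cut}, and the covering endpoint of the crossing edge may be a vertex $x\in V_T\setminus S_j$ that is not adjacent to $v_t$ (only the right endpoint $y$ of that edge is guaranteed by the interval property to be a neighbour of $v_{t_2}$). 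So the exchange is not well defined in the one case that matters, and your induction cannot proceed as stated.

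Second, even where $v_a$ exists, preserving \emph{domination} under the swap is not enough: a dominating set need not cover any spanning tree (the paper's remark on $C_6$: two antipodal vertices dominate $C_6$, yet $\vcnum{C_6}=3$). What must be preserved is connectivity of the boundary subgraph, equivalently compatibility of every cut, and your proposal only asserts that ``any path that used $v_a$ can be rerouted,'' which is precisely the nontrivial content. This is where the paper does its real work: it runs the exchange in the opposite direction (morphing the hypothetical better solution $S'$ toward $S$ at the smallest index of $S\triangle S'$), proves a dedicated claim that the swapped set still satisfies the cut condition of \cref{lem:cut} by a case analysis on how an arbitrary cut $C$ meets $V_T^j$ and $N[v_{s^j}]$, and, at the first index where $S$ has a vertex that $S'$ lacks, exhibits an explicit incompatible cut ($\{v_{t_1^j}\}$ or $V_T^{j-1}$) to contradict feasibility of $S'$. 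You need an analogue of both pieces; without them the optimality half of the lemma is unproven.
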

\vspace{1em}
\noindent \textbf{Acknowledgments.} We thank anonymous reviewers for their valuable comments and suggestions which greatly helped to improve the presentation of this paper.

 \bibliographystyle{plainurl}
 \bibliography{reference}

@PREAMBLE{ {\providecommand{\noopsort}[1]{}} }

@article{Courcelletheorem,
  author       = {Bruno Courcelle and
                  Johann A. Makowsky and
                  Udi Rotics},
  title        = {Linear Time Solvable Optimization Problems on Graphs of Bounded Clique-Width},
  journal      = {Theory Comput. Syst.},
  volume       = {33},
  number       = {2},
  pages        = {125--150},
  year         = {2000},
  doi          = {10.1007/S002249910009},
}

@inproceedings{DBLP:conf/cocoon/BergK10,
  author       = {Mark {\noopsort{Berg}de Berg} and
                  Amirali Khosravi},
  editor       = {My T. Thai and
                  Sartaj Sahni},
  title        = {Optimal Binary Space Partitions in the Plane},
  booktitle    = {Computing and Combinatorics, 16th Annual International Conference,
                  {COCOON} 2010, Nha Trang, Vietnam, July 19-21, 2010. Proceedings},
  series       = {Lecture Notes in Computer Science},
  volume       = {6196},
  pages        = {216--225},
  publisher    = {Springer},
  year         = {2010},
  url          = {https://doi.org/10.1007/978-3-642-14031-0\_25},
  doi          = {10.1007/978-3-642-14031-0\_25},
  timestamp    = {Tue, 14 May 2019 10:00:35 +0200},
  biburl       = {https://dblp.org/rec/conf/cocoon/BergK10.bib},
  bibsource    = {dblp computer science bibliography, https://dblp.org}
}

@article{ChlebikC08,
  author       = {Miroslav Chleb{\'{\i}}k and
                  Janka Chleb{\'{\i}}kov{\'{a}}},
  title        = {Approximation hardness of dominating set problems in bounded degree
                  graphs},
  journal      = {Inf. Comput.},
  volume       = {206},
  number       = {11},
  pages        = {1264--1275},
  year         = {2008},
  doi          = {10.1016/J.IC.2008.07.003},
}

@inproceedings{KaurM20,
  author       = {Chamanvir Kaur and
                  Neeldhara Misra},
  editor       = {Manoj Changat and
                  Sandip Das},
  title        = {On the Parameterized Complexity of Spanning Trees with Small Vertex Covers},
  booktitle    = {Algorithms and Discrete Applied Mathematics - 6th International Conference, {CALDAM} 2020},
  series       = {Lecture Notes in Computer Science},
  volume       = {12016},
  pages        = {427--438},
  publisher    = {Springer},
  year         = {2020},
  doi          = {10.1007/978-3-030-39219-2\_34},
}

@InProceedings{10.1007/978-3-540-75520-3_16,
author="Hlin{\v{e}}n{\'y}, Petr
and Oum, Sang-il",
editor="Arge, Lars
and Hoffmann, Michael
and Welzl, Emo",
title="Finding Branch-Decompositions and Rank-Decompositions",
booktitle="Algorithms -- ESA 2007",
year="2007",
publisher="Springer Berlin Heidelberg",
address="Berlin, Heidelberg",
pages="163--174",
abstract="We present a new algorithm that can output the rank-decomposition of width at most k of a graph if such exists. For that we use an algorithm that, for an input matroid represented over a fixed finite field, outputs its branch-decomposition of width at most k if such exists. This algorithm works also for partitioned matroids. Both these algorithms are fixed-parameter tractable, that is, they run in time O(n3) for each fixed value of k where n is the number of vertices / elements of the input. (The previous best algorithm for construction of a branch-decomposition or a rank-decomposition of optimal width due to Oum and Seymour [Testing branch-width. J. Combin. Theory Ser. B, 97(3) (2007) 385--393] is not fixed-parameter tractable).",
isbn="978-3-540-75520-3"
}

@article{OUM2006514,
title = {Approximating clique-width and branch-width},
journal = {Journal of Combinatorial Theory, Series B},
volume = {96},
number = {4},
pages = {514-528},
year = {2006},
issn = {0095-8956},
doi = {https://doi.org/10.1016/j.jctb.2005.10.006},
author = {{Sang-il} Oum and Paul Seymour},
keywords = {Clique-width, Branch-width, Rank-width, Submodular functions, Matroid},
abstract = {We construct a polynomial-time algorithm to approximate the branch-width of certain symmetric submodular functions, and give two applications. The first is to graph “clique-width.” Clique-width is a measure of the difficulty of decomposing a graph in a kind of tree-structure, and if a graph has clique-width at most k then the corresponding decomposition of the graph is called a “k-expression.” We find (for fixed k) an O(n9logn)-time algorithm that, with input an n-vertex graph, outputs either a (23k+2−1)-expression for the graph, or a witness that the graph has clique-width at least k+1. (The best earlier algorithm, by Johansson [Ö. Johansson, logn-approximative NLCk-decomposition in O(n2k+1) time (extended abstract), in: Graph-Theoretic Concepts in Computer Science, Boltenhagen, 2001, in: Lecture Notes in Comput. Sci., vol. 2204, Springer, Berlin, 2001, pp. 229–240], constructs a 2klogn-expression for graphs of clique-width at most k.) It was already known that several graph problems, NP-hard on general graphs, are solvable in polynomial time if the input graph comes equipped with a k-expression (for fixed k). As a consequence of our algorithm, the same conclusion follows under the weaker hypothesis that the input graph has clique-width at most k (thus, we no longer need to be provided with an explicit k-expression). Another application is to the area of matroid branch-width. For fixed k, we find an O(n3.5)-time algorithm that, with input an n-element matroid in terms of its rank oracle, either outputs a branch-decomposition of width at most 3k−1 or a witness that the matroid has branch-width at least k+1. The previous algorithm by Hliněný [P. Hliněný, A parametrized algorithm for matroid branch-width, SIAM J. Comput. 35 (2) (2005) 259–277] works only for matroids represented over a finite field.}
}

@InProceedings{10.1007/11604686_5,
author="Oum, Sang-il",
editor="Kratsch, Dieter",
title="Approximating Rank-Width and Clique-Width Quickly",
booktitle="Graph-Theoretic Concepts in Computer Science",
year="2005",
publisher="Springer Berlin Heidelberg",
address="Berlin, Heidelberg",
pages="49--58",
abstract="Rank-width is defined by Seymour and the author to investigate clique-width; they show that graphs have bounded rank-width if and only if they have bounded clique-width. It is known that many hard graph problems have polynomial-time algorithms for graphs of bounded clique-width, however, requiring a given decomposition corresponding to clique-width (k-expression); they remove this requirement by constructing an algorithm that either outputs a rank-decomposition of width at most f(k) for some function f or confirms rank-width is larger than k in O(|V|9log |V|) time for an input graph G = (V,E) and a fixed k. This can be reformulated in terms of clique-width as an algorithm that either outputs a (21{¥thinspace}+{¥thinspace}f(k)--1)-expression or confirms clique-width is larger than k in O(|V|9log |V|) time for fixed k.",
isbn="978-3-540-31468-4"
}

@inproceedings{AngelBCK15,
  author       = {Eric Angel and
                  Evripidis Bampis and
                  Vincent Chau and
                  Alexander V. Kononov},
  editor       = {Khaled M. Elbassioni and
                  Kazuhisa Makino},
  title        = {Min-Power Covering Problems},
  booktitle    = {Algorithms and Computation - 26th International Symposium, {ISAAC} 2015},
  series       = {Lecture Notes in Computer Science},
  volume       = {9472},
  pages        = {367--377},
  publisher    = {Springer},
  year         = {2015},
  doi          = {10.1007/978-3-662-48971-0\_32},
}

@article{FukunagaM19,
  author       = {Takuro Fukunaga and
                  Takanori Maehara},
  title        = {Computing a tree having a small vertex cover},
  journal      = {Theor. Comput. Sci.},
  volume       = {791},
  pages        = {48--61},
  year         = {2019},
  doi          = {10.1016/J.TCS.2019.05.018},
}

@inproceedings{Ravi94,
  author       = {R. Ravi},
  title        = {Rapid Rumor Ramification: Approximating the minimum broadcast time (Extended Abstract)},
  booktitle    = {35th Annual Symposium on Foundations of Computer Science},
  pages        = {202--213},
  publisher    = {{IEEE} Computer Society},
  year         = {1994},
  doi          = {10.1109/SFCS.1994.365693},
}

@incollection{Rosamond16,
  author       = {Frances A. Rosamond},
  editor       = {Ming{-}Yang Kao},
  title        = {Max Leaf Spanning Tree},
  booktitle    = {Encyclopedia of Algorithms},
  publisher    = {Springer},
  pages        = {1211--1215},
  year         = {2016},
  doi          = {10.1007/978-1-4939-2864-4\_228},
}

@article{SalamonW08,
  author       = {G{\'{a}}bor Salamon and
                  G{\'{a}}bor Wiener},
  title        = {On finding spanning trees with few leaves},
  journal      = {Inf. Process. Lett.},
  volume       = {105},
  number       = {5},
  pages        = {164--169},
  year         = {2008},
  doi          = {10.1016/J.IPL.2007.08.030},
}

@article{HoLCW91,
  author       = {Jan{-}Ming Ho and
                  D. T. Lee and
                  Chia{-}Hsiang Chang and
                  C. K. Wong},
  title        = {Minimum Diameter Spanning Trees and Related Problems},
  journal      = {{SIAM} J. Comput.},
  volume       = {20},
  number       = {5},
  pages        = {987--997},
  year         = {1991},
  doi          = {10.1137/0220060},
}

@article{CaroWY00,
  author       = {Yair Caro and
                  Douglas B. West and
                  Raphael Yuster},
  title        = {Connected Domination and Spanning Trees with Many Leaves},
  journal      = {{SIAM} J. Discret. Math.},
  volume       = {13},
  number       = {2},
  pages        = {202--211},
  year         = {2000},
  doi          = {10.1137/S0895480199353780},
}

@article{FellowsLMMRS09,
  author       = {Michael R. Fellows and
                  Daniel Lokshtanov and
                  Neeldhara Misra and
                  Matthias Mnich and
                  Frances A. Rosamond and
                  Saket Saurabh},
  title        = {The Complexity Ecology of Parameters: An Illustration Using Bounded Max Leaf Number},
  journal      = {Theory Comput. Syst.},
  volume       = {45},
  number       = {4},
  pages        = {822--848},
  year         = {2009},
  doi          = {10.1007/S00224-009-9167-9},
}

@article{FrickG04,
  author       = {Markus Frick and
                  Martin Grohe},
  title        = {The complexity of first-order and monadic second-order logic revisited},
  journal      = {Ann. Pure Appl. Log.},
  volume       = {130},
  number       = {1-3},
  pages        = {3--31},
  year         = {2004},
  doi          = {10.1016/J.APAL.2004.01.007},
}

@book{CyganFKLMPPS15,
  author       = {Marek Cygan and
                  Fedor V. Fomin and
                  Lukasz Kowalik and
                  Daniel Lokshtanov and
                  D{\'{a}}niel Marx and
                  Marcin Pilipczuk and
                  Michal Pilipczuk and
                  Saket Saurabh},
  title        = {Parameterized Algorithms},
  publisher    = {Springer},
  year         = {2015},
  doi          = {10.1007/978-3-319-21275-3},
}

@article{HsuW99,
  author       = {Wen{-}Lian Hsu and
                  Tze{-}Heng Ma},
  title        = {Fast and Simple Algorithms for Recognizing Chordal Comparability Graphs
                  and Interval Graphs},
  journal      = {{SIAM} J. Comput.},
  volume       = {28},
  number       = {3},
  pages        = {1004--1020},
  year         = {1999},
  url          = {https://doi.org/10.1137/S0097539792224814},
  doi          = {10.1137/S0097539792224814},
  timestamp    = {Wed, 14 Nov 2018 10:45:06 +0100},
  biburl       = {https://dblp.org/rec/journals/siamcomp/HsuM99.bib},
  bibsource    = {dblp computer science bibliography, https://dblp.org}
}

@article{doi:10.1142/S0129054118500168,
author = {Darmann, Andreas and D\"{o}cker, Janosch and Dorn, Britta},
title = {The Monotone Satisfiability Problem with Bounded Variable Appearances},
journal = {International Journal of Foundations of Computer Science},
volume = {29},
number = {06},
pages = {979-993},
year = {2018},
doi = {10.1142/S0129054118500168},
    abstract = { The prominent Boolean formula satisfiability problem SAT is known to be NP-complete even for very restricted variants such as 3-SAT, Monotone 3-SAT, or Planar 3-SAT, or instances with bounded variable appearance. We settle the computational complexity status for two variants with bounded variable appearance: We show that Planar Monotone Sat — the variant of Monotone Sat in which the incidence graph is required to be planar — is NP-complete even if each clause consists of at most three distinct literals and each variable appears exactly three times, and that Monotone Sat is NP-complete even if each clause consists of three distinct literals and each variable appears exactly four times in the formula. The latter confirms a conjecture stated in scribe notes [7] of an MIT lecture by Eric Demaine. In addition, we provide hardness results with respect to bounded variable appearances for two variants of Planar Monotone Sat. }
}

@article{10.5555/1963635.1963641,
author = {Valiant, Leslie G.},
title = {Universality considerations in {VLSI} circuits},
year = {1981},
publisher = {IEEE Computer Society},
address = {USA},
volume = {30},
number = {2},
issn = {0018-9340},
abstract = {The problem of embedding the interconnection pattern of a circuit into a two-dimensional surface of minimal area is discussed. Since even for some natural patterns graphs containing m connections may require Ω(m2) area, in order to achieve compact embeddings restricted classes of graphs have to be considered. For example, arbitrary trees (of bounded degree) can be embedded in linear area without edges crossing over. Planar graphs can be embedded efficiently only if crossovers are allowed in the embedding.},
journal = {IEEE Trans. Comput.},
month = feb,
pages = {135–140},
numpages = {6},
keywords = {universality, graphs, embeddings, circuits, area minimization, VLSI}
}

@inproceedings{LokshtanovMPRS13,
  author       = {Daniel Lokshtanov and
                  Neeldhara Misra and
                  Geevarghese Philip and
                  M. S. Ramanujan and
                  Saket Saurabh},
  editor       = {Gregory Z. Gutin and
                  Stefan Szeider},
  title        = {Hardness of $r$-dominating set on Graphs of Diameter $(r + 1)$},
  booktitle    = {Parameterized and Exact Computation - 8th International Symposium, {IPEC} 2013},
  series       = {Lecture Notes in Computer Science},
  volume       = {8246},
  pages        = {255--267},
  publisher    = {Springer},
  year         = {2013},
  doi          = {10.1007/978-3-319-03898-8\_22},
}

@article{KeilB04,
author = {J.Mark Keil and Patrice Belleville},
title = {Dominating the complements of bounded tolerance graphs and the complements of trapezoid graphs},
journal = {Discrete Applied Mathematics},
volume = {140},
number = {1},
pages = {73-89},
year = {2004},
issn = {0166-218X},
doi = {https://doi.org/10.1016/j.dam.2003.04.004},
}

@article{KratschS93,
  author       = {Dieter Kratsch and
                  Lorna Stewart},
  title        = {Domination on Cocomparability Graphs},
  journal      = {{SIAM} J. Discret. Math.},
  volume       = {6},
  number       = {3},
  pages        = {400--417},
  year         = {1993},
  doi          = {10.1137/0406032},
}

@article{GolumbicR00,
  author       = {Martin Charles Golumbic and
                  Udi Rotics},
  title        = {On the Clique-Width of Some Perfect Graph Classes},
  journal      = {Int. J. Found. Comput. Sci.},
  volume       = {11},
  number       = {3},
  pages        = {423--443},
  year         = {2000},
  doi          = {10.1142/S0129054100000260},
}

\newpage
\appendix
\section{Omitted proofs in \Cref{sec:hard}}
\subsection{Proof of \Cref{lem:planarconstruction}}\label{app:lem:planarconstruction}
\begin{proof}
Obviously, the variable gadgets and connector gadgets are planar and bipartite.
It is not hard to see that $G$ is also bipartite.
Moreover, since $\recrep$ is drawn according to a rectilinear representation, the gadgets do not introduce any edge crossings.
Hence, $G$ is planar.

Observe that each vertex of variable gadgets and connector gadgets has degree at most~$4$.
Since $\varphi$ is a 3-CNF formula, each vertex in $W$ has degree at most~$3$.
Moreover, the 3-bounded restriction of $\varphi$ ensures that each of the vertices $u_i$ and $\overline{u_i}$ in the variable gadget $U_i$ is adjacent to at most two vertices in $W$.
By the construction of $G$, the connector gadget introduces an edge from $q_i$ to the vertex among $\{u_{i+1},\overline{u_{i+1}}\}$ that has fewer neighbors in $W$.
Therefore, the maximum degree of $G$ is at most~$4$.
\end{proof}

\subsection{Proof of \Cref{lem:planarcorrectness}}\label{app:lem:planarcorrectness}
\begin{proof}
    (Only-if direction) 
Suppose that $\varphi$ is satisfiable.
Let $A:\mathcal{X} \rightarrow{\{\text{True}, \text{False}\}}$ be a satisfying assignment of $\varphi$.
If $A(x_i) = \mathrm{True}$ (resp.\ $A(x_i) = \mathrm{False}$), let $C_i = \{ t_i, u_i \}$ (resp.\ $C_i = \{ t_i, \overline{u_i} \}$) and let $C = (\bigcup_{i\in[n]} C_i) \cup \{q_k \mid k \in [n-1]\}$.
Observe that the boundary subgraph for $C$  is connected in each variable gadget and connector gadget, and $|C|=3n-1$.
Since $\varphi$ is satisfiable and $C_i$ is defined according to the satisfying assignment $A$ of $\varphi$, for each $j \in [m]$, the vertex $w_j \in W$ is adjacent to some vertex in $C$.
Therefore, the boundary subgraph for $C$ of $G$ is connected, and $C$ produces a solution to $(G,3n-1)$ from Observation~\ref{obs:boundary}.

(If direction)
By \Cref{obs:boundary}, we may suppose that there exists a vertex subset $C$ of $G$ such that $|C| \le 3k-1$ and the boundary subgraph $H$ for $C$ is connected.
We also suppose that $C$ contains the vertex $t_i$ for each $i \in [n]$ and the vertex $q_i$ for $i \in [n-1]$ because the vertex $s_i$ and the vertex $\ell_i$ have degree~$1$.
It should be noted that one of $u_i$, $\overline{u_i}$, and $r_i$ must be contained in $C$; otherwise, $H$ would not be connected.
Thus, since $|C| \le 3n-1$, no vertex $w_j \in W$ is contained in $C$.
This implies that for each $w_j \in W$, there exists an integer $i_j \in [n]$ such that either $u_{i_j}$ or $\overline{u_{i_j}}$ is in $C$ and adjacent to $w_j$ on $H$.
We define a truth assignment $A: \mathcal{X} \to \{\mathrm{True}, \mathrm{False}\}$ such that for each $i \in [n]$, $A(x_i) = \mathrm{True}$ (resp.\ $A(x_i) = \mathrm{False}$)  if $u_i \in C$ (resp.\ $\overline{u_i} \in C$).
If $r_i \in C$, then we arbitrarily assign either True or False to $x_i$.
Consequently, $A$ satisfies the Boolean formula $\varphi$.
\end{proof}

\subsection{The embedding geometric representation in \Cref{the:unithardness}}\label{app:unitconstruction}
Let $\mathbb{Z}^2 = \{(a,b) \mid a,b \in \mathbb{Z}\}$ be the integer lattice, and let $G$ be a graph embedded in the plane whose vertices lie on $\mathbb{Z}^2$ and edges are drawn as in \cref{lem:xy}.
Scaling $\mathbb{Z}^2$ to the integer lattice $2\mathbb{Z}^2$, we obtain the graph $G^2$, isomorphic to $G$, in which each line segment has even length.
Recall that our reduction introduced one $3$-cycle per unit of Manhattan distance of each edge in $G$ to construct $G'$.
Here, we insert one $3$-cycle per \emph{two units} of Manhattan distance.
Due to the scaling to $2\mathbb{Z}^2$, this does not change the resulting graph $G'$.

We classify line segments consisting of edges in $G^2$ into the following three cases:
    \begin{itemize}
        \item[(a)] two vertices of $G^2$ are placed at both $(a,b)$ and $(a+2,b)$;
        \item[(b)] exactly one vertex of $G^2$ is placed at $(a,b)$ or $(a+2,b)$; and 
        \item[(c)] no vertex is placed at $(a,b)$ or $(a+2,b)$, 
    \end{itemize}
where $(a,b), (a+2,b)\in 2\mathbb{Z}^2$ and a line segment joins them.
\cref{fig:diskdist} illustrates how to replace a line segment of length~$2$ with a $3$-cycle and how to provide the corresponding geometric representation.
Consider the square $WXYZ$ with corners $W = (a,b)$, $X = (a+1,b+1)$, $Y = (a+2, b)$, and $Z = (a+1, b-1)$.

\begin{figure}[t]
    \centering
    \includegraphics[width=\linewidth]{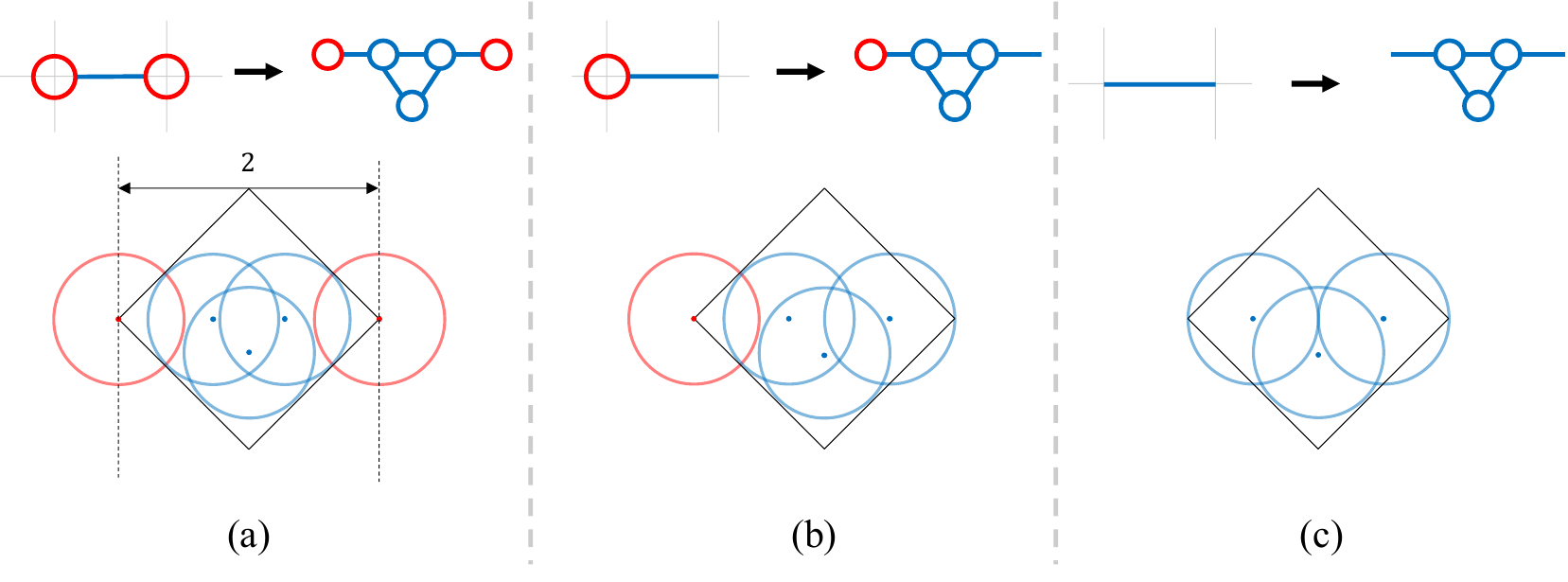}
    \caption{The 3-cycles replacing line segments of length $2$ and the corresponding geometric representations for the three cases: (a) two vertices are placed at both integer coordinates; (b) exactly one vertex is placed; and (c) no vertex is placed.}
    \label{fig:diskdist}
\end{figure}
\begin{figure}[t]
    \centering
    \includegraphics[width=\linewidth]{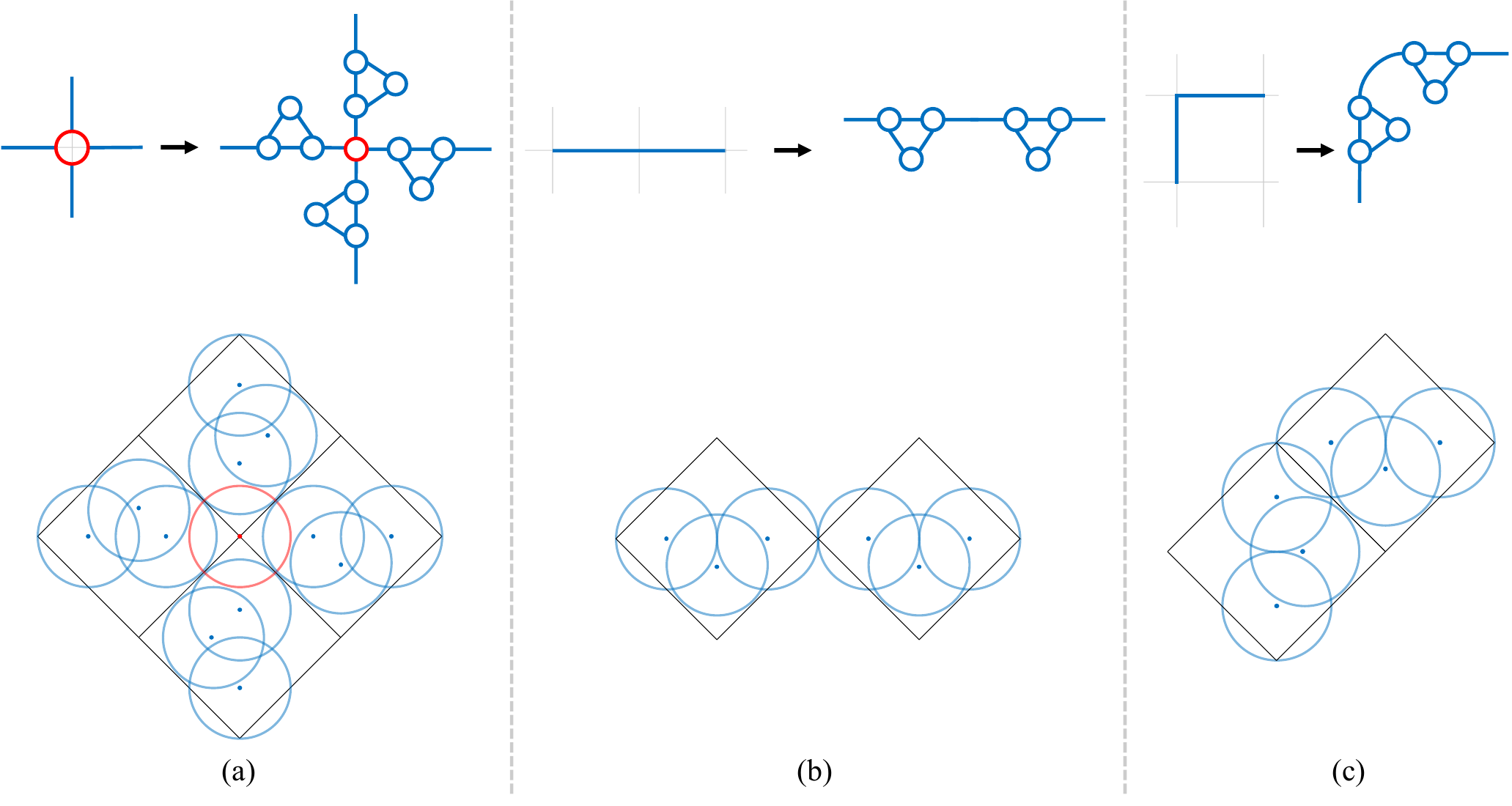}
    \caption{The 3-cycles replacing line segments and the corresponding geometric representations for the three cases: (a) four line segments incident to a vertex of degree~$4$; (b) two line segments drawn straight; and (c) two line segments making a right-angle turn. In any case, no undesired intersections occur.}
    \label{fig:georep}
\end{figure}

When replacing a line segment of type (a), which corresponds to an edge $uv$ in $G^2$, with a $3$-cycle $\langle \ell, s, r\rangle$, we add five disks $D_u$, $D_\ell$, $D_s$, $D_r$, and $D_v$ with diameter~$1$ (see \cref{fig:diskdist}(a)).
The centers of $D_u$ and $D_v$ are located at points $W$ and $Y$, respectively.
Furthermore, the centers of $D_\ell$, $D_s$, and $D_r$ are located at the points $(a+\sqrt2/2+\varepsilon, b)$, $(a+1, b-1+\sqrt2/2+\varepsilon)$, and $(a+2-\sqrt2/2-\varepsilon, b)$, respectively, where $\varepsilon$ is a sufficiently small positive number.
Observe that the $3$-cycle $\langle \ell, s, r\rangle$ together with the edges $u\ell$ and $rv$ is represented by these disks.
The key point is that the disks $D_\ell$, $D_s$, and $D_r$ are properly contained in the square $WXYZ$, and hence they do not intersect with disks that represent other $3$-cycles.

Consider a line segment of type (b).
Without loss of generality, assume that a vertex $u$ of $G^2$ is located at point $A$.
We add four disks $D_u$, $D_\ell$, $D_s$, and $D_r$ with diameter~$1$ (see \cref{fig:diskdist}(b)).
The disks $D_u$, $D_\ell$, and $D_s$ are defined in the same way as in case (a).
In contrast, the center of $D_r$ is located at the point $(a+1.5, b)$.
The disks $D_\ell$ and $D_s$ are properly contained in the square $WXYZ$, whereas the disk $D_r$ contains the point $(a+2,b)$ and intersects with a disk in the next $3$-cycle.

For a line segment of type (c), the centers of three disks $D_\ell$, $D_s$, and $D_r$ are located at the points $(a+0.5, b)$, $(a+1, b-1+\sqrt2/2+\varepsilon)$, and $(a+1.5, b)$, respectively (see \cref{fig:diskdist}(c)).
In this case, $D_\ell$ and $D_r$ intersect disks in the previous and next cycles.

In the case where a line segment joins the points $(a,b), (a,b+2) \in 2\mathbb{Z}^2$, we also define disks in the same way.

We replace all line segments in $G^2$ with disks as above.
The disks are said to have \emph{undesired intersections} if it does not yield the geometric representation of $G'$.
We show that no undesired intersections among the disks arise.
Observe that, if such undesired intersections occur, $3$-cycles are involved.
Consider three disks $D_\ell$, $D_s$, and $D_r$ that correspond to a $3$-cycle $\langle \ell, s, r \rangle$ in $G'$.
We may assume that the center of $D_\ell$ is located at the point $(a+0.5, b)$ and hence it contains the point $(a,b) \in 2\mathbb{Z}^2$.
Otherwise, as mentioned earlier, there are no undesired intersections involving $D_\ell$ (see also \cref{fig:georep}(a)).
Note that $D_\ell$ means the existence of the line segment $L_1$ from $(a,b)$ to $(a+2,b)$ that is part of an edge in $G^2$. 

In the case where there exists the line segment from $(a-2,b)$ to $(a,b)$, it is straightforward that no undesired intersections occur (see \cref{fig:georep}(b)).
Note that there exists no disk whose center is located in the square $WXYZ$ with corners $W = (a,b)$, $X =(a+1,b+1)$, $Y = (a,b+2)$, and $Z = (a-1,b+1)$; otherwise, this would imply that three line segments extend from the point $(a,b)$ or there exists a line segment on the line $b+1$, which contradicts the embedding of $G^2$. 
This claim is also applicable to the square with corners $(a,b)$, $(a+1,b-1)$, $(a,b-2)$, and $(a-1,b-1)$.

Let $L_2$ be the line segment from $(a,b)$ to $(a,b-2)$ and suppose that it corresponds to a $3$-cycle $\langle \ell', s', r' \rangle$ in $G'$.
The two line segments $L_1$ and $L_2$ make a right-angle turn (see \cref{fig:georep}(c)).
Since $D_{s'}$ and $D_{\ell'}$ are located at the points $(a + 1 - \sqrt2/2 - \varepsilon, b-1)$ and $(a, b-1.5)$, the Euclidean distance between $D_\ell$ and $D_{s'}$, as well as that between $D_\ell$ and $D_{\ell'}$, is greater than~$1$.
In the same argument previously mentioned, we also remark that there exists no disk whose center is located in the square with corners $(a,b)$, $(a-1,b+1)$, $(a-2,b)$, and $(a-1,b-1)$, and the square with corners $(a,b)$, $(a+1,b+1)$, $(a,b+2)$, and $(a-1,b+1)$.
Therefore, there are no unintended intersections involving $D_\ell$, and the same argument applies to all disks.
In conclusion, $G'$ is a unit disk graph.

\subsubsection{Proof of \Cref{lem:unit_correctness}} \label{app:unit_correctness}
\begin{proof}
    Let $G_0 = G$.
For an integer $i \in [m]$, where $m$ is the number of edges in $G$, we denote by $G_i$ the graph obtained from $G_{i-1}$ by replacing an edge $uv \in V(G)$ with a sequence of 3-cycles, that is, the graph $G_{uv}$ such that $V(G_{uv})=\{u,v\} \cup \{\langle \ell_i,s_i,r_i\rangle \mid i \in [\alpha]\}$ and $E(G_{uv}) =\{u\ell_1,r_\alpha v\}\cup\{r_i \ell_{i+1}\mid i \in [\alpha-1]\}$, where $\alpha = M(u,v)$.
We prove the following two claims: (a) if there exists a vertex subset $S_{i-1}$ of $G_{i-1}$ such that the boundary subgraph for $S_{i-1}$ is connected, then there exists a vertex subset $S_{i}$ of $G_{i}$ such that $|S_{i}| \le |S_{i-1}| + M(u,v)$ and the boundary subgraph for $S_{i}$ is connected; and (b) if there exists a vertex subset $S_{i}$ of $G_{i}$ such that the boundary subgraph for $S_{i}$ is connected, then there exists a vertex subset $S_{i-1}$ of $G_{i-1}$ such that $|S_{i-1}| \le |S_{i}| - M(u,v)$ and the boundary subgraph for $S_{i-1}$ is connected.
Since $G_0 = G$, $G_m = G'$, and $k' = k + \sum_{uv\in E(G)}M(u,v)$, \Cref{lem:unit_correctness} follows from these claims and \Cref{obs:boundary}.

(The proof of claim (a).)
Suppose that there exists a vertex subset $S_{i-1}$ of $G_{i-1}$ such that the boundary subgraph for $S_{i-1}$ is connected.
Let $\alpha = M(u,v)$, $L = \{\ell_i \mid i \in [\alpha]\}$, and $R = \{r_i \mid i \in [\alpha]\}$.
We define $S_{uv}$ according to whether $u$ and $v$ belong to $S_{i-1}$ as follows:
    \begin{align}
    S_{uv}=
    \begin{cases}
        \{u,v\} \cup R & \text{ if } u,v \in S_{i-1}\\
        \{u\} \cup R & \text{ if } u \in S_{i-1} \text{ and } v \notin S_{i-1}\\
        \{v\} \cup L& \text{ if } u \notin S_{i-1} \text{ and } v \in S_{i-1}\\
        R & \text{ otherwise } 
    \end{cases}
    \label{align:Guv}
\end{align}
Let $S_i = S_{i-1} \cup S_{uv}$.
Clearly, $|S_i| \leq  |S_{i-1}| + M(u,v)$.
We show that the boundary subgraph $H_{i}$ for $S_i$ is connected in $G_{i}$.

If $uv \in E(H_{i-1})$, then we have $u \in S_i$ or $v \in S_i$ from the definition of $H_{i-1}$.
By \cref{align:Guv}, the boundary subgraph for $S_i$ contains $G_{uv}$ as a subgraph.
Thus, the boundary subgraph $H_{i}$ for $S_i$ is also connected in $G_{i}$.
If $uv \notin E(H_{i-1})$, then there exists a vertex $p \in  N_{G_{i-1}}(v) \cap S_{i-1}$.
By \cref{align:Guv}, the boundary subgraph for $S_{uv}$ contains the vertices of $G_{uv}$ except for the vertex $u$.
Since $v$ is adjacent to $p$ in $H_{i-1}$, the boundary subgraph $H_{i}$ of $G_{i}$ for $S_i$ is also connected.

(The proof of claim (b).)
Suppose that there exists a vertex subset $S_{i}$ of $G_{i}$ such that the boundary subgraph for $S_{i}$ is connected.
Let $S_{uv} = V(G_{uv})\cap S_{i} \setminus \{ u,v \}$ and $\alpha = M(u,v)$.
For each 3-cycle $\langle \ell_i, s_i, r_i\rangle$ with $i \in [\alpha]$, at least one of $\ell_i, r_i$ and $s_i$ must be contained in $S'_{uv}$ because $s_i$ is only adjacent to $\ell_i$ and $r_i$.
Therefore, $|S_{uv}| \ge M(u,v)$ holds.
Our task is to show that there exists a vertex subset $S_{i-1}$ of $G_{i-1}$ such that $|S_{i-1}| \le |S_{i}| - M(u,v)$ and the boundary subgraph for $S_{i-1}$ is connected in $G_{i-1}$.

In the case that $u \in S_{i}$ or $v \in S_{i}$, observe that $S_{i-1} = S_{i} \setminus S_{uv}$ is a desired vertex set.
Let $R = \{r_i \mid i \in [\alpha]\}$.
If $|S_{uv}| > \alpha$, then $S'_{i} = (S_{i} \setminus S_{uv}) \cup (\{u \} \cup R)$ satisfies $|S'_{i}| \le |S_{i}|$, and hence this can be reduced to the previous case.
Suppose that $u \notin S_{i}$, $v \notin S_{i}$, and $|S_{uv}| = \alpha$.
Then, at least one of $\ell_1$ and $r_{\alpha}$ belongs to $S_{i}$; otherwise, $H_{i}$ would not be connected.
Without loss of generality, assume that $r_\alpha \in S_{i}$.
Observe that $H_{i}$ is connected only when $R \subseteq S_{i}$.
This implies that $\ell_1 \notin S_{i}$ and hence $u \ell_1 \notin E(H_{i})$.
In other words, there is a path in $H_{i}$ between any vertices $x$ and $y$ that avoids $u \ell_1$.
Therefore, the boundary subgraph for $S_{i-1} = S_{i} \setminus S_{uv}$ is connected in $G_{i-1}$ and $|S_{i-1}| = |S_{i}| - M(u,v)$ holds.
This completes the proof.
\end{proof}

\section{Omitted discussions in \Cref{sec:alg}}
\subsection{Monadic second-order logic on graphs} \label{app:MSO}


We define the monadic second-order logic dealt with in this paper, MSO$_1$, referring to~\cite{CyganFKLMPPS15}.
MSO$_1$ formulas employ two types of variables: vertex variables and vertex-set variables.
They are defined inductively from smaller subformulas.
We begin by introducing the most basic components, called \emph{atomic formulas}: 
\begin{itemize}
    \item $x = y$ for two variables $x$ and $y$ of the same type;
    \item $u \in X$ for a vertex variable $u$ and a vertex-set variable $X$; and 
    \item $\adj{u}{v}$ for two vertex variables $u$ and $v$, which is evaluated as True if and only if $u$ and $v$ are adjacent in a graph.
\end{itemize}
MSO$_1$ formulas also use standard Boolean operators ($\neg$, $\land$, $\lor$, $\Rightarrow$) and quantifications of a variable ($\forall u$, $\exists u$, $\forall X$, $\exists X$).
For MSO$_1$ formulas $\varphi$ and $\varphi'$, the following are also MSO$_1$ formulas: $\neg \varphi$, $\varphi \land \varphi'$, $\varphi \lor \varphi'$, $\varphi \Rightarrow \varphi'$, $\forall u: \varphi$, $\exists u: \varphi$, $\forall X: \varphi$, and $\exists X: \varphi$

Recall that, given a connected graph $G = (V, E)$, the minimization variant of \mcst\ can be restated as the following problem: find a minimum set $S \subseteq V$ such that, for any non-empty vertex subset $C\subset V$, there exist two vertices $u$ and $v$ that satisfy the following four conditions: (1) $uv \in E$; (2) $u\in C$; (3) $v\in V\setminus C$; and (4) $u\in S$ or $v\in S$.
The property that a vertex set $S \subseteq V$ is a solution to the restated problem can be expressed in MSO$_1$ as follows:
\begin{align*}
    \forall C~ \exists u,v:~ & \neg(C = \emptyset) \land \neg (C = V) \\
    & \Rightarrow \adj{u}{v} \land u \in C \land \neg (v \in  C) \land (u \in S \lor v \in S).
\end{align*}

Thus, the problem of minimizing $S$ admits an FPT algorithm parameterized by the clique-width of a given graph $G$ as a direct consequence of the following theorem.

\begin{theorem}[\cite{Courcelletheorem}]
For a graph $G$ with $n$ vertices and clique-width $\cw$, every minimization or maximization problem whose solution is expressible in MSO$_1$ can be solved in time $f(|\varphi|, \cw)\cdot n$, where $|\varphi|$ is the length of an MSO$_1$ formula that expresses a solution to the problem and $f$ is some computable function.
\end{theorem}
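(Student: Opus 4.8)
The plan is to prove this by turning the MSO$_1$ formula $\varphi$ into a finite bottom-up tree automaton that runs over the $w$-expression tree of $G$, and then performing a dynamic program that, at every node, tracks not only the automaton state but also the optimal weight of a solution realizing that state. Since the number of automaton states will be bounded by a computable function of $|\varphi|$ and $w$, and the expression tree has $O(n)$ nodes, this yields the claimed running time $f(|\varphi|,\cw)\cdot n$. We may assume that a $w$-expression tree with $w$ bounded in terms of $\cw$ is available, as recalled earlier in this section.

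The key steps, in order, would be as follows. First, I would introduce the notion of the \emph{$q$-type} of a $w$-labeled graph $H$ together with an assignment of its free (vertex and vertex-set) variables: this is the set of all MSO$_1$ formulas of quantifier rank at most $q$, over the fixed signature augmented by the free variables and the $w$ labels, that $H$ satisfies. By a standard Ehrenfeucht--Fra\"iss\'e / Hintikka-formula argument, for each $q$ there are only finitely many distinct $q$-types, and this finite number is bounded by a computable function of $q$ and $w$. Second, and this is the heart of the argument, I would prove a \emph{composition (Feferman--Vaught-style) lemma}: the $q$-type of the graph obtained by any one of the four clique-width operations --- \textbf{Introduce} $i(v)$, \textbf{Union} $\oplus$, \textbf{Relabel} $\rho_{i\to j}$, \textbf{Join} $\eta_{i,j}$ --- is computably determined by the $q$-types of its operand(s). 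The point is that the atomic adjacency predicate $\adj{u}{v}$, for edges created across parts, only ever relates vertices whose behaviour is summarised by their label class, so the type of the result depends only on how the free variables distribute across the $w$ label classes in each part; this is exactly what makes the dependence finite. Third, reading $q$ as the quantifier rank of $\varphi$, these finitely many types become the states of a deterministic bottom-up tree automaton whose transition function is given by the composition lemma and whose accepting states are the types containing $\varphi$ itself.

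Finally, to handle the \emph{optimization} version rather than mere decision, I would enrich each state with a weight. Concretely, for every node $t$ of the expression tree and every $q$-type $\theta$, the table stores the minimum (resp.\ maximum) weight of a restriction of the free set variable to $V(G_t)$ such that $G_t$ with this assignment realizes type $\theta$; infeasible pairs receive $+\infty$ (resp.\ $-\infty$). The composition lemma dictates how types combine at each node, and since the clique-width operations partition the vertices of the operands disjointly, the weight of a combined solution is the sum of the operand weights, so the table at $t$ is computed from its children by a min-/max-sum convolution over compatible type pairs. At the root one reads off the best weight among accepting types, i.e.\ those forcing $\varphi$; for \mcst\ this specializes to minimizing $|S|$ subject to the MSO$_1$ condition exhibited above. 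The main obstacle I anticipate is establishing the composition lemma cleanly: one must verify that each of the four operations respects the $q$-type congruence and that the induced transition is computable, and --- because the solution set is a global second-order object --- one must ensure the type records enough information (how the set meets each of the $w$ label classes) for the \textbf{Join} operation, which creates edges across the two parts, to be reconstructible from the parts alone.
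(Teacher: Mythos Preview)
Your sketch is a faithful outline of the standard Feferman--Vaught/type-composition proof of the Courcelle--Makowsky--Rotics theorem, and there is no genuine gap in it. However, note that in the paper this theorem is not proved at all: it is quoted verbatim from~\cite{Courcelletheorem} and used as a black box to conclude that the MSO$_1$-expressible minimization of $S$ is FPT in clique-width. So there is no ``paper's own proof'' to compare against; the paper simply invokes the result, and then, because the generic running time can be a tower of exponentials in $|\varphi|$, designs a bespoke dynamic program (\Cref{the:alg_cliquewidth}) to obtain the explicit $2^{2^{O(w)}}n$ bound. Your outline is therefore doing strictly more than the paper does for this statement, and it matches the classical proof rather than anything in the present manuscript.
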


\subsection{Proof of \Cref{lem:cut}}
\begin{proof}
    {[(a) $\Rightarrow$ (b)]} Assume that there is a spanning tree $T$ of $G$ covered by $S$.
    Consider a partition $(C, V\setminus C)$ of $V$ such that $C \neq \emptyset$ and $C \neq V$.
    Since $T$ is a spanning tree of $G$, there is an edge $uv\in E(T)$ such that $u\in C$ and $v\in V\setminus C$. 
    From the fact that $T$ is covered by $S$, we conclude that $u\in S$ or $v\in S$ holds.

    {[(b) $\Rightarrow$ (a)]} We prove this claim by constructing a spanning tree $T$ covered by $S$ incrementally.
    Initially, suppose that $T$ consists of an arbitrary single vertex in $G$.
    It is clear that $T$ is covered by $V(T) \cap S$ (even though $V(T) \cap S$ could be empty).
    Consider a partition $(V(T), V \setminus V(T))$.
    From the assumption, there is an edge $uv \in E$ such that $u \in V(T)$, $v \in V \setminus V(T)$, and $u \in S$ or $v \in S$ holds.
    Let $T'$ be a subgraph of $G$ such that $V(T') = V(T) \cup \{v\}$ and $E(T') = E(T) \cup \{ uv \}$.
    Observe that $T'$ forms a tree and is covered by $V(T') \cap S$.
    Iterating this procedure, we can construct a spanning tree of $G$ covered by $S$.
\end{proof}

\subsection{Dynamic programming in \Cref{subsec:cliquew}}\label{app:subsec:cliquew}
In the following, we demonstrate how to compute $\mathrm{dp}_{t}$ according to the type of a node $t$ in $\decom$.

\medskip\noindent\textbf{Introduce node.}
When a vertex $v$ with label $i$ is introduced, there are exactly two possible candidate solutions: the boundary subgraph for $\{v\}$ and the boundary subgraph for $\emptyset$.
Therefore, there are only two valid functions $f_1$ and $f_2$ as follows: for each $(C,X)\in \pow \times \pow$
    \begin{align*}
        f_1(C,X)=
        \begin{cases}
            1 & \text{if $(C,X)=(\{i\},\{i\})$}\\
            0 & \text{otherwise}
        \end{cases}
    , \quad
        f_2(C,X)=
        \begin{cases}
            1 & \text{if $(C,X)=(\{i\},\emptyset)$}\\
            0 & \text{otherwise}
        \end{cases}
    \end{align*}
All other functions are invalid.
Therefore, the entry $\text{dp}_{t}$ of each cell in DP table is defined as follows.
    \begin{align*}
        \text{dp}_{t}(f)=
        \begin{cases}
            1 & \text{if $f=f_1$}\\
            0 & \text{if $f=f_2$}\\
            +\infty & \text{otherwise}
        \end{cases}
    \end{align*}
\noindent\textbf{Union node.}
Let $\ell$ and $r$ be the two child nodes of $t$.
The union of boundary subgraphs for $G_\ell$ and $G_r$ yields a boundary subgraph for $G_t$.
Thus, a function $f$ is valid if there exist valid functions $f_\ell$ and $f_r$ that satisfy the following conditions.
    \begin{align*}
            \forall (C,X) \in \pow \times \pow: f (C,X)=\text{min}\{2,f_\ell(C,X)+f_r(C,X)\}
    \end{align*}
Let $M_f$ be the set of all pairs $(f_\ell, f_r)$ of functions satisfying the above condition.
The entry $\text{dp}_{t}$ is defined as follows:
    \begin{align*}
        \text{dp}_t(f)=
        \begin{cases}
        \min\{\text{dp}_\ell(f_\ell)+\text{dp}_{r}(f_r) \mid (f_\ell,f_r)\in M_{f}\}& \text{ if }M_f\ne \emptyset\\
        +\infty & \text{ otherwise }
        \end{cases}
    \end{align*}
\noindent\textbf{Relabel node.}
Suppose that node $t$ has a child node $t'$ and the vertices with label $i$ are relabeled to $j$.
Consider components of a subgraph of $G_t$.
Since $G_t$ has no vertex with label $i$, there is no component that matches $(C, X) \in \pow \times \pow$ with $i \in C$. 
Furthermore, for $(C, X)$ with $i,j \notin C$, every component of $G_{t}$ that matches $(C, X)$ is also a component of $G_{t'}$ that matches $(C, X)$.
The nontrivial case is when $(C, X)$ satisfies $i \notin C$ and $j \in C$.
Let $H'$ be a component of $G_t$ that matches $(C, X)$.
We have to pay attention that $H'$ may not match $(C, X)$ at node $t'$ due to the relabeling.
For a set $A \in \pow$, let $A_{i \to j} =  A \cup \{j\} \setminus \{i\}$ if $i \in A$; otherwise, let $A_{i \to j} = A$.
We denote by $\mathcal{A}(C, X)$ the set of pairs $(C', X') \in \pow \times \pow$ that corresponds to $(C, X)$ after the relabeling.
Formally, 
\begin{align*}
    \mathcal{A}(C,X)= \{ (C',X') \in \pow \times \pow \mid C = C'_{i\to j}, X = X'_{i \to j}  \}.
\end{align*}
After relabeling, the number of components that match $(C, X)$ equals the sum of the number of components that match pairs in $\mathcal{A}(C, X)$. 
Therefore, a function $f$ is valid if there exists a valid function $f'$ at node $t'$ such that the following equation holds for every $(C,X)\in \pow \times \pow$.
\begin{align*}
     f(C,X)=\begin{cases}
        0 & \text{if  $i\in C$}\\
        f'(C,X)& \text{if  $i\notin C$ and $j\notin C$} \\
        \min\{2,\sum_{(C^{\prime},X^{\prime})\in \mathcal{A}(C,X)}f'(C^{\prime},X^{\prime})\}& \text{otherwise} \\
    \end{cases}
\end{align*}

To update $\mathrm{dp}_t(f)$, we consider all functions $f'$ that hold the above condition.
Let $M_f$ be the set of all such functions $f'$. 
The entry $\text{dp}_{t}$ is: 
\begin{align*}
        \text{dp}_t(f)=
        \begin{cases}
        \min\{\text{dp}_{t'}(f')\mid f'\in M_f\} & \text{if $M_f \ne \emptyset$}\\
        +\infty & \text{otherwise}
        \end{cases}
    \end{align*}

\noindent\textbf{Join node.}
Suppose that node $t$ is labeled with $\eta_{i,j}$ and has a child node $t'$. 
Let $H_{S}'$ be a boundary subgraph in $G_{t'}$ for a vertex subset $S \subseteq V(G_{t'})$ and let $f':\pow \times \pow \to \{0,1,2\}$ be a function that represents the number of components in $H_{S}'$.
Consider two distinct components $F_1$ and $F_2$ of $H_{S}'$.
Suppose that $F_1$ matches a pair $(C_1, X_1) \in \pow \times \pow$ with $i \in C_1$ and $F_2$ matches a pair $(C_2, X_2) \in \pow \times \pow$ with $j \in C_2$.
Let $S_1$ and $S_2$ be vertex covers of $F_1$ and $F_2$ that correspond to $X_1$ and $X_2$, respectively.
If $i \in X_1$, then $F_1$ and $F_2$ are merged into a single component of $G_t$ by an edge $v_1v_2$ for a vertex $v_1 \in V(F_1) \cap S_1$ with label $i$ and a vertex $v_2 \in V(F_2)$. 
A similar argument applies to the case $j \in X_2$.
Conversely, if $i \notin X_1$ and $j \notin X_2$, then $F_1$ and $F_2$ cannot be merged into a single component of $G_t$.
Accordingly, after the join operation, determining the number of components in the boundary subgraph $H_{S}$ of $G_t$ for $S$ is non-trivial.

To achieve this, we introduce an auxiliary graph $\hojoG_{f'}$.
For each pair $(C, X)\in \mathcal{P} \times \mathcal{P}$, add $f(C,X)$ distinct vertices labeled $(C,X)$ to $\hojoG_{f'}$.
Intuitively, each vertex of $\hojoG_{f'}$ corresponds to a component of $H_S$, ignoring the third and any subsequent components that match $(C,X)$.
We denote by $(C_v,X_v)$ the label associated with $v$.
Here, we add an edge connecting distinct two vertices $u$ and $v$ of $\hojoG_{f'}$ if and only if one of the following two condition~(1) and~(2) is satisfied: (1) $i\in X_u$ and $j\in C_v$; or (2) $j\in X_u$ and $i\in C_v$.
The definition of $E(\hojoG_{f'})$ is justified by the previous discussion.

Consider the resulting auxiliary graph $\hojoG_{f'}$.
We then find components of $\hojoG_{f'}$ and define a function $f: \pow \times \pow \to \{0,1,2\}$ as follows:
$f(C,X)$ is the smaller of $2$ and the number of components $R$ such that $R$ holds $\bigcup_{v\in R} C_v =C$ and $\bigcup_{v\in R} X_v =X$.
For a function $f: \pow \times \pow \to \{0,1,2\}$, let $A(f)$ be the set of functions $f'$ such that $f$ is obtained from the above procedure for $\hojoG_{f'}$.
The entry $\text{dp}_{t}$ is defined as follows:
\begin{align*}
    \mathrm{dp}_t(f)=
    \begin{cases}
        \min\{\mathrm{dp}_{t'}(f')\mid f' \in A(f)\}& \text{if  $A(f)\ne \emptyset$}\\
        +\infty &\text{otherwise.}
    \end{cases}
\end{align*}
\medskip\noindent\textbf{Running time.} 
Recall that it suffices to consider a pair $(C, X) \in \pow \times \pow$ with $X \subseteq C$.
Thus, the number of possible pairs is bounded by 
$\sum_{i=0}^{w}\binom{w}{i} \left( \sum_{j=0}^i\binom{i}{j} \right) 
=\sum_{i=0}^{w}\binom{w}{i}2^i=3^w$.
The function $f$ independently assigns one of three values $\{0,1,2\}$ to each pair $(C, X)$, and hence the number of possible functions is $3^{3^w}$.
Therefore, the size of the DP table is $O(3^{3^w})$ for each node $t$.
The DP table can be computed in time $O(3^{3^w})$ for an introduce node, $O(3^w(3^{3^w})^2)=O(3^w9^{3^w})$ for a union node, $O(3^w3^{3^w})$ for a relabel node, and $O((2\cdot 3^w)^2\cdot3^{3^w})=O(9^w\cdot 3^{3^w})$ for a join node.
Since the $w$-expression consists of $O(n)$ nodes~\cite{Courcelletheorem}, the total running time is $9^{O(3^{w})} n=2^{2^{O(w)}}n$.

\if0
\subsection{Omitted proofs of \Cref{the:interval}} \label{app:interval}
\begin{figure}[ht]
\centering
\includegraphics[width=0.5\linewidth]{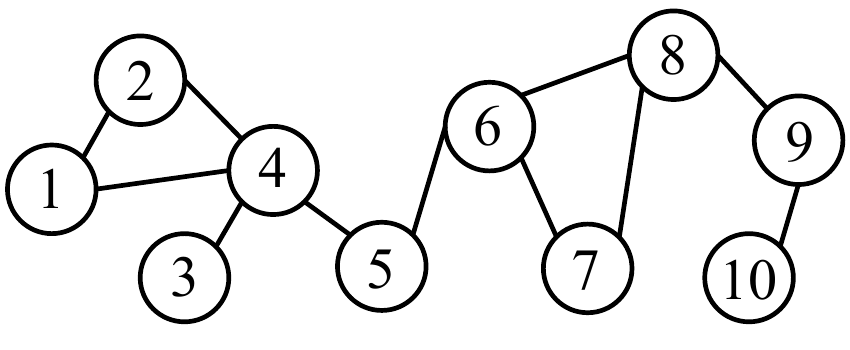}
\caption{An interval graph $G$ and interval ordering are written in each vertex.}
\label{fig:interval}
\end{figure}

Here, $t_1^i$, $t_2^i$ $t^i$, $s^i$, $S^i$, $V_T^i$ denote the indices $t_1,t_2, t, s,S , V_T$ of lines \ref{line:t_1}--\ref{line:select} and \ref{line:V_Trenew} when the $i$-th \textbf{while} loop finished.
Throughout the section, we use an alternative definition of {\mcst} in \Cref{lem:cut}.
Here, given a graph $G=(V,E)$ and two subsets of vertices $S,C\subseteq V$, a pair $(u, v)$ of vertices is called a \emph{bridge respects $S$ for a cut $(C, V(G)\setminus C)$} if the vertex set $C$ and the two vertices $u$ and $v$ satisfy condition in \Cref{lem:cut} (b): $\mathrm{(1)}$ $uv\in E(G)$; $\mathrm{(2)}$ $u\in C$; $\mathrm{(3)}$ $v\in V(G)\setminus C$; and $\mathrm{(4)}$ $u\in S$ or $v\in S$.
Moreover, a cut $(C, V(G)\setminus C)$ is \emph{proper} for $S$ if there is at least one bridge for the cut $(C, V(G)\setminus C)$. 
Thus, a vertex subset $S$ satisfies a condition in \Cref{lem:cut} (b) if and only if, every cut $(C,V(G)\setminus C)$ is a proper for $S$, that is, there is a bridge respects $S$ for $C$ for every cut $(C,V(G)\setminus C)$.

\subsubsection{Proof of \Cref{lem:invariant}}
\begin{proof}
    We prove by induction that $S^j$ is a vertex cover of some spanning tree of $G[V_T^j]$.
    We can easily see that this is true when $j=1$, since we have $V_T^1=N[s^1]$, and then every vertex $v\in N(s^1)$ is adjacent to $s^1$.

    Assume that the induction hypothesis is true when $j=j'$ for $1\le j'<|S|$.
    Now $V_T^{j'}$ satisfies that every cut is a proper for $S^{j'}$.
    Consider any cut $(C', V_T^{j'+1}\setminus C')$ with $C'\ne \emptyset$.
    We now prove that there exist a bridge respects $S^{j'+1}$ for every $C'$.
    If $C'$ holds that $C'\cap V_T^{j'}\ne \emptyset$ and $(V_T^{j'+1}\setminus C')\cap V_T^{j'}\ne \emptyset$, then 
    there is a , since $V_T^{j'}\subseteq V_T^{j'+1}$ and $S^{j'}\subseteq S^{j'+1}$ and a cut $((C'\cap V_T^{j'+1}), V_T^{j'+1}\setminus (C'\cap V_T^{j'+1}))$ is proper for $S^{j'}$.
    Thus, now we can assume that $V_T^{j'}\subseteq C'$ by symmetry.
    Next, if $C'\cap (V_T^{j'+1}\setminus V_T^{j'})]\ne \emptyset$ and $(V_T^{j'+1}\setminus{C'})\cap (V_T^{j'+1}\setminus V_T^{j'})\ne \emptyset$, then there is a bridge $v_{s^{j'+1}}z$ respects $S^{j'+1}$ with $v_{s^{j'+1}}\in C'$ (or $v_{s^{j'+1}}\in V_T^{j'+1}\setminus{C'}$) and $z\in V_T^{j'+1}\setminus{C'}$ (or $z\in C'$), since $V_T^{j'+1}\setminus V_T^{j'}\subseteq N[v_{s^{j'+1}}]$.
    Therefore, we can assume that every vertex in $V_T^{j'+1}\setminus V_T^{j'}$ belongs to the same side as $s^{j'+1}$, and then the other side than every vertex in $V_T^{j'}$ by $C'\ne 0 $ and $C\subset V_T^{j'+1}$.
    Thus, we now only need to prove that there is a bridge respect $S^{j'+1}$ for a cut $(V_T^{j'}, V_T^{j'+1}\setminus V_T^{j'})$, by the above discussion.
    
    First consider the case when $t_2^{j'+1}< s^{j'+1}$.
    There is an edge between $v_{t^{j'+1}}$ and $v_{s^{j'+1}}$ from the definition of $s^{j'+1}$.
    Thus, we can conclude that there is a bridge $v_{t_2^{j'+1}}v_{s^{j'+1}}$ respect $S^{j'+1}$ for $C'$ since $t^{j'+1}=\min\{t_1^{j'+1},t_2^{j'+1}\}\le t_2^{j'+1}$.
    Next we show the case that $t_2^{j'+1}\ge s^{j'+1}$.
    We have $v_{t_2^{j'+1}}\in V_T^{j'}$ from the definition of $t_2^{j+1}$.
    Here, consider a subset $\nanka=V_T^{j'}\cap V_{1,s^{j'+1}-1}$ of vertices.
    We can see $V_T^{j'}\setminus \nanka\ne\emptyset$ since $t_2^{j'+1}\ge s^{j'+1}$.
    There is a bridge $xy$ respect $S^{j'+1}$ for $C'$ with $x\in \nanka$ and $y\in V_T^{j'}\setminus \nanka$ from the assumption that $S^{j'}$ is a vertex cover of some spanning tree of $G[V^{j'}_T]$.
    Thus, there is an bridge $v_{s^{j'+1}}y$ respects $S^{j+1}$ for a cut $(V_T^{j'}, V_T^{j'+1}\setminus V_T^{j'})$ from the definition of interval ordering.
    Thus, we can conclude that $S^{j'+1}$ is a vertex cover of some spanning tree of $G[V_T^{j'+1}]$, since every cut $(C',V_T^{j'+1}\setminus C')$ is proper.
    Inductively, we can conclude $S^j$ is a vertex cover of some spanning tree of $G[V_T^j]$ for every $j\in [|S|]$.
    
    When the \textbf{while} loop breaks, we have $V_T=V$, and then $S$ is a vertex cover of some spanning tree of $G[V_T]=G$.
\end{proof}
\subsubsection{Proof of \Cref{clm:ascending}}
\begin{proof}
    We first show that $t^{j+1} > s^j$.
    It is sufficient to show that every vertex in $V_{1,s^j}$ is in $V_T^j$.
    We prove this by induction.
    For the base case, $s^1$ has a left endpoint less than the right endpoint $r_1$ of $I_1$, since $t^1=1$.
    Thus, every vertex in $V_{1,s^1}$ is adjacent to $s^1$.

    Next, we assume that every vertex in $V_{1,s^{i-1}}$ is in $V_T^{i-1}$ for $1<i<|S|$.
    Assume that there is a vertex $v'\in V\setminus V_T^{i-1}$ such that $v'$ has an index less than the one of $s^{i+1}$.
    By the definition of $t_1^i$, $v'$ has an index more than or equal to $t_1^i$.
    Since $t^i\le t_1^i$, $v_{t_1^i}$ is adjacent to $v_{s^i}$, as well as $v_{t^i}$ is.
    The same applies to $v'$, then we can conclude that $v'\in V_T^i$.

    Now we have $t^{j+1} > s^j$, and move to the proof of the statement in this claim.
    Assume that there is an index $j$ such that $s^{j+1}\le s^{j}$ for a contradiction.
    Here, $v_{t^{j+1}}$ holds that $t^{j+1}>s^j\ge s^{j+1}$ and $v_{t^{j+1}}\in N[v_{t^{j+1}}]$.
    This contradicts to the maximality of $s^{j+1}$ in line~\ref{line:select}.
\end{proof}

\subsubsection{Proof of \Cref{lem:interval}}
\begin{proof}
    For the sake of contradiction, assume that there exists a vertex set $S'$ of $G$ such that the boundary graph for $S'$ is connected and $|S| > |S'|$.
    Consider the symmetric difference $S \triangle  S' = (S \cup S')\setminus (S \cap S')$.
    Note that $S \triangle  S' \neq \emptyset$.
    Let $i$ be the smallest index of vertices in $S \triangle S'$, that is, $i =\min\{i' \mid v_{i'} \in S \triangle S'\}$.
    Then, the vertex $v_{i}$ satisfies one of the following two conditions:
    \begin{inparaenum}[(i)]
        \item $v_{i} \in S'$ and $v_{i} \notin S$, or
        \item $v_{i} \notin S'$ and $v_{i} \in S$.
    \end{inparaenum}
    In case (i), since $S \triangle S'$ has no vertex with an index less than $i$, the boundary subgraph of $S' \cap V_{[1,i]}$ is isomorphic to the one for $S \cap V_{[1, i]}$.
    Therefore, there is an index $s^j$ such that $v_{s^j}$ has the maximum index not exceeding $i$, among the vertices in $ (C'\cap C_T)$.
    Here, consider the $(j+1)$-th \textbf{while} loop in \Cref{code:greedy}.
    We can assume that $s^j< i <s^{j+1}$ by the definition of $i$.
    Now, we claim the following:
    \begin{clm}[$\ast$]\label{clm:alttree}
        $S^{\ast}=(S'\setminus \{v_{i}\})\cup \{v_{s^{j+1}}\}$ is a vertex cover of some spanning tree $T^{\ast}$ of $G$.
    \end{clm}
    Thus, we can obtain another spanning tree $T^{\ast}$ covered by $S^{\ast}$.

    Here, $S^{\ast}$ holds that minimum index of vertices in $S\triangle S^{\ast}$ is larger than the one of $S\triangle S^\prime$.
    Consider that we iteratively apply the procedure above to $S'$ as long as $v_i$ holds the case (i).
    If only case (i) appears in the procedure, the resulting vertex set $S$ holds 
    $S\triangle S'=\emptyset$. However, since $S\triangle S'$ cannot be empty since $|S'|<|S|$, at least one vertex must satisfy the case (ii).

    Consider the case (ii).
    We now show that $S'$ cannot be a vertex cover of any spanning tree of $G$.
    We claim that there is a cut that is not proper for $S'$.
    We consider two cases: the first case is $t^j = t_1^j$, and the second one is $t^j = t_2^j$.
    
    For the first case, it suffices to show that there is no bridge respects $S'$ for a cut  $(\{v_{t_1^j}\}V\setminus \{v_{t_1^j}\})$
    Note that $v_{t_1^j}\notin S'$ holds from the definition of $i$.
    Since $v_{t_1^j}\notin V_T^j$, there is no edge between $v_{t_1^j}$ and $S'\cap V_{[1,t_1^j-1]}$.
    Moreover, since $s^j$ is the largest index among the neighbors of $v_{t_1^j}$, there is no edge between $v_{t_1^j}$ and $S'\cap V_{[t_1^j+1,n]}$.
    This implies there is no edge between $v_{t_1^j}$ and $u\in V\setminus \{v_{t_1^j}\}$.

    We move to the second case $t^j=t_2^j$.
    It suffices to show that there is no bridge respects $(V_T^j,V\setminus V_T^j)$.
    Suppose that there is a bbridge $xy\in E$ with $x\in V_T^j$ and $y\in V\setminus V_T^j$ for a contradiction.

    We first assume that $x\in S'$. 
    Now we have $\bigcup_{v\in S^j}N(v)=V_T^j$ from the definition of $V^j$ in line 10 of \Cref{code:greedy}.
    Thus there is no such edge $xy$ since $y\notin V^j$, leads a contradiction.
    Next we assume that $x\notin S' \wedge y\in S'$.
    Here, we can assume that $x=v_{t_2^j}$: if there is an edge $xy$ such that $x\ne v_{t_2^j}$, then there is an edge $v_{t_2^j}y$ since $v_{t_2^j}$ has an index larger than the one of $x$, and smaller than the one of $y$ by the definition of the interval ordering.
    Now we show that $N(v_{t_2^j})\cap V_{t_2^j+1, n}\cap S'=\emptyset$, since $V_T^j=V_{1,t_2^j}$ from the condition that $t_2^j$ satisfies.
    Here, we have $ S'\cap V_{s^{j-1},s^{j}}=\emptyset$ from the definition of $i$.
    Furthermore, vertex $v_{t_2^j}$ satisfies that $N(v_{t_2^j})\subseteq V_{1,s^j}$ from the definition of $s^j$.
    Thus $(N(v_{t_2^j})\cap V_{t_2^j+1, n}\cap S')\subseteq (V_{1,s^j}\cap V_{t_2^j+1, n}\cap S')=(V_{t_2^j+1,s^j}\cap S')=\emptyset$ holds since $v_{t_2^j}$ does not have the neighbor in $V_{s^j+1,n}$ by the definition of $s^j$.
    This induces the contradiction to the assumption that there is an edge $v_{t_2^j}y$ such that $y\in S'$ and $y\in V\setminus V_T^j$.
    This completes the proof.
\end{proof}

\subsubsection{Proof of \Cref{clm:alttree}}
\begin{proof}
Assume that there is a cut $(C,V\setminus C)$ that is not proper for $S^{\ast}$, that is, there is no bridge $uv$ respects $S^{\ast}$ for $C$
Note that, there is a bridge $u'v'$ respects $S^{\prime}$ for $C$ since $(C,V\setminus C)$ is a proper cut.
Here, we can assume $u'=v_i$ or $v'=v_i$: if not, $u'\in S^{\ast}$ or $v'\in S^{\ast}$ since $S'\setminus\{v_i\}\subseteq S^{\ast}$, thus $u'v'$ is also a bridge respects $S^{\ast}$ for $C$.
Now we assume that $v'=v_i$, from symmetry.

We consider the folowing cases:
\begin{enumerate}
    \item $u'\in V^j_T$,
    \begin{enumerate}
        \item $V^j_T\cap C\ne \emptyset$ and $V^j_T\cap(V\setminus C)\ne \emptyset$,
        \item $V^j_T\subseteq V\setminus C$,
        \item $V^j_T\subseteq C$,
        \begin{enumerate}
            \item $v_{s^{j+1}}\in V\setminus C$,
            \item $v_{s^{j+1}}\in C$.
        \end{enumerate}
    \end{enumerate}
    \item $u'\notin V^j_T$,
    \begin{enumerate}
        \item $v_{s^{j+1}}\in V\setminus C$,
        \item $v_{s^{j+1}}\in C$,
        \begin{enumerate}
            \item $V^j_T\cap C\ne \emptyset$ and $V^j_T\cap(V\setminus C)\ne \emptyset$,
            \item $V^j_T\subseteq V\setminus C$,
            \item $V^j_T\subseteq C$.
        \end{enumerate}
    \end{enumerate}
\end{enumerate}
, and claim that there is a bridge respect $S$ for $C$.

First we prove the case 1: $u'\in V^j_T$. 
We can see that $S^j$ is a vertex cover of a spanning tree of $G[V^j_T]$ by \Cref{lem:invariant}.
Thus, every cut $(C', V^j_T\setminus C')$ is proper for $S$: then there is a bridge respects $S$ for $C'$.
Consider the case 1-(a): $V^j_T\cap C\ne \emptyset$ and $V^j_T\cap(V\setminus C)\ne \emptyset$. 
By the discussion above, there is a bridge $xy$ respects $S^j$ for $V^j_T\cap C$ such that $x\in V^j_T\cap C$ and $y\in V\setminus (V^j_T\cap C)$.
Thus, the pair $x,y$ is also a bridge respects $S$ for $C$.
Next, we do not need to consider the case 1-(b) since $u'\in C$ and $u'\in V_T^j\subseteq V\setminus C$: a contradiction.

Therefore, we move to the case 1-(c): $V^j_T\subseteq C$.
Consider the case 1-(c)-i: $v_{s^{j+1}}\in V\setminus C$, there is an edge $v_{t_2^{j+1}}v_{s^{j+1}}$ with $v_{t_2^{j+1}}\in C$ and $v_{s^{j+1}}\in V\setminus C$, since there is an edge $v_{t^{j+1}}v_{s^{j+1}}$ and $t^{j+1}\le t_2^{j+1}<s^{j+1}$: $v_{t_2^{j+1}}v_{s^{j+1}}$ is a bridge respects $S$ for $C$.
Otherwise (case 1-(c)-ii), we now have $v_i\notin V^j_T$ since $v_i\in V\setminus C$ and $V^j_T\subseteq C$.
Thus $i\ge t_1^{j+1}$ holds from the line \ref{line:t_1}.
Therefore, there is a bridge $v_{s^{j+1}}v_i$ respect $S$ for $C$ with $v_{s^{j+1}}\in C$ and $v_i\in V\setminus C$ since $t^{j+1}\le t_1^{j+1}\le i<s^{j+1}$.

Here, we prove the case 2: $u'\notin V^j_T$.
If $v_{s^{j+1}}\in V\setminus C$ (case 2-(a)), then there is an edge between $u'\in C$ and $v_{s^{j+1}}\in V\setminus C$.
Consider the case 2-(b): $v_{s^{j+1}}\in  C$.
We can assume that all vertices in $V^j_T$ are in the same side by the same discussion as above, that is, we do not need to consider thee case 2-(b)-i.
When $V^j_T\subseteq V\setminus C$ (case 2-(b)-ii), there is an edge between $v_{s^{j+1}}\in C$ and $v_{t_2^{j+1}}\in V\setminus C$.
For the case 2-(b)-iii, we have ${t_1^{j+1}}<i<{s^{j+1}}$ and $v_i\notin V^j_T$ since $v_i\in V\setminus C$, there is an edge between $v_{s^{j+1}}\in C$ and $v_i\in V\setminus C$, as well as there is an edge $v_{s^{j+1}}$ and $v_{t_1^{j+1}}$ since $t_1^{j+1}\le i< s^{j+1}$.
The edge $v_{s^{j+1}}v_i$ is a bridge respects $S$ for $C$.

For any arbitrary subset $C\ne \emptyset$, we obtained a bridge respects $S$ for $C$.
This contradicts our initial assumption, completes the proof.
\end{proof}
\if0
\begin{proof}
    Recall that $S^{\ast}=(S'\setminus \{v_{i}\})\cup \{v_{s^{j+1}}\}$.
    By \Cref{lem:cut}, it suffices to show that for any non-empty vertex subset $C\subset V$, there exist two vertices $x$ and $y$ that satisfy the following four conditions: $\mathrm{(1)}$ $xy\in E$; $\mathrm{(2)}$ $x\in C$; $\mathrm{(3)}$ $y\in V\setminus C$; and $\mathrm{(4)}$ $x\in S^\ast$ or $y\in S^\ast$.
    By symmetry, we may add the condition $v_i \in C$.
    
    Recall that $S'$ satisfies the above conditions for any non-empty vertex subset $C\subset V$.
    If $x \neq v_i$, then it is clear that $S^\ast$ also satisfies these conditions.
    Suppose that $x = v_i$ and there is a vertex $y \in V\setminus C$.
    Here, let us focus on the index $t^{j+1}_2$.
    From \Cref{line:t} of \Cref{code:greedy}, we have $t^{j+1} \le t^{j+1}_2$.
    Moreover, $t^{j+1} < i$ follows from \Cref{line:t_1} and the fact $v_i \notin S$.
    From \Cref{line:select} of \Cref{code:greedy}, the vertex $v_{s^{j+1}}$ is adjacent to the vertex $v_{t^{j+1}}$, and hence to the vertices $v_i$ and $v_{t^{j+1}_2}$ as well due to the interval ordering.

    Thus, there is an edge between the vertices $v_i$ and $v_{s^{j+1}}$.
    If $v_{s^{j+1}} \in V\setminus C$, then we can take $v_{s^{j+1}}$ as $y$.
    Otherwise,

    \end{proof}
\fi

\fi

\subsection{Full discussions of \Cref{sec:interval}} \label{app:interval}

This section shows the following theorem.

\theoreminterval*

An \emph{interval ordering} $I$ of a graph $G$ with $n$ vertices is a sequence $(v_1,v_2,\ldots, v_n)$ of vertices that satisfies the following condition: for any three integers $a,b,c\in [n]$ with $a<b<c$, if $v_av_c\in E(G)$, then $v_bv_c\in E(G)$.
A graph is an \emph{interval graph} if it has an interval ordering.
An interval ordering of a given interval graph can be computed in linear time~\cite{HsuW99}.
We give \cref{code:greedy} to solve {\mcst} for interval graphs as the proof of \Cref{the:interval}.
Note that $G$ is assumed to be connected and have at least two vertices.


\setcounter{algorithm}{0}
\begin{algorithm}[th]
	\caption{Solving \mcst\ on interval graphs}
	\label{app:code:greedy}
	\begin{algorithmic}[1]
    \Input An interval ordering $I = (v_1,v_2,\ldots, v_n)$ of a connected interval graph $G$, where $n \ge 2$
    \Output A minimum vertex cover $S$ among spanning trees of $G$
	\State $S\leftarrow\emptyset$, $V_T \leftarrow \{v_1\}$ \label{line:S}
		\While{$V_T\ne V(G)$} \label{line:while}
            \State $t_1\leftarrow\min\{i\mid v_i\notin V_T\}$\label{line:t_1}
            \State $t_2\leftarrow \max\{i\mid v_i\in V_T\}$\label{line:t_2}
            \State $t\leftarrow\min\{t_1,t_2\}$\label{line:t}
			\State $s \leftarrow \max \{i \mid v_i \in N[v_t]\}$\label{line:select}
			\State $S\leftarrow S \cup \{v_s\}$ \label{line:set_S}
			\State $V_T \leftarrow V_T \cup N[v_s]$\label{line:V_Trenew}
		\EndWhile \label{line:endwhile}
		\State \Return $S$\label{line:return}
	\end{algorithmic}
\end{algorithm}


\if0
Here, we give a brief explanation of \Cref{code:greedy}. 
Whenever \Cref{code:greedy} reaches \Cref{line:while}, the boundary subgraph $H$ for $S$ has the component containing $v_1$ whose vertices are exactly $V_T$.
Roughly speaking, inside the while loop, we determine the new vertex $v_s$ to be included in $S$ by identifying a vertex $v_{t_1}$ to be added to $V_T$ and an endpoint $v_{t_2}$ of an edge necessary for connecting the boundary subgraph.
Determining $v_s$ based on these two vertices $v_{t_1}$ and $v_{t_2}$, is a necessary condition for constructing a connected boundary subgraph.
We then construct a new boundary subgraph according to this $v_s$ and store the vertex set of its components in $V_T$.
When $V_T$ is equal to $V$, the boundary subgraph for $S$ consists of a single component.
\fi

Let $t_1^i$, $t_2^i$ $t^i$, $s^i$, $S^i$, and $V_T^i$ denote the integers $t_1$, $t_2$, $t$, $s$, and the sets $S$, $V_T$, respectively, when the $i$-th \textbf{while} loop finishes (\cref{line:endwhile}).
For the sake of convenience, we define $S^0 = \emptyset$ and $V_T^0 = \{ v_1 \}$ (that is, $S$ and $V_T$ before the first \textbf{while} loop).
We also denote $\Vint{a}{b} = \{v_a, \dots, v_b\}$ for two integers $a$ and $b$ with $1\le a \le b \le n$.

To show the correctness of \Cref{app:code:greedy}, we again utilize \cref{lem:cut}.
For a connected graph $G$ with at least two vertices, a pair $(C,S)$ of two vertex sets of $G$ is \emph{compatible} if there exist two vertices $u$ and $v$ that satisfy the following four conditions: $\mathrm{(1)}$ $uv\in E$; $\mathrm{(2)}$ $u\in C$; $\mathrm{(3)}$ $v\in V\setminus C$; and $\mathrm{(4)}$ $u\in S$ or $v\in S$.
\cref{lem:cut} can be restated as that $G$ has a spanning tree $T$ covered by a vertex subset $S$ if and only if a pair $(C,S)$ is compatible for any non-empty vertex subset $C\subset V$.

We begin by proving the following useful lemmas.

\begin{lemma} \label{lem:int_t}
    Consider when the $i$-th \textbf{while} loop in \Cref{app:code:greedy} finishes.
    It holds that $\Vint{1}{t_1^i-1} \subseteq V_T^{i-1} \subseteq \Vint{1}{t_2^i}$.
    Moreover, $\Vint{1}{s^i} \subseteq V_T^i$.
\end{lemma}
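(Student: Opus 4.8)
The plan is to treat the three inclusions separately: the first two are essentially restatements of the definitions of $t_1^i$ and $t_2^i$, and only the third, $\Vint{1}{s^i}\subseteq V_T^i$, needs a genuine (but short) argument, which will use the interval-ordering condition.

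First I would record that at the moment the $i$-th iteration begins the variable $V_T$ equals $V_T^{i-1}$, and that $t_1^i$ and $t_2^i$ are the quantities computed from this set in \cref{line:t_1} and \cref{line:t_2}. Since $t_1^i=\min\{j\mid v_j\notin V_T^{i-1}\}$ (this minimum exists because the loop condition in \cref{line:while} guarantees $V_T^{i-1}\neq V(G)$), every vertex of index at most $t_1^i-1$ lies in $V_T^{i-1}$, i.e. $\Vint{1}{t_1^i-1}\subseteq V_T^{i-1}$. Symmetrically, $t_2^i=\max\{j\mid v_j\in V_T^{i-1}\}$ (well defined since $v_1\in V_T^{i-1}$), so every vertex of $V_T^{i-1}$ has index at most $t_2^i$, giving $V_T^{i-1}\subseteq\Vint{1}{t_2^i}$.

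For the last inclusion I would fix $j\in[s^i]$ and show $v_j\in V_T^i=V_T^{i-1}\cup N[v_{s^i}]$. If $v_j\in V_T^{i-1}$ we are done, so assume $v_j\notin V_T^{i-1}$. Then by the definition of $t_1^i$ we have $j\ge t_1^i$, and since $t^i=\min\{t_1^i,t_2^i\}\le t_1^i$ we get $t^i\le j\le s^i$. By the choice of $s^i$ in \cref{line:select}, $v_{s^i}\in N[v_{t^i}]$. If $s^i=t^i$ then $j=s^i$ and $v_j=v_{s^i}\in N[v_{s^i}]$; otherwise $v_{t^i}v_{s^i}\in E(G)$, and then either $j\in\{t^i,s^i\}$, in which case $v_j\in N[v_{s^i}]$ directly, or $t^i<j<s^i$, in which case applying the interval-ordering condition with $a=t^i$, $b=j$, $c=s^i$ to the edge $v_{t^i}v_{s^i}$ yields $v_jv_{s^i}\in E(G)$. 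In all cases $v_j\in N[v_{s^i}]\subseteq V_T^i$, so $\Vint{1}{s^i}\subseteq V_T^i$.

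I do not expect a real obstacle here; the two things that need care are (i) keeping the ``before'' copy $V_T^{i-1}$ and the ``after'' copy $V_T^i$ of $V_T$ distinct, since $t_1^i,t_2^i$ are read against $V_T^{i-1}$ while the last conclusion is about $V_T^i$, and (ii) the degenerate case $s^i=t^i$, where $v_{t^i}$ has no neighbour of larger index so the interval-ordering step is vacuous but the claimed inclusion still holds trivially. Noting the well-definedness of $t_1^i$, $t_2^i$, and $s^i$ in passing completes the argument.
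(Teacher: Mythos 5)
Your proposal is correct, and the first two inclusions are handled exactly as in the paper (straight from the definitions in \cref{line:t_1,line:t_2}). For the third inclusion you take a slightly different, and in fact leaner, route: the paper proves $\Vint{1}{s^i}\subseteq V_T^i$ by induction on $i$, using the hypothesis $\Vint{1}{s^{i-1}}\subseteq V_T^{i-1}$ only to reduce to indices $\ell$ with $s^{i-1}<\ell\le s^i$, whereas you argue directly: any $v_j$ with $j\le s^i$ and $v_j\notin V_T^{i-1}$ already satisfies $j\ge t_1^i\ge t^i$ by the definition of $t_1^i$, and then the same interval-ordering step as in the paper (applied to the edge $v_{t^i}v_{s^i}$ with $t^i\le j\le s^i$) puts $v_j\in N[v_{s^i}]\subseteq V_T^i$. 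So the key step is identical, but your version shows the induction is dispensable, and you also make explicit the degenerate cases ($s^i=t^i$, $j\in\{t^i,s^i\}$) that the paper glosses over; the paper's inductive phrasing buys nothing beyond matching the inductive style of the surrounding lemmas. No gap.
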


\begin{proof}
    The former statement follows from the definitions of $t_1^i$ and $t_2^i$ in \Cref{line:t_1,line:t_2}.
    
    The latter statement is proved by induction on $i$.
    In the base case $i=1$, recall that $s_1$ is adjacent to $v_1$.
    Thus, due to the interval ordering $I$, we have $\Vint{1}{s^1} \subseteq N[s^1] = V_T^1$. 

    Suppose next that $i > 1$ and $\Vint{1}{s^{i-1}} \subseteq V_T^{i-1}$.
    We show that $v_{\ell} \in V_T^{i}$ for any integer $\ell$ with $s^{i-1} <  \ell \le s^{i}$.
    For the sake of contradiction, assume that there exists an integer $\ell$ such that $s^{i-1} <  \ell \le v_{s^{i}}$ and $v_{\ell}\notin V_T^{i}$.
    Since $V_T^{i} = V_T^{i-1} \cup N[s^i]$, we also have $v_\ell \notin V_T^{i-1}$.
    By $\Vint{1}{t_1^i-1} \subseteq V_T^{i-1}$ in the former statement, we have $t_1^i \le \ell$.
    Since $t^i\le t_1^i$ from \cref{line:t}, it holds that $t^i\le t_1^i \le \ell \le s^{i}$.
    If $\ell = s^{i}$, then we have $v_\ell \in V_T^i = V_T^{i-1} \cup N[v_{s^i}]$ from \cref{line:V_Trenew}, a contradiction.
    If $\ell < s^{i}$, then $v_\ell v_{s^{i}} \in E(G)$ follows from $v_{t^i} v_{s^{i}} \in E(G)$ and the interval ordering $I$. 
    We have $v_\ell \in N[v_s^i]$, and therefore, $v_\ell \in V_T^i$, a contradiction.
\end{proof}

\begin{lemma}\label{lem:ascending}
    Suppose that \Cref{app:code:greedy} terminates after the $j$-th \textbf{while} loop in \Cref{app:code:greedy}.
    For every $i \in [j-1]$, we have $s^i < t^{i+1} <  s^{i+1}$.
\end{lemma}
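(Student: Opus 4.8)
The claim is $s^i < t^{i+1} < s^{i+1}$ for every $i \in [j-1]$. The plan is to handle the two inequalities separately and in order.

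First I would prove $s^i < t^{i+1}$. By \Cref{lem:int_t}, at the end of the $i$-th loop we have $\Vint{1}{s^i} \subseteq V_T^i$, so every vertex with index at most $s^i$ already belongs to $V_T^i = V_T^{(i+1)-1}$. Now $t_1^{i+1} = \min\{\ell \mid v_\ell \notin V_T^i\}$, so by the containment $\Vint{1}{s^i}\subseteq V_T^i$ we get $t_1^{i+1} > s^i$. For $t_2^{i+1}$, note that the algorithm has not yet terminated, so $V_T^i \neq V(G)$; since $v_{s^i}$ was just added to $S$ in the $i$-th loop and $N[v_{s^i}]\subseteq V_T^i$, and since the maximal index in $V_T^i$ is at least $s^i$, we have $t_2^{i+1} = \max\{\ell \mid v_\ell \in V_T^i\} \ge s^i$. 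Actually one needs $t_2^{i+1} > s^i$ or, if $t_2^{i+1}=s^i$, the value $t^{i+1}=\min\{t_1^{i+1},t_2^{i+1}\}=s^i$ would be possible; so I should argue more carefully. The cleaner route: by \Cref{lem:int_t} again, $V_T^i \subseteq \Vint{1}{t_2^{i+1}}$ and $\Vint{1}{s^i}\subseteq V_T^i$, giving $s^i \le t_2^{i+1}$; combined with $t_1^{i+1} > s^i$, we obtain $t^{i+1}=\min\{t_1^{i+1},t_2^{i+1}\}$. If $t_2^{i+1} > s^i$ then $t^{i+1} > s^i$ directly; if $t_2^{i+1} = s^i$, then I would derive a contradiction from the fact that $v_{t_2^{i+1}}=v_{s^i}$ lies in $V_T^i$ together with the fact that the maximal index of a vertex in $V_T^i$ exceeds $s^i$ (since $V_T^i\ne V(G)$ and by the structure of the boundary subgraph $v_{s^i}$ has a neighbor of strictly larger index, otherwise the graph would be disconnected or the loop would have already consumed everything). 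The key obstacle here is making this boundary-case argument airtight, and I expect it to rely on connectivity of $G$ and the interval ordering.

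Next I would prove $t^{i+1} < s^{i+1}$. This is nearly immediate from the definition in \Cref{line:select}: $s^{i+1} = \max\{\ell \mid v_\ell \in N[v_{t^{i+1}}]\}$. Since $v_{t^{i+1}} \in N[v_{t^{i+1}}]$, we have $s^{i+1} \ge t^{i+1}$. To get strict inequality, I would argue that $v_{t^{i+1}}$ must have a neighbor of strictly larger index. Indeed, $t^{i+1} = t_1^{i+1}$ or $t^{i+1}=t_2^{i+1}$. In the first case $v_{t_1^{i+1}} \notin V_T^i$, but $G$ is connected, so $v_{t_1^{i+1}}$ has some neighbor; if all its neighbors had smaller index they would all lie in $V_T^i$ (using $\Vint{1}{t_1^{i+1}-1}\subseteq V_T^i$ from \Cref{lem:int_t}) and then by the interval ordering $v_{t_1^{i+1}}$ would be in $V_T^i$'s component — I need to check this contradicts $v_{t_1^{i+1}} \notin V_T^i$ — hence it has a neighbor of larger index, so $s^{i+1} > t^{i+1}$. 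In the second case $t^{i+1}=t_2^{i+1} \le t_1^{i+1}$, and since $v_{t_1^{i+1}}\notin V_T^i$ while $t_1^{i+1} > t_2^{i+1}$ would force $v_{t_1^{i+1}}$ adjacency considerations again; I would conclude $v_{t_2^{i+1}}$ has a neighbor outside $\Vint{1}{t_2^{i+1}}$, again giving strictness.

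The main obstacle I anticipate is the boundary case $t_2^{i+1}=s^i$ (and symmetrically the strictness argument for $s^{i+1}>t^{i+1}$): both hinge on showing that the relevant vertex $v_{t^{i+1}}$ genuinely has a neighbor of strictly larger index, which must be extracted from connectivity of $G$, the fact that the while loop has not yet terminated, and the structural meaning of $V_T^i$ as the vertex set of the component of $v_1$ in the boundary subgraph for $S^i$ (as recorded informally in the algorithm's invariant). Once that structural fact is pinned down — ideally stated as a separate invariant lemma that $V_T^i$ equals the vertex set of the $v_1$-component of the boundary subgraph for $S^i$ — both strict inequalities follow cleanly, and the rest is bookkeeping with \Cref{lem:int_t} and the interval ordering property.
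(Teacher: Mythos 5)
Your treatment of the main inequality $s^i < t^{i+1}$ is essentially the paper's proof: \cref{lem:int_t} gives $s^i < t_1^{i+1}$ and $s^i \le t_2^{i+1}$, and the boundary case $t_2^{i+1}=s^i$ (equivalently $V_T^i=\Vint{1}{s^i}$) is killed by combining \cref{line:V_Trenew} (every neighbour of $v_{s^i}$ lies in $V_T^i$), the interval ordering (so no vertex of $\Vint{1}{s^i}$ has a neighbour of larger index), and connectivity of $G$ together with $V_T^i\neq V(G)$. That part of your plan is correct and follows the same route as the paper, which reaches the contradiction via disconnectedness rather than via $t_2^{i+1}>s^i$, a cosmetic difference.

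The one step that does not work as sketched is your case $t^{i+1}=t_1^{i+1}$ of the strict inequality $t^{i+1}<s^{i+1}$: from ``all neighbours of $v_{t_1^{i+1}}$ have smaller index, hence lie in $V_T^i$'' you cannot conclude $v_{t_1^{i+1}}\in V_T^i$. The set $V_T^i$ is $\{v_1\}\cup\bigcup_{\ell\le i}N[v_{s^\ell}]$ by \cref{line:V_Trenew}; it is not closed under ``has all neighbours inside,'' and indeed $v_{t_1^{i+1}}$ routinely has neighbours in $V_T^i$ while lying outside it, so no contradiction arises from that inference. The repair is short and uses only tools you already invoke: since $t_1^{i+1}<t_2^{i+1}$ and $v_{t_2^{i+1}}\in V_T^i\setminus\{v_1\}$, there is $\ell\le i$ with $v_{t_2^{i+1}}\in N[v_{s^\ell}]$; applying \cref{lem:int_t} at loop $\ell$ gives $\Vint{1}{s^\ell}\subseteq V_T^\ell\subseteq V_T^i$, hence $s^\ell<t_1^{i+1}<t_2^{i+1}$, so $v_{s^\ell}v_{t_2^{i+1}}\in E(G)$ and the interval ordering yields $v_{t_1^{i+1}}v_{t_2^{i+1}}\in E(G)$, a neighbour of $v_{t^{i+1}}$ of strictly larger index, whence $s^{i+1}\ge t_2^{i+1}>t^{i+1}$. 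In your case $t^{i+1}=t_2^{i+1}$ the same connectivity-plus-interval-ordering argument you used for the boundary case of the first inequality gives the larger-index neighbour. For comparison, the paper simply asserts $t^{i+1}<s^{i+1}$ ``from \cref{line:select}'' without spelling out strictness, so your instinct to verify it is sound; only the specific justification needs the fix above.
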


\begin{proof}
    \cref{lem:int_t} leads to $\Vint{1}{t_1^{i+1}-1} \subseteq V_T^{i} \subseteq \Vint{1}{t_2^{i+1}}$ and $\Vint{1}{s^i} \subseteq V_T^i$.
    Combined with $v_{t_1^{i+1}} \notin V_T^{i}$, we obtain $s^i \le t_1^{i+1}-1 \le t_2^{i+1}$, and equivalently $s^i < t_1^{i+1} \le t_2^{i+1} + 1$.
    The case $t^{i+1} = t_1^{i+1}$ immediately yields $s^i < t^{i+1}$.
    In the case $t^{i+1} = t_2^{i+1}$, we have $s^i \le t^{i+1}$.
    If $s^i = t^{i+1}$, we also have $V_T^{i} = \Vint{1}{s^i}$.
    Considering \cref{line:V_Trenew}, the vertex $v_{s^i}$ is adjacent to none of the vertices in $V \setminus V_T^{i}$.
    This implies that all vertices in $V_T^{i}$ are adjacent to none of the vertices in $V \setminus V_T^{i}$ due to the interval ordering $I$.
    However, this contradicts that $G$ is connected.
    Thus, we conclude that $s^i < t^{i+1}$.
    The inequality $t^{i+1} < s^{i+1}$ holds from \cref{line:select}.
\end{proof}

\begin{lemma} \label{lem:int_ordering_ts}
    Consider when the $i$-th \textbf{while} loop in \Cref{app:code:greedy} finishes.
    It holds that $t^i \le t_1^i \le s^i$ and $t^i \le t_2^i \le s^i$.
    Consequently, $v_{t_1^i}, v_{t_2^i} \in N[v_{s^i}]$.
\end{lemma}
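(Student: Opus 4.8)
I would first peel off the trivial parts. The inequalities $t^i\le t_1^i$ and $t^i\le t_2^i$ hold by definition, since $t^i=\min\{t_1^i,t_2^i\}$ (\cref{line:t}), so the whole lemma reduces to proving $t_1^i\le s^i$ and $t_2^i\le s^i$. Granting those, the concluding membership statement comes almost for free: $t_1^i\ne t_2^i$ (because $v_{t_1^i}\notin V_T^{i-1}$ while $v_{t_2^i}\in V_T^{i-1}$), so $t_1^i,t_2^i\in[t^i,s^i]$ forces $t^i<s^i$, whence $v_{t^i}v_{s^i}\in E(G)$ by \cref{line:select}; the interval ordering then gives $v_jv_{s^i}\in E(G)$ for every $j$ with $t^i<j<s^i$, so every $v_j$ with $t^i\le j\le s^i$ --- and in particular $v_{t_1^i}$ and $v_{t_2^i}$ --- lies in $N[v_{s^i}]$. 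Since $v_{t_1^i}\ne v_{t_2^i}$, exactly one of $t^i=t_2^i$ and $t^i=t_1^i$ holds, and I would handle the two cases separately.

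If $t^i=t_2^i<t_1^i$, then \cref{lem:int_t} gives $\Vint{1}{t_1^i-1}\subseteq V_T^{i-1}\subseteq\Vint{1}{t_2^i}=\Vint{1}{t^i}$; since $t_1^i\ge t_2^i+1$, this forces $V_T^{i-1}=\Vint{1}{t^i}$ and $t_1^i=t^i+1$. It then suffices to exclude $s^i=t^i$: if $v_{t^i}$ had no neighbour of index exceeding $t^i$, then by the interval ordering no vertex of $\Vint{1}{t^i}$ would have one either, so $\Vint{1}{t^i}=V_T^{i-1}$ would be a union of connected components of $G$; but $V_T^{i-1}\ne V(G)$ since we are inside the \textbf{while} loop (\cref{line:while}), contradicting that $G$ is connected. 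Hence $s^i\ge t^i+1=t_1^i$ and $s^i>t^i=t_2^i$, finishing this case.

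The case $t^i=t_1^i<t_2^i$ is the one I expect to be the real obstacle. Here $t_1^i=t^i\le s^i$ is automatic, but $V_T^{i-1}$ has a ``gap'' at position $t_1^i$ --- it contains $v_{t_1^i-1}$ and $v_{t_2^i}$ but not $v_{t_1^i}$ --- so no edge incident to $v_{t^i}$ is visible directly, and the task is to show $v_{t_1^i}v_{t_2^i}\in E(G)$. My plan is to use the ``history'' of $V_T^{i-1}$ encoded in the algorithm: by the update rule \cref{line:V_Trenew}, $V_T^{i-1}=\{v_1\}\cup\bigcup_{j\in[i-1]}N[v_{s^j}]$, and since $v_{t_2^i}\ne v_1$ (as $t_2^i>t_1^i\ge 2$), there is some $i'\in[i-1]$ with $v_{t_2^i}\in N[v_{s^{i'}}]$. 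The key point is that $s^{i'}<t_1^i$: if $s^{i'}\ge t_1^i$, then \cref{lem:int_t} would give $v_{t_1^i}\in\Vint{1}{s^{i'}}\subseteq V_T^{i'}\subseteq V_T^{i-1}$, contradicting the definition of $t_1^i$. Hence $s^{i'}<t_1^i<t_2^i$, so $v_{t_2^i}\ne v_{s^{i'}}$ and therefore $v_{s^{i'}}v_{t_2^i}\in E(G)$; applying the interval ordering to the triple $s^{i'}<t_1^i<t_2^i$ yields $v_{t_1^i}v_{t_2^i}\in E(G)$, so $s^i\ge t_2^i$ as required. It is exactly this last argument that makes \cref{lem:int_t} suffice, so that no separate invariant asserting the connectedness of $G[V_T^{i-1}]$ is needed.
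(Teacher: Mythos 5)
Your proof is correct; every step checks out, including the two delicate points (excluding $s^i=t^i$ in the case $t^i=t_2^i$, and producing the edge $v_{t_1^i}v_{t_2^i}$ in the case $t^i=t_1^i$). It uses the same essential ingredients as the paper's proof---the sandwich $\Vint{1}{t_1^i-1}\subseteq V_T^{i-1}\subseteq\Vint{1}{t_2^i}$ from \cref{lem:int_t}, the interval ordering, the maximality of $s^i$ in \cref{line:select}, and the connectivity of $G$---but it is organized differently. The paper proves $t_1^i\le s^i$ and $t_2^i\le s^i$ by two separate contradictions, each leaning on \cref{lem:ascending} (for $t^i<s^i$ in the first, and for the chain $s^1<\dots<s^{i-1}<t^i$ in the second). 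You instead split on which of $t_1^i,t_2^i$ attains the minimum $t^i$ (correctly observing that exactly one does), so that in each case one inequality is free; and you bypass \cref{lem:ascending} entirely, re-deriving $s^i>t^i$ from connectivity in the first case (which is in fact how \cref{lem:ascending} itself proves it) and obtaining $s^{i'}<t_1^i$ directly from \cref{lem:int_t} in the second. The payoff of your route is a more self-contained lemma with one fewer dependency; the cost is that you partially re-prove material the paper has already packaged into \cref{lem:ascending}. Either way the argument is sound.
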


\begin{proof}
    It follows from \cref{line:t} that $t^i \le t_1^i$ and $t^i \le t_2^i$.
    For the sake of contradiction, assume that $s^i < t_1^i$.
    \cref{lem:int_t} leads to $\Vint{1}{s^i} \subseteq \Vint{1}{t_1^i-1} \subseteq V_T^{i-1} \subseteq \Vint{1}{t_2^i}$, which implies that $s^i \le t_2^i$. 
    Combined with $s^i < t_1^i$, we have $s^i \le \min \{t_1^i, t_2^i\} = t^i$.
    This contradicts $t^{i} <  s^{i}$ from \cref{lem:ascending}.
    

    Next, for the sake of contradiction, assume that $s^i < t_2^i$.
    Recall that we have $s^1 < \dots < s^{i-1} < t^i$ from \cref{lem:ascending}.
    Then, it holds that $s^1 < \dots < s^{i-1} < t^i = t_1^i \le s^i < t_2^i$.
    Since $v_{t_2^i} \in V_T^{i-1}$, some vertex in $\{s^1, \dots, s^{i-1} \}$ is adjacent to $v_{t_2^i}$.
    However, this yields $v_{t^i} v_{t_2^i} \in E(G)$, which contradicts \cref{line:select}, the maximality of $s^i$.

    In conclusion, we have $t^i \le t_1^i \le s^i$ and $t^i \le t_2^i \le s^i$.
    It follows from the interval ordering $I$ and $v_{t^i} v_{s^i} \in E(G)$ that $v_{t_1^i}, v_{t_2^i} \in N[v_{s^i}]$.
\end{proof}

We are now prepared to prove the correctness of \Cref{code:greedy}.
We first state that \Cref{code:greedy} outputs a feasible solution.

\begin{lemma}\label{lem:invariant}
    The output $S$ of \Cref{code:greedy} is a vertex cover of some spanning tree of an input interval graph $G$.
\end{lemma}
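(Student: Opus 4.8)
The key invariant I would prove is that after the $i$-th \textbf{while} loop, $S^i$ is a vertex cover of some spanning tree of the induced subgraph $G[V_T^i]$; equivalently, by \cref{lem:cut} applied to the connected graph $G[V_T^i]$, every pair $(C, S^i)$ with $\emptyset \neq C \subsetneq V_T^i$ is compatible. Since the algorithm terminates with $V_T = V(G)$, taking $i$ to be the final loop index then gives that $S$ is a vertex cover of some spanning tree of $G[V(G)] = G$, which is exactly the statement. I would proceed by induction on $i$.

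\emph{Base case.} For $i = 1$, we have $V_T^1 = N[v_{s^1}]$ and every vertex of $V_T^1 \setminus \{v_{s^1}\}$ is adjacent to $v_{s^1} \in S^1$; hence $G[V_T^1]$ has a spanning star covered by $S^1 = \{v_{s^1}\}$.

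\emph{Inductive step.} Assume $S^{i-1}$ covers a spanning tree of $G[V_T^{i-1}]$. Fix an arbitrary $C$ with $\emptyset \neq C \subsetneq V_T^{i}$; I must exhibit a compatible bridge. I would split into cases according to how $C$ meets $V_T^{i-1}$ and $V_T^{i} \setminus V_T^{i-1} \subseteq N[v_{s^i}]$. (i) If both $C \cap V_T^{i-1}$ and $V_T^{i-1} \setminus C$ are non-empty, the induction hypothesis applied to the cut $(C \cap V_T^{i-1}, V_T^{i-1} \setminus C)$ of $G[V_T^{i-1}]$ yields a bridge, which remains a bridge for $C$ in $G[V_T^i]$ since $S^{i-1} \subseteq S^i$ and $V_T^{i-1} \subseteq V_T^i$. (ii) If $C$ and $V_T^i \setminus C$ each contain a vertex of $V_T^i \setminus V_T^{i-1} \subseteq N[v_{s^i}]$, then since $v_{s^i} \in S^i$, an edge from $v_{s^i}$ to such a vertex on the opposite side (or two such vertices via $v_{s^i}$) is a bridge. (iii) The remaining case, by symmetry, is that all of $V_T^{i-1}$ lies on one side of $C$ and all of $V_T^i \setminus V_T^{i-1}$ lies on the other; here I would produce a bridge incident to $v_{s^i}$ crossing the cut $(V_T^{i-1}, V_T^i \setminus V_T^{i-1})$. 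This is where \cref{lem:int_ordering_ts} does the work: $v_{t_1^i}, v_{t_2^i} \in N[v_{s^i}]$, and depending on whether $t^i = t_1^i$ or $t^i = t_2^i$ one of these two vertices lies in $V_T^{i-1}$ while $v_{s^i}$ (for $t^i = t_2^i$, with $s^i > t_2^i$ possible) together with the interval ordering pins down a crossing edge; when $t^i = t_2^i \ge s^i$ is impossible by \cref{lem:int_ordering_ts}, so $v_{s^i} \notin V_T^{i-1}$ is the fresh vertex and $v_{t^i} v_{s^i} \in E$ crosses the cut with $v_{s^i} \in S^i$.

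\emph{Main obstacle.} I expect the delicate point to be case (iii): verifying that the edge I name genuinely crosses the cut $(V_T^{i-1}, V_T^i \setminus V_T^{i-1})$ rather than lying inside one part, and that its $S^i$-endpoint condition holds. This hinges on the precise placement of $t_1^i$, $t_2^i$, $t^i$, $s^i$ relative to $V_T^{i-1}$, for which \cref{lem:int_t,lem:ascending,lem:int_ordering_ts} were established; the bookkeeping is to apply them in the right combination in each subcase. Everything else is a routine reduction to the inductive hypothesis or to the star structure around $v_{s^i}$.
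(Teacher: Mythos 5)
Your plan is correct and follows essentially the same route as the paper: an induction on the loop index showing that every cut separating $V_T^i$ admits a compatible bridge, with the two "split" cases handled by the induction hypothesis and the star around $v_{s^i}$, and the remaining case resolved by the edge $v_{t_2^i}v_{s^i}$ guaranteed by \cref{lem:int_ordering_ts} (note that $v_{t_2^i}$ always lies in $V_T^{i-1}$ and $v_{t_1^i}$ never does, which tidies up the bookkeeping in your case (iii)). The only cosmetic difference is that you phrase the invariant via cuts of the induced subgraph $G[V_T^i]$ while the paper uses cuts of $G$ that split $V_T^i$; these are interchangeable here.
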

\begin{proof}
    Suppose that \Cref{code:greedy} terminates after the $j$-th \textbf{while} loop.
    We claim that a pair $(C, S^i)$ is compatible for each $i \in [j]$ and each subset $C \subseteq V(G)$ such that $V_T^i \cap C \neq \emptyset$ and $V_T^i \setminus C \neq \emptyset$ (intuitively, $C$ partially intersects with $V_T^i$).
    Since $S = S^j$ and $V(G) = V_T^j$, combined with \cref{lem:cut}, this claim leads to \cref{lem:invariant}.

    We prove the above claim by induction on $i$.
    In the base case $i = 1$, we have $S^1 = \{v_{s^1}\}$ and $V_T^1 = N[v_{s^1}]$.
    Thus, it is obvious that a pair $(C, S^1)$ is compatible for each subset $C \subseteq V(G)$ such that $N[v_{s^1}] \cap C \neq \emptyset$ and $N[v_{s^1}] \setminus C \neq \emptyset$.

    Consider the case $i > 1$ and a subset $C \subseteq V(G)$ such that $V_T^{i} \cap C \neq \emptyset$ and $V_T^{i} \setminus C \neq \emptyset$.
    Recall that $S^i = S^{i-1} \cup \{v_{s^i}\}$ and $V_T^i = V_T^{i-1} \cup N[v_{s^i}]$ from \cref{line:set_S,line:V_Trenew}.
    If $N[v_{s^i}] \cap C \neq \emptyset$ and $N[v_{s^i}] \setminus C \neq \emptyset$, then it is straightforward that a pair $(C, S^i)$ is compatible.
    Furthermore, if $V_T^{i-1} \cap C \neq \emptyset$ and $V_T^{i-1} \setminus C \neq \emptyset$, then a pair $(C, S^{i-1})$ is compatible by the induction hypothesis, and $(C, S^{i})$ is compatible as well.
    Thus, it remains to consider the following two cases: (i) $V_T^{i-1} \subseteq C$ and $N[v_{s^i}] \cap C = \emptyset$; and (ii) $ N[v_{s^i}] \subseteq C$ and $V_T^{i-1} \cap C = \emptyset$.
    In fact, cases (i) and (ii) are symmetrical, and hence we only discuss case (i).
    From the definition of $V_T^{i-1}$, we have $v_{t_2^i} \in V_T^{i-1} \subseteq C$.
    It is clear that $v_{s^i} \in N[v_{s^i}] \subseteq V(G)\setminus C$.
    Moreover, $v_{t_2^i} v_{s^i} \in E(G)$ follows from \cref{lem:int_ordering_ts}.
    Since $v_{s^i} \in S^i$, we conclude that a pair $(C, S^i)$ is compatible.
\end{proof}

Finally, we show the optimality of $S$. 

\begin{lemma}\label{app:lem:interval}
    \cref{code:greedy} outputs a minimum vertex cover among spanning trees in $G$.
\end{lemma}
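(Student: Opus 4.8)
The plan is to prove optimality by a direct exchange (or "staying ahead") argument against an arbitrary feasible solution. Let $S^\ast$ be the output of \cref{code:greedy}, say produced over $j$ iterations, and let $S'$ be any vertex cover of a spanning tree of $G$; by \cref{lem:cut}, $S'$ is compatible with every non-trivial cut. We want $|S^\ast| \le |S'|$. I would argue by contradiction: suppose $|S'| < |S^\ast|$, choose $S'$ to minimise $|S'|$ and, among minimisers, to have the lexicographically smallest sorted index sequence (so that $S'$ agrees with $S^\ast$ on as long a prefix as possible). Let $i = \min\{\,i' : v_{i'} \in S^\ast \triangle S'\,\}$, i.e.\ the first index where the two sets differ.

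The heart of the argument is to understand what happens at this first point of disagreement. Because $S^\ast$ and $S'$ coincide on $\Vint{1}{i-1}$, the prefix behaviour is forced, so there is an iteration index $p$ with $s^p < i < s^{p+1}$ (using \cref{lem:ascending} to know the $s$-values are strictly increasing and that $v_{s^{p}} \in S^\ast$ for each $p$). Two cases: either $v_i \in S'\setminus S^\ast$, or $v_i \in S^\ast \setminus S'$.

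In the case $v_i \in S' \setminus S^\ast$, the idea is to show we can swap $v_i$ out of $S'$ and bring in $v_{s^{p+1}}$ (a vertex the algorithm chose), obtaining $S^{\ast\ast} = (S'\setminus\{v_i\})\cup\{v_{s^{p+1}}\}$ that is still feasible (still compatible with every cut) and has $|S^{\ast\ast}| \le |S'|$ but a longer common prefix with $S^\ast$ — contradicting the lexicographic-minimality choice. Feasibility of $S^{\ast\ast}$ is the technical core: for any cut $(C, V\setminus C)$, a compatible pair for $S'$ either avoids $v_i$ (then it is still compatible for $S^{\ast\ast}$ since $S'\setminus\{v_i\}\subseteq S^{\ast\ast}$), or it uses $v_i$, and then one must exhibit a replacement edge through $v_{s^{p+1}}$. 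Here \cref{lem:int_t}, \cref{lem:ascending}, and especially \cref{lem:int_ordering_ts} do the work: $v_{s^{p+1}}$ is adjacent (via the interval ordering and $t^{p+1}v_{s^{p+1}}\in E$) to everything with index between $t^{p+1}$ and $s^{p+1}$, which includes $v_i$ and $v_{t_2^{p+1}}$; a careful case split on which side of the cut $v_{s^{p+1}}$, $V_T^{p}$, and $v_i$ lie on — combined with the invariant from \cref{lem:invariant} that $(C\cap V_T^{p}, S^{p})$ is compatible — produces the needed bridge in every case. This case analysis (the "1-(a)/(b)/(c) vs.\ 2-(a)/(b)" split visible in the deleted draft) is what I expect to be the main obstacle: it is long and geometric, and one has to be disciplined about the index inequalities $s^p < t^{p+1} = t_1^{p+1}$ or $t^{p+1}=t_2^{p+1}$ and about where $v_i$ sits relative to $V_T^{p}$.

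Finally, once the exchange is set up, I finish as follows. Iterating the swap removes all "case (i)" disagreements, strictly advancing the first-difference index each time, so after finitely many steps we reach a feasible $S'$ of size $< |S^\ast|$ whose first disagreement with $S^\ast$ is of type $v_i \in S^\ast\setminus S'$ (such a disagreement must exist, since $|S'| < |S^\ast|$ forbids $S' = S^\ast$). For this case I would show that $S'$ cannot be feasible at all, deriving a contradiction: inspect whether $t^p = t_1^p$ or $t^p = t_2^p$. If $t^p = t_1^p$, I claim the singleton cut $(\{v_{t_1^p}\}, V\setminus\{v_{t_1^p}\})$ is not compatible with $S'$ — because $v_{t_1^p}\notin S'$ (by minimality of $i$ and $t_1^p \le i$... more precisely $t_1^p < i$ so $v_{t_1^p}\notin S^\ast$ hence $\notin S'$), $v_{t_1^p}$ has no neighbour with smaller index in $V_T^{p}$-land, and $s^p$ is its largest neighbour, so every neighbour of $v_{t_1^p}$ has index in $[t_1^p+1, s^p]$, none of which lies in $S'$. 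If $t^p = t_2^p$, I look at the cut $(V_T^{p}, V\setminus V_T^{p})$ and show no bridge respecting $S'$ exists: any candidate edge $xy$ with $x\in V_T^{p}$, $y\notin V_T^{p}$ can be pushed (interval ordering) to $x = v_{t_2^p}$, and then $N(v_{t_2^p})\cap \Vint{t_2^p+1}{n}\cap S' \subseteq \Vint{t_2^p+1}{s^p}\cap S' = \emptyset$ by \cref{lem:int_ordering_ts} and the fact that $S'$ has no vertex in $\Vint{s^{p-1}+1}{s^p}$ (again from minimality of $i$), while $x\in S', y\notin V_T^{p} = \bigcup_{v\in S^{p}}N[v]$ is impossible. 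Either way $S'$ fails \cref{lem:cut}, contradicting feasibility. This completes the proof that $|S^\ast| \le |S'|$, hence $S^\ast$ is optimal.
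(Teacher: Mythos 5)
Your overall strategy is the paper's: locate the first index $i$ where the output $S^\ast$ and a smaller feasible $S'$ disagree, do an exchange when $v_i\in S'\setminus S^\ast$, and exhibit a cut violating \cref{lem:cut} when $v_i\in S^\ast\setminus S'$, branching on $t=t_1$ versus $t=t_2$. The exchange half is sound in outline (the paper proves exactly your Claim), although the replacement edge is not obtained from $v_{s^{p+1}}$ being adjacent to $v_i$ --- which need not hold --- but from pushing the witness edge $v_iy$ to $v_{s^{p+1}}y$ via the interval ordering, using $i<s^{p+1}<i_y$ where $i_y>s^{p+1}$ because $y\notin V_T^{p+1}\supseteq \Vint{1}{s^{p+1}}$. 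The genuine gap is in the second half. First, $s^p<i<s^{p+1}$ cannot hold there: in that case $v_i$ is itself a selected vertex, so $i=s^p$ for the iteration $p$ that selects it, and all quantities must refer to that iteration. More importantly, in the subcase $t^p=t_2^p$ your violated cut is the wrong one: you take $(V_T^{p},V\setminus V_T^{p})$ and dismiss a crossing edge $xy$ with $x\in S'$ on the grounds that $V_T^{p}=\bigcup_{v\in S^{p}}N[v]$. That inference needs $S'\cap V_T^{p}\subseteq S^{p}$, which is false: $V_T^{p}\supseteq N[v_{s^{p}}]$ contains vertices of index larger than $i=s^{p}$, about which the prefix agreement between $S^\ast$ and $S'$ says nothing; such a vertex may lie in $S'$ and have a neighbour outside $V_T^{p}$, in which case your cut is compatible with $S'$ and yields no contradiction. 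The cut must be $(V_T^{p-1},V\setminus V_T^{p-1})$, the region covered \emph{before} the iteration selecting $v_i$: there $V_T^{p-1}\subseteq \Vint{1}{t_2^{p}}\subseteq \Vint{1}{s^{p}}$ by \cref{lem:int_t,lem:int_ordering_ts}, so prefix agreement together with $v_{s^{p}}\notin S'$ gives $S'\cap V_T^{p-1}\subseteq S^{p-1}$, killing the $x\in S'$ branch; and in the $x\notin S'$, $y\in S'$ branch the push to $v_{t_2^{p}}y$ is legitimate precisely because $i_x\le t_2^{p}<t_1^{p}\le i_y$ (again properties of $V_T^{p-1}$, not $V_T^{p}$), after which $s^{p-1}<i_y<s^{p}$ contradicts prefix agreement.

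A smaller repair is needed in the $t^{p}=t_1^{p}$ branch: the assertion that every neighbour of $v_{t_1^{p}}$ has index in $[t_1^{p}+1,s^{p}]$ is false --- smaller-index neighbours exist. What you must show is that none of them lies in $S'$: they have index less than $s^{p}=i$, where $S'$ coincides with $S^\ast$, so any such neighbour in $S'$ would belong to $S^{p-1}$ and would have forced $v_{t_1^{p}}\in V_T^{p-1}$, contradicting the definition of $t_1^{p}$. With these corrections (and the bookkeeping $\Vint{1}{s^{p}-1}\cap S^\ast=\Vint{1}{s^{p}-1}\cap S'$ made explicit), your argument coincides with the paper's proof.
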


\begin{proof}
    For the sake of contradiction, assume that there exists a vertex subset $S'$ of $G$ such that $|S'| < |S|$ and $G$ has a spanning tree $T$ covered by $S'$.
    
    Consider the symmetric difference $S \triangle  S' = (S \cup S')\setminus (S \cap S')$.
    Note that $S \triangle  S' \neq \emptyset$ follows from $|S'| < |S|$.
    Recall that the vertices of $G$ are in accordance with an interval ordering $I = (v_1,v_2,\ldots, v_n)$.
    Let $i$ be the smallest index of vertices in $S \triangle S'$, that is, $i =\min\{i' \mid v_{i'} \in S \triangle S'\}$.
    Then, the vertex $v_{i}$ satisfies one of the following two conditions:
    \begin{inparaenum}[(i)]
        \item $v_{i} \in S'$ and $v_{i} \notin S$; or
        \item $v_{i} \notin S'$ and $v_{i} \in S$.
    \end{inparaenum}

    Consider case (i).
    Let $s^j \in [n]$ be the minimum integer such that $v_{s^j} \in S \setminus S'$, assuming that $s^j$ is an integer $s$ selected in \cref{line:select} of the $j$-th \textbf{while} loop.
    Now, we give the following claim, whose proof is deferred to \cref{sec:clm:alttree}.

    \begin{clm}\label{clm:alttree}
        The graph $G$ has a spanning tree $T^{\ast}$ covered by $S^{\ast}=(S'\setminus \{v_{i}\})\cup \{v_{s^{j}}\}$.
    \end{clm}
    Observe that $|S^\ast| \le |S'|$ and $|S \triangle S^{\ast}| = |S\triangle S'|-1$.
    We iteratively apply the procedure above to $S'$ as long as the vertex $v_i$ satisfies case (i).
    If only case (i) appears in the procedure, we eventually obtain a vertex subset $S^{\prime\prime}$ such that $|S^{\prime\prime}| \le |S'| < |S|$ and $|S \triangle S^{\prime\prime}| = 0$.
    However, $|S \triangle S^{\prime\prime}| = 0$ means $S = S^{\prime\prime}$, a contradiction.
    
    As discussed above, we may assume that $v_i$ satisfies case (ii).
    Suppose that $i = s^j$, that is, $i$ is an integer $s$ selected in \cref{line:select} of the $j$-th \textbf{while} loop.
    In this case, $\Vint{1}{s^j-1} \cap S = \Vint{1}{s^j-1} \cap S'$ holds due to the minimality of $i = s^j$.
    We now show that $S'$ cannot be a vertex cover of any spanning tree of $G$.
    To this end, it suffices to claim that there is a vertex subset $C$ such that a pair $(C, S')$ is not compatible.
    Consider two subcases: case (ii)-(1) $t^j = t_1^j$; and case (ii)-(2) $t^j = t_2^j$.
    
    In case (ii)-(1) $t^j = t_1^j$, let $C = \{ v_{t_1^j} \}$.
    From \cref{lem:ascending}, we have $s^1 < \dots < s^{j-1} < t^j < s^j$.
    It follows from $v_{t_1^j}\notin V_T^j$ that $N[v_{t_1^j}] \cap \Vint{1}{s^j-1} \cap S = \emptyset$; otherwise, $V_T^j$ would contain $v_{t_1^j}$ from \cref{line:V_Trenew}.
    Recall that $\Vint{1}{s^j-1} \cap S = \Vint{1}{s^j-1} \cap S'$.
    Thus, we conclude that no vertex in $\Vint{1}{s^{j}-1} \cap S'$ is adjacent to $v_{t_1^j}$.
    Furthermore, since $s^j$ is the largest index among the neighbors of $v_{t_1^j}$, there is no edge between $v_{t_1^j}$ and $\Vint{s^j}{n} \cap S'$.
    Therefore, $(\{v_{t_1^j}\}, S')$ is not compatible.

    We proceed to case (ii)-(2) $t^j=t_2^j$.
    We show that $(V_T^{j-1}, S')$ is not compatible.
    For the sake of contradiction, assume that there is an edge $xy\in E$ with $x\in V_T^{j-1}$ and $y\in V\setminus V_T^{j-1}$ satisfying that $x \in S'$ or $y \in S'$.

    Suppose that $x \in S'$. 
    Recall that $S \cap \Vint{1}{s^j-1} = S' \cap \Vint{1}{s^j-1}$ holds.
    Since $S \cap \Vint{1}{s^j-1} = S^{j-1}$ and $v_{s^j} \notin S'$, this implies that $S^{j-1} = S' \cap \Vint{1}{s^j}$.
    Here, it can be seen that $V_T^{j-1} \subseteq \Vint{1}{t_2^j} \subseteq \Vint{1}{s^j}$ from \cref{lem:int_t,lem:int_ordering_ts}.
    Thus, we have $S' \cap V_T^{j-1} \subseteq S' \cap \Vint{1}{s^j} = S^{j-1}$.
    Since $x \in S' \cap V_T^{j-1}$, we conclude that $x \in S^{j-1}$.
    However, this yields that $y \in V_T^{j-1}$ from \ref{line:V_Trenew}, a contradiction.
    
    Suppose next that $x\notin S'$ and $y\in S'$.
    Let $i_x$ and $i_y$ be the indices of $x$ and $y$, respectively.
    From \cref{lem:int_t} and $x \in V_T^{j-1}$, we have $i_x \le t_2^{j}$.
    Moreover, since $y \notin V_T^{j-1}$, we have $t_1^j \le i_y$.
    Thus, $i_x \le t_2^j < t_1^j \le i_y$ follows from $t^j = t_2^j < t_1^j$, and hence $xy \in E(G)$ implies $v_{t_2^j}y \in E(G)$ due to the interval ordering $I$.
    The maximality of $s^j$ implies that $i_y \le s^j$.
    In addition, we have $s^{j-1} < t^j$ from \cref{lem:ascending}.
    Consequently, the vertex $y$ satisfies $y \in S'$ and $s^{j-1} < t^j < i_y < s^j$.
    However, $s^{j-1} < i_y < s^j$ leads to $y \notin S$, and therefore $y \notin S'$ is derived from $\Vint{1}{s^j-1} \cap S = \Vint{1}{s^j-1} \cap S'$, a contradiction. 
\end{proof}

\if0
\begin{proof}
    For the sake of contradiction, assume that there exists a vertex set $S'$ of $G$ such that the boundary graph for $S'$ is connected and $|S| > |S'|$.
    Consider the symmetric difference $S \triangle  S' = (S \cup S')\setminus (S \cap S')$.
    Note that $S \triangle  S' \neq \emptyset$.
    Let $i$ be the smallest index of vertices in $S \triangle S'$, that is, $i =\min\{i' \mid v_{i'} \in S \triangle S'\}$.
    Then, the vertex $v_{i}$ satisfies one of the following two conditions:
    \begin{inparaenum}[(i)]
        \item $v_{i} \in S'$ and $v_{i} \notin S$, or
        \item $v_{i} \notin S'$ and $v_{i} \in S$.
    \end{inparaenum}
    In case (i), since $S \triangle S'$ has no vertex with an index less than $i$, the boundary subgraph of $S' \cap V_{[1,i]}$ is isomorphic to the one for $S \cap V_{[1, i]}$.
    Therefore, there is an index $s^j$ such that $v_{s^j}$ has the maximum index not exceeding $i$, among the vertices in $ (C'\cap C_T)$.
    Here, consider the $(j+1)$-th \textbf{while} loop in \Cref{code:greedy}.
    We can assume that $s^j< i <s^{j+1}$ by the definition of $i$.
    Now, we claim the following:
    \begin{clm}[$\ast$]\label{clm:alttree}
        $S^{\ast}=(S'\setminus \{v_{i}\})\cup \{v_{s^{j+1}}\}$ is a vertex cover of some spanning tree $T^{\ast}$ of $G$.
    \end{clm}
    Thus, we can obtain another spanning tree $T^{\ast}$ covered by $S^{\ast}$.

    Here, $S^{\ast}$ holds that minimum index of vertices in $S\triangle S^{\ast}$ is larger than the one of $S\triangle S^\prime$.
    Consider that we iteratively apply the procedure above to $S'$ as long as $v_i$ holds the case (i).
    If only case (i) appears in the procedure, the resulting vertex set $S$ holds 
    $S\triangle S'=\emptyset$. However, since $S\triangle S'$ cannot be empty since $|S'|<|S|$, at least one vertex must satisfy the case (ii).

    Consider the case (ii).
    We now show that $S'$ cannot be a vertex cover of any spanning tree of $G$.
    We claim that there is a cut that is not proper for $S'$.
    We consider two cases: the first case is $t^j = t_1^j$, and the second one is $t^j = t_2^j$.
    
    For the first case, it suffices to show that there is no bridge respects $S'$ for a cut  $(\{v_{t_1^j}\}V\setminus \{v_{t_1^j}\})$
    Note that $v_{t_1^j}\notin S'$ holds from the definition of $i$.
    Since $v_{t_1^j}\notin V_T^j$, there is no edge between $v_{t_1^j}$ and $S'\cap V_{[1,t_1^j-1]}$.
    Moreover, since $s^j$ is the largest index among the neighbors of $v_{t_1^j}$, there is no edge between $v_{t_1^j}$ and $S'\cap V_{[t_1^j+1,n]}$.
    This implies there is no edge between $v_{t_1^j}$ and $u\in V\setminus \{v_{t_1^j}\}$.

    We move to the second case $t^j=t_2^j$.
    It suffices to show that there is no bridge respects $(V_T^j,V\setminus V_T^j)$.
    Suppose that there is a bbridge $xy\in E$ with $x\in V_T^j$ and $y\in V\setminus V_T^j$ for a contradiction.

    We first assume that $x\in S'$. 
    Now we have $\bigcup_{v\in S^j}N(v)=V_T^j$ from the definition of $V^j$ in line 10 of \Cref{code:greedy}.
    Thus there is no such edge $xy$ since $y\notin V^j$, leads a contradiction.
    Next we assume that $x\notin S' \wedge y\in S'$.
    Here, we can assume that $x=v_{t_2^j}$: if there is an edge $xy$ such that $x\ne v_{t_2^j}$, then there is an edge $v_{t_2^j}y$ since $v_{t_2^j}$ has an index larger than the one of $x$, and smaller than the one of $y$ by the definition of the interval ordering.
    Now we show that $N(v_{t_2^j})\cap V_{t_2^j+1, n}\cap S'=\emptyset$, since $V_T^j=V_{1,t_2^j}$ from the condition that $t_2^j$ satisfies.
    Here, we have $ S'\cap V_{s^{j-1},s^{j}}=\emptyset$ from the definition of $i$.
    Furthermore, vertex $v_{t_2^j}$ satisfies that $N(v_{t_2^j})\subseteq V_{1,s^j}$ from the definition of $s^j$.
    Thus $(N(v_{t_2^j})\cap V_{t_2^j+1, n}\cap S')\subseteq (V_{1,s^j}\cap V_{t_2^j+1, n}\cap S')=(V_{t_2^j+1,s^j}\cap S')=\emptyset$ holds since $v_{t_2^j}$ does not have the neighbor in $V_{s^j+1,n}$ by the definition of $s^j$.
    This induces the contradiction to the assumption that there is an edge $v_{t_2^j}y$ such that $y\in S'$ and $y\in V\setminus V_T^j$.
    This completes the proof.
\end{proof}
\fi

\if0
Now we analyze the running time.
We can obtain a sorted interval representation in $O(n+m)$ time, since every right endpoint has a value between $1$ and $2n$.
Regarding the \textbf{while} loop, finding $t_1$ and $t_2$ can be done in $O(1)$ time per each loop by managing the only maximum or minimum value when renewing $V(T)$ in line \ref{line:V_Trenew}. 
Furthermore, each edge is referenced at most a constant times in whole \textbf{while} loops; thus, the total running time of lines \ref{line:select} and \ref{line:V_Trenew} in at most $n$ while loops is $O(m)$.
Thus, we can conclude the total running time of \Cref{code:greedy} is $O(n+m)$.
\fi


\subsubsection{Proof of \Cref{clm:alttree}} \label{sec:clm:alttree}
\begin{proof}
We show that a pair $(C, S^\ast)$ is compatible for any non-empty subset $C \subset V(G)$.
We first consider the case $C \cap V_T^j \neq \emptyset$ and $C \setminus V_T^j \neq \emptyset$.
Then a pair $(C, S)$ is compatible for any non-empty subset $C \subset V(G)$.
This follows from the fact that each vertex in $V_T^j$ is adjacent to some vertex in $S^j \subseteq V_T^j$. 
We also note that $\Vint{1}{s^j} \cap S = \Vint{1}{s^j} \cap S^\ast$ from the definitions of $i$, $s^j$, and $S^\ast$.
This implies that $S^j \subseteq S^\ast$, and consequently, a pair $(C, S^\ast)$ is also compatible.

Suppose otherwise, that is, $V_T^j \subseteq C$ or $V_T^j \subseteq V\setminus C$.
By symmetry, it suffices to consider the case $V_T^j \subseteq C$.
In this case, $v_{s^j} \in V_T^j \subseteq C$ follows from $\Vint{1}{s^j} \subseteq V_T^j$ in \cref{lem:int_t}.
Moreover, since $i < s^j$, we have $v_i \in V_T^j \subseteq C$.
We may also assume that $v_{t_1^j}, v_{t_2^j} \in C$; if one of $v_{t_1^j}$ and $v_{t_2^j}$ belongs to $V \setminus C$, then a pair $(C, S^\ast)$ is compatible due to $v_{s^j} v_{t_1^j} \in E(G)$ or $v_{s^j} v_{t_2^j} \in E(G)$ from \cref{lem:int_ordering_ts}.

Here, consider a pair $(C, S')$.
From the assumption of $S'$, the pair $(C, S')$ is compatible.
More precisely, there exist two vertices $x$ and $y$ such that
$\mathrm{(1)}$ $xy\in E(G)$; $\mathrm{(2)}$ $x\in C$; $\mathrm{(3)}$ $y\in V\setminus C$; and $\mathrm{(4)}$ $x\in S'$ or $y\in S'$. 
Recall that $v_i \in  C$.
If $x \neq v_i$, then $(C, S^\ast)$ is also compatible because $x\in S'$ implies $x \in S^\ast$.
Suppose that $x = v_i$.
Here, note that $y \in V\setminus C$ and hence $y \notin V_T^j$.
Combined with $\Vint{1}{s^j} \subseteq V_T^j$ in \cref{lem:int_t}, 
we have $s^j < i_y$, where $i_y$ is the index of $y$ according to the interval ordering $I$.
Since $i < s^j$, we have $i < s^j < i_y$.
From the fact that $xy \in E(G)$ and $x = v_i$, we also have $v_{s^j} y \in E(G)$, where $v_{s^j} \in S^\ast$.
Therefore, a pair $(C,S^\ast)$ is compatible.
\end{proof}

\end{document}